\documentclass[11pt]{article}

\usepackage{fullpage}
\headheight \baselineskip
\headsep 1.5\baselineskip
\topmargin -0.375in
\textheight 9in
\textwidth 6.5in
\columnsep .25in
\setlength{\textfloatsep}{2mm}
\setlength{\dbltextfloatsep}{2mm}

\usepackage{tonsmod}
\pagestyle{plain}

\usepackage{graphicx}
\usepackage{caption}
\usepackage{subcaption}
\usepackage{hyperref}

\usepackage{verbatim}
\usepackage{amssymb}
\usepackage{amsmath}
\usepackage[noend]{algpseudocode}
\usepackage{algorithm}
\usepackage{amsthm}
\usepackage{enumerate}

\usepackage{authblk}
\usepackage{xspace}


\newcommand{\fullv}[1]{#1}
\newcommand{\shortv}[1]{}

\newcommand{\commentout}[1]{}

\newcommand{\agents}{\mathcal{N}}

\newcommand{\hist}{\mathcal{H}}

\newcommand{\act}{\mathcal{A}}
\newcommand{\I}{\mathcal{I}}

\newcommand{\vsigma}{\vec{\sigma}}
\newcommand{\va}{\vec{a}}
\newcommand{\eu}[3]{\mathbb{E}^{#2}(u_{#1} \mid #3)}	
\newcommand{\pr}{Pr}						
\newcommand{\pra}[3]{\pr^{#2}(#1 \mid #3)}			
\newcommand{\good}{\textit{Good}}			
\newcommand{\bad}{\textit{Bad}}				
\newcommand{\pend}[2]{\mbox{AP}_{#1}^{#2}}

\newcommand{\acc}[3]{\mbox{RP}_{#1#2}^{#3}}
\newcommand{\dgr}{\mbox{deg}}
\newcommand{\oape}[1]{$#1$-OAPE}
\newcommand{\amu}[3]{\mu(#1\mid #2,#3)}

\newcommand{\G}{{\cal G}}

\newcommand{\po}[3]{\Phi^{#1}_{#2}(#3)}
\newcommand{\gen}{\mbox{\tiny gen}}
\newcommand{\val}{\mbox{\tiny val}}

\algblockdefx[Event]{UponEvent}{EndEvent}
   [1]{\textbf{Upon} #1}
   [0]{\textbf{End}}
\algblockdefx[EventP]{UponEventP}{EndEventP}
   [2]{\textbf{Upon} #1 $|$ #2}
   [0]{\textbf{End}}
\algblockdefx[Predicate]{Predicate}{EndPredicate}
   [2]{\textbf{Predicate} #1(#2)}
   [0]{\textbf{EndPredicate}}
\algblockdefx[After]{After}{EndAfter}
   [1]{\textbf{After} #1}
   [0]{\textbf{End}}

\usepackage{thmtools, thm-restate}

\declaretheorem{proposition}
\declaretheorem{lemma}
\newtheorem*{lemma*}{Lemma}

\begin{document}

\title{Accountability in Dynamic Networks}

\author{Xavier Vila\c{c}a\thanks{E-mail:
    {xvilaca@gsd.inesc-id.pt}. Address: INESC-ID, R. Alves Redol, 9,
    Lisbon, Portugal. Tel.: +351 213 100 359.}\ }
\author{Lu\'{i}s Rodrigues}

\affil{INESC-ID, Instituto Superior T\'{e}cnico, Universidade de Lisboa}

\date{} 
\maketitle

\thispagestyle{empty}

A variety of distributed protocols require pairs of neighbouring nodes
of a network to repeatedly interact in pairwise exchanges of messages
of mutual interest. Well known examples include
file-sharing systems\,\cite{Cohen:03} and gossip dissemination
protocols\,\cite{Li:06,Li:08}, among many other examples.
These protocols can operate in very diverse
settings such as wireless ad-hoc or peer-to-peer overlay networks.
These settings pose three main challenges. First, networks are
inherently dynamic, whether due to uncontrolled mobility or
maintenance of the overlay.  Second, since communication may be
costly, nodes may act rationally by not sending messages, while still
receiving messages from their neighbours.  Third, nodes may have
incomplete information about the network topology.  In this work, we
aim at gaining theoretical insight into how to persuade agents to
exchange messages in dynamic networks.

We take a game theoretical approach to determine necessary and
sufficient conditions under which we can persuade rational agents to
exchange messages, by holding them accountable for deviations with
punishments. Unlike previous
work~\cite{Mailath:07,Fainmesser:12,Fainmesser:12b,Fabrikant:03,Moscibroda:06,Li:06,Li:08,Guerraoui:10},
we do not assume the network is the result of rational behaviour or
that the topology changes according to some known probability
distribution, and we do not limit our analysis to one-shot pairwise
interactions. We make three contributions: (1) we provide a new game
theoretical model of repeated interactions in dynamic networks, where
agents have incomplete information of the topology, (2) we define a
new solution concept for this model, and (3) we identify necessary and
sufficient conditions for enforcing accountability, i.e., for
persuading agents to exchange messages in the aforementioned model.

Our results are of technical interest but also of practical
relevance. We show that we cannot enforce accountability
if the dynamic network does not allow for \emph{timely
punishments}. In practice, this means for instance that we cannot 
enforce accountability in some networks formed in file-sharing applications
such as Bittorrent\,\cite{Cohen:03}. We also show that for
applications such as secret exchange, where the benefits of the
exchanges significantly surpass the communication costs, timely punishments are enough
to enforce accountability. However, we cannot in general
enforce accountability if agents do not possess enough 
information about the network topology.
Nevertheless, we can enforce accountability in a wide variety 
of networks that satisfy 1-connectivity\,\cite{Kuhn:10} with minimal knowledge about the network topology.
This result provides a sound game theoretical foundation for the empirical 
results presented in~\cite{Li:06,Li:08,Guerraoui:10},  showing that we
can in fact enforce accountability in gossip dissemination with connected overlay
networks where nodes possess sufficient information about the network.

\shortv{\vspace{5mm}\noindent\textbf{Regular submission. Student paper.}}

\newpage
\setcounter{page}{1}

\section{Introduction}
A variety of protocols require pairs of neighbouring nodes of a network
to repeatedly interact in pairwise exchanges of messages that are of mutual interest to both parties.
Well known examples include file-sharing systems\,\cite{Cohen:03} and gossip
dissemination protocols\,\cite{Li:06,Li:08,Guerraoui:10}.
These protocols can operate in very diverse settings such as wireless ad-hoc or peer-to-peer overlay networks.
These settings pose three main challenges. First, networks are inherently dynamic,
whether due to uncontrolled mobility or maintenance of the overlay.
Second, since communication may be costly, nodes may act rationally by not sending messages,
while still receiving messages from their neighbours.
Third, nodes may have incomplete information about the network topology.
In this work, we aim at gaining theoretical insight into how to persuade agents
to exchange messages in dynamic networks.

A growing body of literature has taken a game theoretical
approach\,\cite{Osborne:94} to address rational behaviour in a variety
of distributed problems in static networks
(e.g.,\,\cite{Halpern:04,Abraham:06,Abraham:13,Afek:14,Wong:13}).
Nodes are viewed as being under the control of rational
agents\,\footnote{Henceforth, we use the designation agent when
  referring to both the node and the rational agents.} that seek to
maximize individual utilities.  In these lines, we use Game Theory to
address the problem of rational behaviour in pairwise exchanges of
messages over links of a dynamic network.  We consider that agents
obtain a benefit of receiving messages from their neighbours, and
incur costs for sending and receiving messages.  We are interested in
protocols that satisfy two properties: (1) agents exchange messages
in every pairwise interaction with neighbours, in order to provide
some useful service such as file-sharing and (2) the protocols are
equilibria according to some solution concept (e.g., Nash
equilibrium), so that no agent gains (increases its utility) by
deviating from the protocol. Protocols that
satisfy these properties are said to \emph{enforce accountability} in pairwise exchanges.

\vspace{1mm}\textit{Folk Theorems.} Results known as Folk
Theorems have shown that it is possible to enforce accountability by
holding agents accountable for deviations with punishments that
decrease their utility\,\cite{Mailath:07}, provided that (1) agents
want to participate in exchanges, i.e., the benefits of receiving
messages outweigh the communication costs, (2) agents perceive
interactions to occur infinitely often, which is possible if the
end-horizon of interactions is unknown\,\cite{Osborne:94}, and (3) a
monitoring infrastructure provides agents with sufficient information
regarding the past behaviour of other agents, so that any deviation
from the protocol may be promptly detected and punished.  In a
distributed system, agents may only learn about past behaviour of
other agents in messages received from their neighbours.  Therefore, a
monitoring infrastructure is constrained by the network.
Unfortunately, most proofs of Folk Theorems assume an exogenous
monitoring infrastructure, ignoring the constraints imposed by dynamic
networks. Some proofs of Folk Theorems take into account network
constraints\,\cite{Fainmesser:12,Vilaca:15}, but they assume that
networks are static or that agents know a probability distribution
over topologies at each point in time.  Such knowledge may not be
readily available to agents in our setting.  As we shall see,
this poses multiple challenges not addressed by existing proofs of Folk Theorems.

\vspace{1mm}\textit{Contributions.} 
We bridge the gap between existing proofs of Folk Theorems and the goal
of enforcing accountability in dynamic networks.
We make three contributions: (1) we provide a new game theoretical
model of repeated interactions in dynamic networks, where agents have
incomplete information of the topology, (2) we define a new solution
concept for this model, and (3) we identify necessary and sufficient
conditions for enforcing accountability in the aforementioned model.

\vspace{1mm}\textit{Game Theoretical Model.}
We consider that an \emph{adversary} selects the network
topologies\,\cite{Kuhn:10}.  We focus on networks that are not under
the control of the agents, exemplified by wireless ad-hoc networks and
overlays such as~\cite{Li:06,Li:08}.  To capture such exogenous
restrictions on the network, we assume that the adversary is oblivious
to the messages sent by agents, selecting the topologies for all times
prior to the beginning of the exchanges.  At each point in time,
agents learn partial information about the current topology, including
the identities of their current neighbours.  They also possess
information about the set $\G$ of dynamic networks that the adversary
may generate. In practice, the set $\G$ represents basic information known
by agents regarding the network structure. For instance, in an overlay network
designed to disseminate data in a reliable fashion, agents expect the
adversary to only generate connected graphs,
whereas in a more dynamic setting such as a wireless ad-hoc network,
$\G$ may contain a wider variety of non-connected graphs.

\vspace{1mm}\textit{Solution Concept.}  We define a solution concept
for our model, named $\G$-Oblivious Adversary Perfect
Equilibrium (\oape{\G}). To understand its definition, it is useful
to recall the notion of Nash equilibrium, which states that a protocol
is an equilibrium if it is a best response, i.e., no agent can
gain by deviating given that other agents also do not deviate.  This
definition does not suit our purposes for two reasons.  First, it does
not consider deviations \emph{off the equilibrium path}, i.e.,
deviations that occur after histories of messages where some agents
have deviated\,\cite{Osborne:94}. Thus, a Nash equilibrium may rely on
empty threats of punishments such as punishments that also cause some utility loss
to the punishers. Second, agents must know the probability
distribution of the adversary generating each topology. The first
issue is addressed in the literature by the notion of sequential
equilibrium (SE)\,\cite{Kreps:82}, which requires protocols to be best
responses after every history of messages. To address the second
issue, the notion of \oape{\G} refines SE by requiring protocols to be
best responses after every history, for any fixed network in $\G$ generated by
the adversary. In other words, we require that agents cannot gain
from deviating given that the adversary may generate any network in $\G$.

\vspace{1mm}\textit{Necessary and Sufficient Conditions.}
The main goal of the paper is to determine the degree of dynamism that we can tolerate
in order to enforce accountability in pairwise exchanges, i.e.,
we aim to determine necessary and sufficient restrictions on $\G$
for \oape{\G} protocols that enforce accountability to exist.
Our results show that the ability to enforce accountability
depends not only on $\G$ but also on the structure and 
utility of pairwise exchanges.

We determined the weakest set of restrictions on $\G$
for which we can enforce accountability. Our first result
shows that $\G$ must satisfy a property named \emph{timely punishments}.
Intuitively, this property states that for ever pairwise exchange between agents
$i$ and $j$, agent $j$ must always be able to communicate a deviation of $i$
to some third agent that later interacts with $i$, and thus is able to punish $i$ for that deviation.
This restriction is not met by some networks such as file-sharing overlays (e.g., Bittorrent\,\cite{Cohen:03}),
where users with similar interests interact frequently with each other
but only rarely with users with different interests.
Our next result provides a \oape{\G} protocol that enforces accountability,
assuming that $\G$ is restricted by timely punishments and that exchanges
are \emph{valuable}, that is, agents have a high benefit/cost ratio of
receiving/sending messages and neglect download costs.
Such type of exchanges occurs, for instance, when agents share small
but highly valuable secrets such as private keys\,\cite{Li:06,Li:08}.

In many cases, the assumption that pairwise exchanges are valuable is too restrictive.
For instance, in file-sharing, we may expect agents to be interested in exchanging
large files, but the cost of uploading such files is certainly non-negligible,
and the benefit-to-cost ratio of receiving/sending files may be small.
It is therefore of practical interest to understand whether we can enforce accountability in more
general settings. Our final two results identify a necessary and a sufficient condition to achieve this goal.
Namely, we show that $\G$ must satisfy, in addition to timely punishments,
a property that we call \emph{eventual distinguishability}. Roughly speaking, this property
states that whenever two (or more) agents may punish
$i$ from defecting in the past (towards some other agent $j$),  they
have the information necessary to coordinate their actions,
so that the total number of additional punishments for a single deviation of $i$ is never too large.
\commentout{
Specifically, at each point in time, every agent $i$ expects to be punished by other agents a number of times
that depends on past deviations. For each new deviation, the expected number of punishments
must increase at least one, or else $i$ gains by deviating. However, we show that it cannot increase by more than
one. Otherwise, the expected number of punishments may grow up to a point where
any additional punishment lies so far in the future that $i$ always prefers to deviate.
}Whether $\G$ satisfies eventual distinguishability depends both on the properties of graphs in $\G$
and on the information available to agents about the topology.
Following previous work in dynamic networks\,\cite{Kuhn:10},
we consider a class of connected networks where agents have knowledge
of the degree of their neighbours, which are formed by overlay networks such as~\cite{Li:06,Li:08,Guerraoui:10}.
We show that if graphs in $\G$ are contained in this class of connected networks,
then there is a protocol that enforces accountability in general pairwise exchanges.

\vspace{1mm}\textit{Summary of Results.}  Our results are summarized in Fig.~\ref{fig:results}.
We show that, in general, timely punishments (grey area) and eventual distinguishability (dotted area)
are necessary conditions for enforcing accountability, whereas connectivity is a sufficient condition (protocol $P1$).
In valuable pairwise exchanges, timely punishments are sufficient for enforcing accountability (protocol $P2$).

\begin{figure}
\centering
\begin{subfigure}{.4\textwidth}
 \centerline{\includegraphics[scale=0.4]{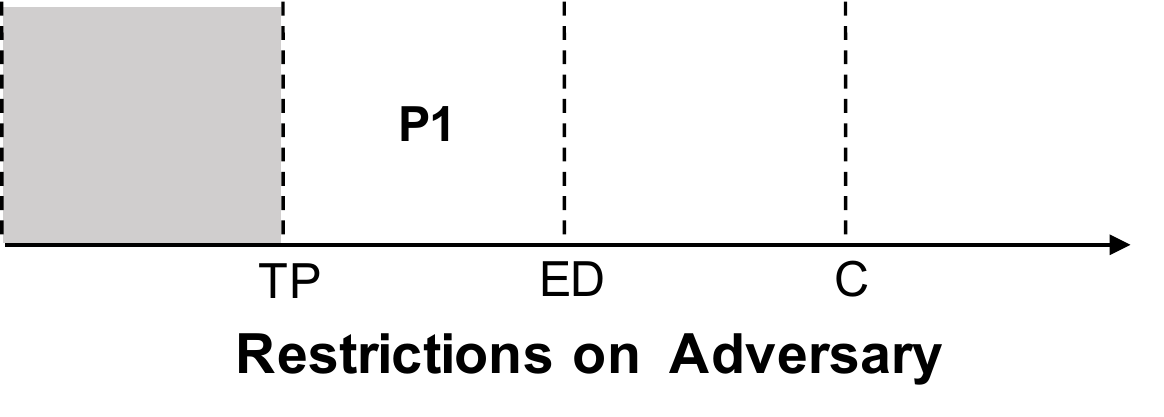}}
  \caption{Valuable Pairwise Exchanges (Sec.~\ref{sec:necessary})}
  \label{fig:val}
\end{subfigure}
\begin{subfigure}{.4\textwidth}
 \centerline{\includegraphics[scale=0.4]{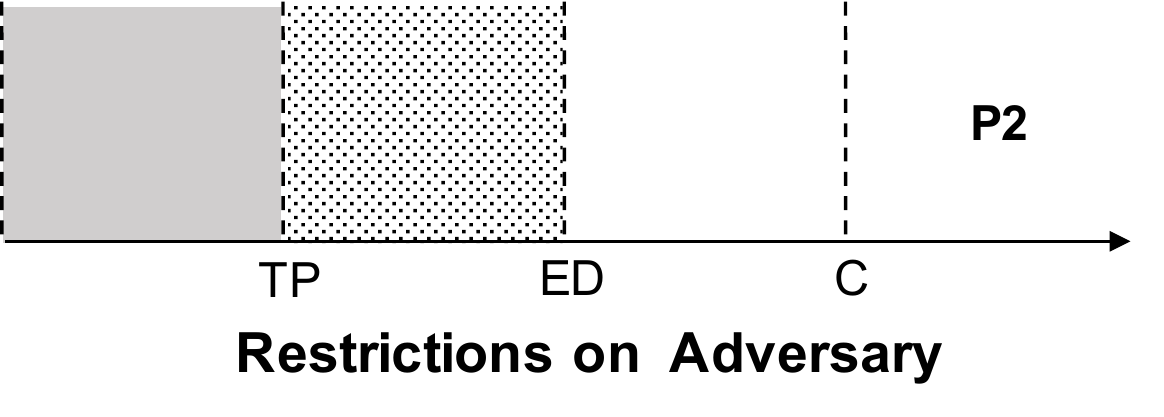}}
 \caption{General Pairwise Exchanges (Sec.~\ref{sec:sufficient})}
 \label{fig:gen}
\end{subfigure}
\caption{timely punishments (TP), eventual distinguishability (ED),  and connectivity (C).}
\label{fig:results}
\end{figure}

\vspace{1mm}\textit{Related work.}
Existing works have used Game Theory to analyse interactions in file sharing\,\cite{Jun:05,Neglia:07,Levin:08}
and gossip dissemination\,\cite{Li:06,Li:08}. Unlike these works,
we do not limit our analysis to one-shot pairwise interactions.
Our proposed model and results are more related to work in models of 
dynamic games and network formation games.
In the former, the structure of the game being repeated varies in each 
repetition according to a known probability distribution\,\cite{Mailath:07,Fainmesser:12,Fainmesser:12b}.
This captures repetitions of the same game where only the network topology 
varies. In the latter, the network is the outcome of the actions of rational agents\,\cite{Fabrikant:03,Moscibroda:06}.
To the best of our knowledge, none of these models captures network variations 
as being caused by an adversary oblivious to the actions of agents,
and they do not model incomplete information of the network topology.
Hence, they are not appropriate for modelling unpredictable changes
in the network such as physical topology changes. More importantly,
they do not address the challenges of devising distributed monitoring mechanisms.
Work in reputation systems such as~\cite{Feldman:04} have proposed and analysed
distributed systems that perform monitoring, but they do not prove Folk Theorems.

\commentout{
\vspace{1mm}\textit{Relevance.} Our work provides the first insight into how to prove
a Folk Theorem in dynamic networks when monitoring is distributed and restricted by a dynamic network.
From a practical perspective, we show that we can enforce accountability with the weakest
restrictions on $\G$ in valuable pairwise exchanges.
This restriction is not met by some networks such as file-sharing overlays (e.g., Bittorrent\,\cite{Cohen:03}),
where users with similar interests interact frequently with each other
but only rarely with users with different interests. 
In general pairwise exchanges, we need stronger restrictions on $\G$.
We cannot in general enforce accountability if the only information 
about the topology is the identity of their neighbours,
but we can still enforce accountability in a wide
variety of networks that satisfy 1-connectivity\,\cite{Kuhn:10}
with known degrees such as~\cite{Li:06,Li:08}.
}

\commentout{
\section{Related Work}
Our proposed model is related to the models of dynamic games and network formation games.
In the former, the structure of the game being repeated varies in each 
repetition according to a known probability distribution\,\cite{Mailath:07,Fainmesser:12,Fainmesser:12b}.
This captures repetitions of the same game where only the network topology 
varies. In the latter, the network is the outcome of the actions of rational agents\,\cite{Fabrikant:03,Moscibroda:06}.
To the best of our knowledge, none of these models captures network variations 
as being caused by an adversary oblivious to the actions of agents,
and they do not model incomplete information of the network topology.
Hence, they are not appropriate for modelling unpredictable changes
in the network such as physical topology changes. More importantly,
they do not address the challenges of devising distributed monitoring mechanisms.

A significant literature proves Folk Theorems in a variety of different models (see, e.g.,~\cite{Mailath:07} for a survey).
However, very few works consider restrictions imposed on the monitoring infrastructure
by the network\,\cite{??,??,??}, and most of these works either assume a static network\,\cite{Vilaca:15}
or do not address the problem from a distributed systems perspective\,\cite{??,??,??}.
To the best of our knowledge, only the authors of~\cite{??,??} consider the restrictions imposed
by dynamic networks on the monitoring infrastructure, but they assume a probability distribution
over networks and again do not provide any insight into how to develop a distributed monitoring infrastructure.

Work that lies on the intersection between distributed systems and
game theory has focused on non-repeated interactions\,\cite{??,??,??}.
Of relevance to our work, some authors have addressed the
problem of rational behaviour in file-sharing\,\cite{??,??,??}
and gossip dissemination\,\cite{Li:06,Li:08,Guerraoui:11}.
They use game theory to analyse the existence of Nash equilibria
in one-shot pairwise interactions. However,
they do not approach the problem from the perspective of repeated interactions,
do not consider dynamic networks, and do not analyse sequential equilibria protocols.
Therefore, our results complement these works.
}

\section{Model}
\label{sec:model}
We consider a synchronous message passing system with 
reliable communication in a dynamic network.
The system entities are the rational agents, which send
messages over links of the network, and an oblivious adversary,
which selects the dynamic network at the beginning.
Specifically, time is divided into rounds. Prior to the execution of the protocol, 
the adversary selects an evolving graph $G$ that specifies
the communication graph $G^m$ of each round $m$.
Let $\G$ be the set of all evolving graphs, and let $\agents = \{1 \ldots n\}$ be the set of agents,
where $n$ is the number of agents.
As in~\cite{Abraham:06,Abraham:13}, we assume that edges are private:
agent $i$ can send a message to $j$ at round $m$ only if $j$ is a neighbour of $i$ in $G^m$,
and if $i$ sends that message, then exactly $j$ receives it; 
if $i$ has multiple neighbours, then $i$ can discriminate neighbours by sending different messages to each.
We also assume for simplicity that graphs are undirected.
Finally, we assume that each agent knows $n$ and
the identities of its neighbours in each round $m$, prior to sending messages in $m$.

\subsection{Pairwise Exchanges}
We consider an infinitely repeated game of pairwise exchanges of
messages between neighbouring agents.  Given $G \in \G$, at each round
$m$, every two neighbouring agents $i$ and $j$ have values of interest
to each other. The goal is to persuade $i$ and $j$ to share these
values plus some additional information required to monitor other agents.

\vspace{1mm}\textit{Actions and histories.} Agents $i$ and $j$
may exchange messages in one or more communication steps.
For simplicity, we abstract communication by considering a finite set of individual actions,
which capture most exchanges of interest,
namely (1) \emph{defection}, where $i$ omits
messages (and thus passively punishes $j$), (2) \emph{cooperation},
where $i$ sends its value plus monitoring information to $j$, and (3)
active \emph{punishment}, where $i$ sends messages while causing a utility loss to $j$
(for instance, if $i$ sends garbage instead of the
value\,\cite{Li:06}). Our results can easily be generalized to
arbitrary (finite) sets of individual actions.
In some of our results, we also consider that a punishment
can be proportional to some constant $c$,
i.e., can cause a utility loss proportional to $c$;
we also consider a fourth action of \emph{punishment avoidance},
where $i$ avoids the cost of a punishment but does not receive the value
from $j$. Both punishment avoidance and proportional punishments
are possible actions when multiple communication steps occur between agents;
later, we discuss possible implementations.

At every round $m$, $i$ and $j$ simultaneously follow an individual action, and are
only informed of each other's individual actions at the end of the
round. A round-$m$ \emph{action} $a_i$ of $i$ specifies the
individual actions of $i$ towards every neighbour.  A round-$m$
\emph{action profile} $\va$ specifies the round-$m$ actions followed
by all agents. When the adversary generates an evolving graph $G \in \G^*$,
repeated pairwise exchanges are characterized by a set $\hist(G)$ of \emph{histories}.
A round-$m$ history $h \in \hist(G)$ is a pair $((\va^{m'})_{m' < m},G)$ representing
the sequence of action profiles $\va^{m'}$ followed in rounds $m' < m$
and the evolving graph $G$ selected by the adversary.
In other words, $h$ represents
global information available immediately before agents follow
round-$m$ actions. A \emph{run} $r$ is a function mapping each round
$m$ to a round-$m$ history $r(m)$\,\footnote{For every $m' <m$, 
the sequence of actions in history $r(m')$ is a prefix of the sequence of actions in $r(m)$.}.

\fullv{Given a round-$m$ history $h$ and $G \in \G$, a protocol $\vsigma$ defines a probability
distribution $\pra{r}{\vsigma}{G,h}$ over every run $r$ compatible
with $h$ and $G$ (i.e., $r(m) = h$ and $G$ is the evolving graph in
$h$). Let $\pra{h}{\vsigma}{G}$ be the probability of history $h$
being realized conditioned on $G$ and agents following $\vsigma$ from
the beginning.  We say that $r$ is a run of $\vsigma$ after $h$ in $G$ if
$\pra{r(m')}{\vsigma}{G,h} > 0$ for every round $m' > m$.
As an abuse of notation, we consider that $\pra{r}{\vsigma}{G,h} > 0$.
We say that $r$ is a run of $\vsigma$ in $G$ if $h$ is the initial history.}

\vspace{1mm}\textit{Information and Strategies.}
Information available to agents can be divided into knowledge
about the game structure and private observations.
Regarding the former, we assume that the main information about the game
structure is common knowledge\,\footnote{Every agent knows this, knows that every agent knows this, and so on.}.
We also assume that there is a subset $\G^* \subseteq \G$
of evolving graphs that the adversary may generate and that is also common knowledge.
This does not preclude agents from believing that the adversary may generate any evolving graph (i.e., $\G^* = \G$);
however, we show that we need minimum restrictions on $\G^*$ to persuade agents to exchange messages.
Regarding the latter, given $G \in \G$ and round $m$,
every agent $i$ acquires some information about  $G^m$ prior to sending messages in $m$,
which includes the identity of the round-$m$ neighbours of $i$.
We represent this information as a set $\G_i^m$ of graphs, such that $G^m \in \G_i^m$
and, in every graph $\bar{G}$ in $\G_i^m$ agent $i$ obtains the same information about $\bar{G}^{m}$ and $G^m$.
For instance, if agents only know the identity of their neighbours, then $\G_i^m$ is the 
set of graphs where $i$ has the same set of neighbours as in $G^m$.

Given a round-$m$ history $h$, there is a private history $h_i$ of observations made by $i$ 
up to the beginning of round $m$, which include the sets $\G_i^{m'}$ for every round $m' \le m$,
the actions of $i$ in every round $m' < m$, and the individual actions
followed by neighbours of $i$ towards $i$ in every round $m' < m$.
We can associate to each private history $h_i$ a round-$m$ information set $I_i$
containing the histories that provide the same information as in $h_i$.
Given $G \in \G$, we denote by $\I_i(G)$ the set of information sets of $i$ compatible with $G$.
We make the standard assumption that agents have perfect recall, that is, 
agents recall all their observations made in the past.
A strategy $\sigma_i$ of agent $i$ corresponds to a distributed protocol.
Specifically, for each $G \in \G$ and information set $I_i \in \I_i(G)$,
$\sigma_i(\cdot \mid I_i)$ is a probability distribution over round-$m$ actions available to $i$.
We use the designation \emph{protocol} for strategy profiles $\vsigma$, which specify the strategy followed by every agent.

\vspace{1mm}\textit{Utility.} When agents follow an action profile $\va$, agent $i$ obtains
a utility $u_i(\va)$. This is the sum of the utilities obtained in interactions with each neighbour $j$
given as the difference between the benefits of receiving a value from $j$ and the costs of punishments and communication.
Specifically, whenever $j$ follows a cooperation or punishment action and $i$ does not avoid a punishment,
$i$ obtains a benefit $\beta$ of receiving $j$'s value. Regarding communication costs, 
$i$ incurs a normalized cost $1$ of sending messages in cooperation and punishment actions,
incurs no cost by defecting and avoiding punishments, and incurs the cost $\alpha$ of receiving messages from $j$.
Finally, a punishment of $j$ proportional to $c$ causes a utility loss of $c\pi$ to $i$,
where $\pi$ is the unitary cost per punishment.

We now define the (total) expected utility $u_i(\vsigma \mid G,I_i)$ of $i$ 
when agents follow a given protocol $\vsigma$,
conditional on some $G \in \G^*$ and on a round-$m$ information set $I_i$.
Given a round-$m$ history $h \in I_i$, $\vsigma$ defines a probability distribution over runs
compatible with $G$ and $h$. Therefore, we can compute the expected utility of $i$
in every future round $m' \ge m$ conditional on $G$ and $h$ as the expected value of $u_i(\va)$,
where the expectation is taken relative to the probability defined by $\vsigma$ 
of agents following action profile $\va$ in round $m'$.
Informally, $u_i(\vsigma \mid G,I_i)$ is the expected value of $u_i(\va)$
for every future round $m' \ge m$, as computed in round $m$.
Formally, we need to take into account both
(1) the probability of each $h \in I_i$ being realized
and (2) the effect of time on the value of future utilities.
Regarding (1), we consider a belief system $\mu$, specifying 
the probability $\amu{h}{G}{I_i}$ that $h$ is realized.
Regarding (2), we follow the standard approach\,\cite{Mailath:07}
of assuming that the value of future utilities decays over time with a discount factor $\delta \in (0,1)$,
so the expected utility in round $m'$ discounted to round $m$ is $\delta^{m'-m} u_i(\va)$.
Let $u_i(\vsigma \mid G,h)$ be the sum of the expected utilities $u_i(\va)$ discounted to 
$m$ for all future rounds $m' \ge m$, conditional on $G$ and $h$;
then, $u_i(\vsigma \mid G,I_i)$ is the expected value of $u_i(\vsigma \mid G,h)$,
where the expectation is taken relative to $\amu{h}{G}{I_i}$.

\subsection{Enforcing Accountability}
We say that a protocol $\vsigma$ \emph{enforces accountability} iff (1) in $\vsigma$,
agents always cooperate by exchanging their values, until some agent deviates,
and (2) $\vsigma$ is an equilibrium, so that no agent gains by deviating.
Regarding (2), we need to define a \emph{solution concept}, specifying exact conditions
under which $\vsigma$ is an equilibrium.

\vspace{1mm}\textit{Solution Concept.}
We define a new solution concept for our model,
which is a refinement of sequential equilibrium (SE)\,\cite{Kreps:82}.
In its original definition, a protocol $\vsigma^*$ is said to be a SE 
if there is a belief system $\mu^*$ consistent with $\vsigma^*$ (see below)
such that for every agent $i$ and information set $I_i$, $i$ cannot gain
by deviating from $\vsigma^*$ conditioning on $I_i$. We refine this definition by
also conditioning on every evolving graph $G \in \G^*$ that may be selected by the adversary.
Given this, if $\vsigma^*$ is an equilibrium, then no agent $i$ gains by deviating 
from $\sigma_i^*$, regardless of the evolving graph selected 
by the adversary and the observations made by $i$.
This is sufficient for ensuring that agents do not gain from not punishing
other agents or not forwarding monitoring information triggering additional punishments.

Formally, we say that $\vsigma^*$ is a $\G^*$-Oblivious Adversary Perfect Equilibrium (\oape{\G^*})
iff there is a belief system $\mu^*$ consistent with $\vsigma^*$ and $\G^*$ such that,
for every $G \in \G^*$, agent $i$, strategy $\sigma_i$, and information set $I_i \in \I_i(G)$,
$u_i(\vsigma^*  \mid G, I_i) \ge u_i((\sigma_i,\vsigma^*_{-i}) \mid G,I_i)$,
where the expectation is taken relative to $\mu^*$, and
$(\sigma_i,\vsigma^*_{-i})$ is the protocol where only $i$ deviates from $\sigma_i^*$ by following $\sigma_i$.
\shortv{The definition of belief system consistent with $\vsigma^*$ and $\G^*$
is very technical, and we leave it to Appendix~\ref{app:consistent}.}\fullv{The definition of consistent beliefs is identical in spirit to that proposed in the definition of sequential equilibrium\,\cite{Kreps:82}.
We say that a belief system $\mu^*$ is 
consistent with $\vsigma^*$ and $\G^*$ iff there is a sequence $\vsigma^1,\vsigma^2,\ldots$ 
of completely mixed protocols (attribute positive probability to every possible action at every information set) converging to $\vsigma^*$
such that for every $G \in \G^*$, agent $i$, information set $I_i \in \I_i$, and $h \in \I_i$,
$$\mu(h \mid G,I_i) = \lim_{c \to \infty} \pra{h}{\vsigma^c}{G}/\sum_{h' \in I_i} \pra{h'}{\vsigma^c}{G}.$$
}
Intuitively, given $G$, if $I_i$ is consistent with agents following $\vsigma^*$, then $i$
believes that agents followed $\vsigma^*$, otherwise, $i$ believes that they followed some alternative protocol.

\fullv{
\section{Properties of \oape{\G^*}}
We prove the One-Shot-Deviation principle and a property of consistent beliefs regarding a single
deviation from a \oape{\G^*}.

\subsection{One-Shot-Deviation Principle}
We find that the One-Shot-Deviation Principle holds for \oape{\G^*}\,\cite{Hendon:96}.
This principle states that, for a $\vsigma^*$ to be a \oape{\G^*},
it suffices that no agent $i$ can gain by performing a \emph{one-shot} deviation
from $\sigma_i^*$ at any information information set $I_i$,
where a one-shot deviation is a strategy where $i$ follows some action
not specified by $\sigma_i^*$ and follows $\sigma_i^*$ afterwards.
This principle will be useful later, so we provide a proof.
The proof is almost identical to~\cite{Hendon:96}, so we only include a sketch.
Let $\vsigma^*|_{I_i,a_i}$ be the protocol
identical to $\vsigma^*$, except $i$ deterministically follows $a_i$ at $I_i$.

\begin{proposition}
\textbf{One-Shot-Deviation Principle.} A protocol $\vsigma^* \in \Sigma$ is a \oape{\G^*}
if and only if there exists $\mu^*$ consistent with $\vsigma^*$ and $\G^*$ such that, for every $G \in \G^*$,
round $m$, agent $i$, round-m $I_i \in \I_i(G)$, and actions $a_i^*,a_i \in \act_i(G^m)$,
we have $u_i(\vsigma^*|_{I_i,a_i^*} \mid G,I_i) \ge u_i(\vsigma^*|_{I_i,a_i'} \mid G,I_i) \mid G,I_i)$.
\end{proposition}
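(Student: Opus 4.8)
The plan is to follow the standard template for One-Shot-Deviation results (as in~\cite{Hendon:96}), adapting each step to account for the extra conditioning on the evolving graph $G \in \G^*$. The ``only if'' direction is immediate: if $\vsigma^*$ is a \oape{\G^*} with consistent beliefs $\mu^*$, then in particular no agent gains from any deviation conditional on any $(G, I_i)$, so in particular no agent gains from a strategy of the form $\vsigma^*|_{I_i,a_i}$; comparing the best such $a_i^*$ (namely the one supported by $\sigma_i^*$, or any that achieves the supremum) against an arbitrary $a_i$ gives the stated inequality with the very same $\mu^*$.

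For the ``if'' direction, fix $G \in \G^*$, an agent $i$, a deviating strategy $\sigma_i$, and an information set $I_i \in \I_i(G)$; the goal is to show $u_i(\vsigma^* \mid G, I_i) \ge u_i((\sigma_i,\vsigma^*_{-i}) \mid G, I_i)$. First I would reduce to the case where $\sigma_i$ differs from $\sigma_i^*$ at only finitely many information sets: since utilities are discounted by $\delta \in (0,1)$ and per-round utilities $u_i(\va)$ are uniformly bounded (the action set is finite), the contribution of deviations occurring after round $m+T$ to $u_i((\sigma_i,\vsigma^*_{-i}) \mid G, I_i)$ is at most $O(\delta^T)$, so it suffices to prove the inequality for finite-horizon deviations and then take $T \to \infty$. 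Then, for a deviation that differs from $\sigma_i^*$ only up to round $m+T$, I would argue by backward induction on the last round $m+t$ at which $\sigma_i$ still disagrees with $\sigma_i^*$: replacing $\sigma_i$ at the (finitely many) round-$(m+t)$ information sets reachable under $(\sigma_i,\vsigma^*_{-i})$ by the prescription of $\sigma_i^*$ can only (weakly) increase $i$'s conditional expected utility, by the hypothesis applied at each such information set, after which the modified strategy disagrees with $\sigma_i^*$ only up to round $m+t-1$. Iterating collapses $\sigma_i$ to $\sigma_i^*$ without ever decreasing $u_i(\cdot \mid G, I_i)$.

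The one delicate point — and the step I expect to be the main obstacle — is the bookkeeping of beliefs under the deviation. When we swap $\sigma_i$ for $\sigma_i^*$ at a round-$(m+t)$ information set $I_i'$ that is \emph{off} the $\vsigma^*$-path, we must know that the relevant continuation utilities are evaluated against the same consistent belief system $\mu^*$, and that $\mu^*$ restricted to the subtree behaves appropriately (the analogue of Kreps--Wilson's observation that consistent beliefs induce consistent beliefs on subgames). Here the extra conditioning on $G$ actually helps rather than hurts, because the limit defining $\amu{h}{G}{I_i}$ is taken along a single sequence $\vsigma^c$ of completely mixed protocols for each fixed $G$, so the induced beliefs at descendant information sets are obtained from the same sequence; one has to check that a player-$i$ deviation does not disturb the belief weights, which holds because $i$'s own strategy does not enter the normalization $\sum_{h' \in I_i'} \pra{h'}{\vsigma^c}{G}$ in a way that distinguishes histories within $I_i'$ (all histories in an information set of $i$ agree on $i$'s past actions, by perfect recall). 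Once this is in place, the backward induction goes through verbatim as in the single-graph case, and since the argument was carried out for an arbitrary fixed $G \in \G^*$ with the common witness $\mu^*$, we obtain exactly the \oape{\G^*} condition. I would only sketch these belief-consistency details, citing~\cite{Hendon:96,Kreps:82}, since they are essentially routine.
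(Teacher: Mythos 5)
Your proposal is correct and follows essentially the same route as the paper's own (sketched) proof: both directions are handled as in Hendon et al., with the reverse implication obtained by truncating the deviation at a horizon where the discounted tail is negligible and then collapsing the finitely many remaining deviations by backward induction using the one-shot hypothesis at each information set, all for a fixed $G \in \G^*$ with a common witness $\mu^*$. The only cosmetic difference is that the paper phrases the truncation step as a proof by contradiction with an explicit $2\epsilon$ gain, whereas you phrase it as a direct limit; your added remark about beliefs within $I_i$ being insensitive to $i$'s own deviation is consistent with the paper's Proposition on single deviations.
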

\begin{proof}
(Sketch) We proceed as in~\cite{Hendon:96}, except that we fix $G \in \G^*$.
The implication is clear: since $\vsigma^*$ is a \oape{\G^*}, agent $i$ cannot
increase its expected utility by following $\vsigma^*|_{I_i,a_i'}$ instead of $\vsigma^*$,
and  $u_i(\vsigma^*|_{I_i,a_i^*} \mid G,I_i) = u_i(\vsigma^* \mid G, I_i)$.
As for the reverse implication, fix round-$m$ $I_i$. 
Suppose that the right-hand side of the proposition holds
and that there is $\sigma_i$ such that for some $\epsilon > 0$
\begin{equation}
\label{eq:one-dev}
u_i((\sigma_i,\vsigma^*_{-i}) \mid G,I_i) - u_i(\vsigma^* \mid G,I_i) = 2\epsilon.
\end{equation}

Define $m' > m$ such that $\delta^{m' - m} y n /(1-\delta) < \epsilon$.
Let $\sigma_i'$ be identical to $\sigma_i$ at every round-$m''$ information set for $m'' \le m'$,
but is identical to $\sigma_i^*$ at every round-$m''$ information set for $m'' > m'$.
It can be shown using the right-hand side and backwards induction that
$$u_i(\vsigma^* \mid G,I_i) \ge u_i((\sigma_i',\vsigma^*_{-i}) \mid G,I_i) \ge u_i((\sigma_i,\vsigma^*_{-i}) \mid G,I_i) -\epsilon.$$
This contradicts~(\ref{eq:one-dev}), proving the reverse implication and concluding the proof.
\end{proof}

\subsection{Single Deviation}
We now show that the definition of consistent belief systems implies that,
if an agent $i$ has been deviating from $\vsigma^*$
up to round $m$ and the round-$m$ information set $I_i$ is consistent
with only $i$ deviating from $\vsigma^*$, then $i$ believes that only he deviated.
Fix $G \in \G^*$, belief system $\mu^*$ consistent with $\vsigma^*$ and $\G^*$,
agent $i$, and round-$m$ information set $I_i \in \I_i(G)$.
We say that $I_i$ is consistent with only $i$ deviating from $\vsigma^*$ iff
there is a strategy $\sigma_i'$ and history $h \in I_j$
with $\pra{h}{(\sigma_i',\vsigma_{-i}^*)}{G} > 0$.
Let $\pra{\va}{\vsigma}{G,h}$ be the probability of agents following $\va$
conditioning on $h$ and $G$.

\begin{proposition}
\label{prop:onlydev}
If $I_i$ is consistent with only $i$ deviating from $\vsigma^*$,
then for all strategies $\sigma_i'$ consistent with $I_i$ and $G$ ($\pra{h'}{(\sigma_i',\vsigma_{-i}^*)}{G} > 0$ for some $h' \in I_i$)
and all histories $h \in I_i$ we have
$$\mu^*(h \mid G,I_i) = \frac{\pra{h}{(\sigma_i',\vsigma_{-i}^*)}{G}}{\sum_{h' \in I_i} \pra{h'}{(\sigma_i',\vsigma_{-i}^*)}{G}}.$$
\end{proposition}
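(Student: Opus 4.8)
The plan is to unwind the definition of a consistent belief system and reduce the claim to a factorization of history probabilities. Fix a sequence $\vsigma^1,\vsigma^2,\ldots$ of completely mixed protocols converging to $\vsigma^*$ that witnesses consistency of $\mu^*$. For any round-$m$ history $h$ whose evolving graph is $G$, the probability $\pra{h}{\vsigma^c}{G}$ is the product over rounds $m' < m$ and agents $k$ of the probability that $k$ follows its individual actions in $\va^{m'}$ at its round-$m'$ information set along $h$. I would split this product into the part coming from agent $i$, written $p_i^c(h)$, and the part coming from $\agents \setminus \{i\}$ under $\vsigma^*_{-i}$'s perturbations, written $q^c(h)$; likewise write $p_i'(h)\,q(h)$ for $\pra{h}{(\sigma_i',\vsigma_{-i}^*)}{G}$, where $q(\cdot)$ uses $\vsigma_{-i}^*$ itself.

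The first key step is a common-prefix observation. By perfect recall, the round-$m'$ private history of $i$ is the truncation of its round-$m$ private history, and the latter (which records the sets $\G_i^{m''}$ for $m'' \le m$ and the individual actions of $i$ for $m'' < m$) is shared by every $h \in I_i$; hence all histories in $I_i$ take $i$ through the same round-$m'$ information sets and the same round-$m'$ actions for every $m' < m$. Therefore $p_i^c(h)$ does not depend on $h \in I_i$ — call it $p_i^c$ — and similarly $p_i'(h) \equiv p_i'$ for any strategy $\sigma_i'$. Since $p_i^c > 0$ (the $\vsigma^c$ are completely mixed), it cancels in the Bayesian ratio, so, summing only over the finitely many histories of $I_i$ whose evolving graph is $G$ (histories with another graph contribute $0$ to every term and trivially satisfy the asserted identity),
$$\mu^*(h \mid G,I_i) = \lim_{c \to \infty} \frac{p_i^c\, q^c(h)}{\sum_{h' \in I_i} p_i^c\, q^c(h')} = \lim_{c \to \infty} \frac{q^c(h)}{\sum_{h' \in I_i} q^c(h')}.$$

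Next I would pass the limit through these finite sums and products: since $\vsigma^c \to \vsigma^*$ at every information set, $q^c(h) \to q(h)$ and $\sum_{h' \in I_i} q^c(h') \to \sum_{h' \in I_i} q(h')$. Because $I_i$ is consistent with only $i$ deviating, there are $h^0 \in I_i$ and $\sigma_i^0$ with $\pra{h^0}{(\sigma_i^0,\vsigma_{-i}^*)}{G} > 0$; factoring this forces $q(h^0) > 0$, so the denominator $\sum_{h' \in I_i} q(h')$ is strictly positive and the limit equals $q(h)/\sum_{h' \in I_i} q(h')$. Finally, for an arbitrary $\sigma_i'$ consistent with $I_i$ and $G$, positivity of $p_i'$ follows from the witnessing history of that consistency, and cancelling $p_i'$ in $\pra{h}{(\sigma_i',\vsigma_{-i}^*)}{G} / \sum_{h' \in I_i}\pra{h'}{(\sigma_i',\vsigma_{-i}^*)}{G} = p_i' q(h)/\sum_{h'} p_i' q(h')$ gives exactly $q(h)/\sum_{h' \in I_i} q(h') = \mu^*(h \mid G,I_i)$.

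I expect the common-prefix step to be the main obstacle: one must argue carefully from perfect recall and the precise contents of the private history that every $h \in I_i$ drives $i$ through the same sequence of information sets and actions, so that $i$'s own — possibly vanishing — action probabilities are a genuine common factor that drops out of the ratio. Once that cancellation is established, the remaining work (the factorization of $\pra{h}{\cdot}{G}$, interchanging limits with finite sums, and positivity of the normalizing constant) is routine.
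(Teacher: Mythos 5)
Your proposal is correct and follows essentially the same route as the paper's proof: factor $\pra{h}{\vsigma^c}{G}$ into agent-wise components, observe that agent $i$'s factor is common to all $h \in I_i$ (by perfect recall) and cancels in the Bayesian ratio, pass to the limit where the $-i$ factors converge to those of $\vsigma^*_{-i}$ (vanishing on histories where others deviate), and finally reinsert an arbitrary consistent $\sigma_i'$ whose own factor cancels the same way. The only cosmetic difference is that you make the positivity of the normalizing constant and the limit interchange explicit, which the paper leaves implicit.
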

\begin{proof}
Fix round-$m$ $h \in I_i$. Let $\vsigma^1,\vsigma^2,\ldots$ be a 
sequence of completely mixed protocols that converge to
$\vsigma^*$ such that
\begin{equation}
\label{eq:onlydev1}
\mu^*( h \mid G,I_i) = \lim_{c \to \infty} \frac{\pra{h}{\vsigma^c}{G}}{\sum_{h' \in I_i} \pra{h'}{\vsigma^c}{G}}.
\end{equation}
Since agents follow each action with an independent probability,
$\pra{h}{\vsigma^c}{G}$ is the product of a set of factors,
with one factor $\pra{h_j}{\sigma^c_j}{G}$ per agent $j$ corresponding to
the probability defined by $\vsigma^c$ of $j$ following the sequence of 
actions specified in $h$. Specifically, this is the product of $\sigma_j^c(a_j^{m'} \mid I_j^{m'})$
for every $m' < m$, where $a_j^{m'}$ is the round-$m'$ action of $j$ in $h$,
and $I_j^{m'}$ is the round-$m'$ information set corresponding to $h$.
The factor $\pra{h_i}{\sigma_i^c}{G}$ is the same for every $h \in I_i$.
Let $\pra{h_{-i}}{\vsigma^c_{-i}}{G}$ be the product of the
factors of agents different from $i$,
and let $I_i^*$ be the subset of histories from $I_i$ where only $i$ deviates from $\vsigma^*$.
Since $\vsigma^c$ converges to $\vsigma^*$, if $h' \in I_i^*$,
then $\lim_{c \to \infty}\pra{h'_{-i}}{\vsigma^c_{-i}}{G} = \pra{h'_{-i}}{\vsigma^*}{G}$,
and if $h' \in I_i \setminus I_i^*$, $\lim_{c \to \infty}\pra{h'_{-i}}{\vsigma^c_{-i}}{G} = 0$.
Therefore, for every $h \in \I_i$, we have 
$$
\begin{array}{lll}
&\mu^*( h \mid G,I_i) = 
\frac{\lim_{c \to \infty} \pra{h_{i}}{\sigma_i^c}{G}\pra{h_{-i}}{\vsigma^c_{-i}}{G}}{\lim_{c \to \infty}\sum_{h' \in I_i}\pra{h_{i}'}{\sigma_i^c}{G} \pra{h_{-i}'}{\vsigma^c_{-i}}{G}} &=\\
=&\lim_{c \to \infty} \frac{\pra{h_i}{\sigma_i^c}{G}}{{\pra{h_i}{\sigma_i^c}{G}}} \frac{\pra{h_{-i}}{\vsigma_{-i}^*}{G}}{\sum_{h' \in I_i^*} \pra{h_{-i}'}{\vsigma_{-i}^*}{G}} =
\frac{\pra{h_{-i}}{\vsigma_{-i}^*}{G}}{\sum_{h' \in I_i^*} \pra{h_{-i}'}{\vsigma_{-i}^*}{G}}& =\\
=&\frac{\pra{h_i}{\sigma_i'}{G}}{\pra{h_i}{\sigma_i'}{G}} \frac{\pra{h_{-i}}{\vsigma^*_{-i}}{G}}{\sum_{h' \in I_i^*} \pra{h_{-i}'}{\vsigma^*_{-i}}{G}} =
\frac{\pra{h}{(\sigma_i',\vsigma^*_{-i})}{G}}{\sum_{h' \in I_i} \pra{h'}{(\sigma_i',\vsigma^*_{-i})}{G}}.
\end{array}
$$

This proves the result.
\end{proof}
}

\section{Key Concepts}
In the proofs of our results, we identify multiple key concepts related
to the properties of the protocols and evolving graphs.
We summarize them here for future reference.

\vspace{1mm}\textit{Safe-bounded protocols.} It turns out that
it is relevant to our results to distinguish between safe and non-safe protocols,
and between bounded and non-bounded protocols.
Regarding the former, a protocol $\vsigma$ is \emph{safe} if in every interaction agents either cooperate or
punish each other, thus never omitting messages. Otherwise,
$\vsigma$ is \emph{non-safe}.
\shortv{In the full paper\,\cite{Vilacatr:16a},
we show that}\fullv{We show that}
in general non-safe protocols are not \oape{\G^*}.
For this reason, we restrict some of our positive and negative results to
safe protocols. Regarding the latter, a protocol $\vsigma$
is \emph{bounded} if the duration of punishments is bounded.
Specifically, every protocol can be represented as a state machine\,\cite{Osborne:94};
in a bounded protocol, (1) the number of states is finite, (2) some of those states are cooperation states where
all agents cooperate, and (3) starting from an arbitrary state, the time of convergence
to a cooperation state is bounded. Bounded protocols are simpler to analyse
and more useful in practice, since they ensure that memory is bounded, which is an important
requirement in dynamic networks. For these reasons,
we also restrict part of the analysis to bounded protocols.
\shortv{The full paper also generalizes our results for non-bounded protocols.}

\fullv{
Given an agent $i$, $G \in \G^*$, round-$m$ information set $I_i \in \I_i(G)$,
and $i$-edge $(j,m')$ with $m' > m$, $\vsigma$ defines a 
probability $\pra{j,m'}{\vsigma}{G,I_i}$ of $j$ punishing $i$ at round $m'$.
Let $\mathbb{E}^{\vsigma,\rho}[P_i \mid G]$ be the expected number of 
punishments of $i$ in the next $\rho-1$ rounds, defined as
$$\mathbb{E}^{\vsigma,\rho}[P_i \mid G,I_i] = \sum_{\mbox{$i$-edge }(j,m'): m < m' < m+\rho} \pra{j,m'}{\vsigma}{G,I_i}.$$

Proposition~\ref{prop:preserve} shows that if beliefs are consistent with $\vsigma$ and $\G^*$,
the expected number of punishments may be conserved over time: if
agent $i$ expects to be punished an additional number $c$ of times after deviating in round $m$, and agents follow $\vsigma^*$ afterwards,
then $i$ may still expect to be punished $c-k$ times after round $m+1$, where $k$ is the number of round-$m$ neighbours.

Fix a safe protocol $\vsigma$, $\rho > 0$, belief system $\mu^*$ consistent with $\vsigma$ and $\G^*$,
agent $i$, $G \in \G^*$, strategy $\sigma_i'$ where $i$ does not deviate from $\sigma_i$ in rounds $m' > m$, 
and round-$m$ information set $I_i \in \I_i(G)$ consistent with $\vsigma' = (\sigma_i',\vsigma_{-i})$ and $G$,
such that $\mathbb{E}^{\vsigma',\rho}[P_i \mid G,I_i] \ge \mathbb{E}^{\vsigma,\rho}[P_i \mid G,I_i] + c$ for some $c>0$.
Given round-$m$ history $h$ and action profile $\va$, let $(h,\va)$ be the round-$m+1$ history obtained by
appending $\va$ to the sequence of action profiles in $h$.

\begin{proposition}
\label{prop:preserve}
There is round-$m+1$ $I_i' \in \I_i(G)$ such that
$$\mathbb{E}^{\vsigma,\rho}[P_i \mid G,I_i'] \ge \mathbb{E}^{\vsigma,\rho}[P_i \mid G,I_i] + c - k,$$
where $k$ is the number of round-$m$ neighbours of $i$.
\end{proposition}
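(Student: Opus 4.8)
The plan is to peel off the single round $m$ from $I_i$ and argue that one of the round-$(m+1)$ information sets that the deviation $\sigma_i'$ can produce inherits all but at most $k$ of the excess expected punishments. The starting point is that $\sigma_i'$ agrees with $\sigma_i$ at every round after $m$, so $\vsigma' = (\sigma_i',\vsigma_{-i})$ and $\vsigma$ prescribe exactly the same behaviour from round $m+1$ on and differ only in the round-$m$ action of $i$; and, since the adversary is oblivious, $G^{m+1}$ — hence the set of $i$-edges at round $m+1$ — is the same in both cases.

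First I would rewrite both $\mathbb{E}^{\vsigma,\rho}[P_i \mid G,I_i]$ and $\mathbb{E}^{\vsigma',\rho}[P_i \mid G,I_i]$ by conditioning on the round-$(m+1)$ information set $I_i'$ of $i$ that is reached. With the fixed consistent belief system $\mu^*$, since $I_i$ is consistent with only $i$ deviating from $\vsigma$, Proposition~\ref{prop:onlydev} identifies $\mu^*(\cdot \mid G,I_i)$ with the distribution over $I_i$ induced by $\vsigma'$, and — applied again at each round-$(m+1)$ information set $I_i' \in \I_i(G)$ reachable from $I_i$ under $\vsigma$ or $\vsigma'$ — identifies $\mu^*(\cdot\mid G,I_i')$ with the one-round-forward propagation of $\mu^*(\cdot \mid G,I_i)$. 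Hence each of the two counts splits as an expected round-$(m+1)$ punishment term plus an average, over the reached $I_i'$, of a continuation punishment count $f(I_i')$ that depends on $I_i'$ alone (the continuations of $\vsigma$ and $\vsigma'$ from round $m+1$ coincide). The only places where the two counts can disagree are this round-$(m+1)$ term and the distribution over $I_i'$, and both discrepancies are caused solely by $i$'s round-$m$ action.

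Next comes the combinatorial core: $i$'s round-$m$ action is observed only by the at most $k$ round-$m$ neighbours of $i$, so for every other agent $j$ the distribution of $j$'s round-$(m+1)$ information set — and thus the probability that $j$ punishes $i$ in round $m+1$ — is the same under $\vsigma$ and $\vsigma'$, while each of the $k$ round-$m$ neighbours shifts this probability by at most $1$. Therefore the round-$(m+1)$ expected punishment of $i$ differs by at most $k$ between the two protocols. Subtracting this from the assumed gap $c$, the continuation parts satisfy $\sum_{I_i'}\big(\pra{I_i'}{\vsigma'}{G,I_i} - \pra{I_i'}{\vsigma}{G,I_i}\big)\,f(I_i') \ge c - k$; since $f \ge 0$ and the weights $\pra{\cdot}{\vsigma'}{G,I_i}$ form a probability distribution, some $I_i'$ with $\pra{I_i'}{\vsigma'}{G,I_i} > 0$ — a bona fide round-$(m+1)$ information set in $\I_i(G)$ reachable from $I_i$ — has continuation count at least $c-k$ above the continuation part of $\mathbb{E}^{\vsigma,\rho}[P_i \mid G,I_i]$, which, after matching up the round windows, yields $\mathbb{E}^{\vsigma,\rho}[P_i \mid G,I_i'] \ge \mathbb{E}^{\vsigma,\rho}[P_i \mid G,I_i] + c - k$.

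I expect the delicate step to be precisely this last bookkeeping. On the one hand, one has to invoke Proposition~\ref{prop:onlydev} at $I_i'$ to be sure the continuation count is a function of $I_i'$ only, and not of how $I_i'$ was reached along the deviation path. On the other hand, one has to line up the round windows of $\mathbb{E}^{\vsigma,\rho}[P_i\mid G,I_i']$ (rounds $m+2,\dots,m+\rho$) and of $\mathbb{E}^{\vsigma,\rho}[P_i\mid G,I_i]$ (rounds $m+1,\dots,m+\rho-1$) so that the single round that leaves the window, $m+1$, is exactly the one charged with the loss of at most $k$ — this is where the ``at most $k$ observers of $i$'s round-$m$ move'' estimate is actually consumed.
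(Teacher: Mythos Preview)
Your overall architecture matches the paper's: peel off one round, use Proposition~\ref{prop:onlydev} to make the continuation a function of the round-$(m+1)$ information set alone, and finish with an averaging argument. But there is a real gap in your final bookkeeping, and it comes precisely from the place you flagged as ``delicate.''

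You decompose \emph{both} $\mathbb{E}^{\vsigma,\rho}[P_i\mid G,I_i]$ and $\mathbb{E}^{\vsigma',\rho}[P_i\mid G,I_i]$ as (round-$(m{+}1)$ term) + (continuation), bound the \emph{difference} of the round-$(m{+}1)$ terms by $k$, and obtain $\sum_{I_i'} p'_{I_i'} f(I_i') \ge \sum_{I_i'} p_{I_i'} f(I_i') + c - k$. Averaging then gives some $I_i'$ with $f(I_i') \ge B^{\vsigma} + c - k$, where $B^{\vsigma} = \sum_{I_i'} p_{I_i'} f(I_i')$ is only the \emph{continuation part} of $\mathbb{E}^{\vsigma,\rho}[P_i\mid G,I_i]$. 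Passing from $f(I_i')$ to $\mathbb{E}^{\vsigma,\rho}[P_i\mid G,I_i']$ by adding the new round at the far end of the window is fine, but the target inequality has the \emph{full} $\mathbb{E}^{\vsigma,\rho}[P_i\mid G,I_i] = A^{\vsigma} + B^{\vsigma}$ on the right, not just $B^{\vsigma}$. You are short by exactly $A^{\vsigma}$, the baseline round-$(m{+}1)$ punishment under $\vsigma$, and your ``$k$'' has already been spent on the difference $A^{\vsigma'} - A^{\vsigma}$; you cannot use it a second time to absorb $A^{\vsigma}$.

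The paper sidesteps this entirely by decomposing only the $\vsigma'$-side. It bounds the peeled-off term \emph{absolutely} by $k$ (not the difference), obtaining $\mathbb{E}^{\vsigma',\rho}[P_i\mid G,I_i] \le k + \sum_{I_i'} \pra{I_i'}{\vsigma'}{G,I_i}\,\mathbb{E}^{\vsigma,\rho}[P_i\mid G,I_i']$ directly, and then the hypothesis plus averaging give $\max_{I_i'} \mathbb{E}^{\vsigma,\rho}[P_i\mid G,I_i'] \ge \mathbb{E}^{\vsigma',\rho}[P_i\mid G,I_i] - k \ge \mathbb{E}^{\vsigma,\rho}[P_i\mid G,I_i] + c - k$ with no leftover baseline term to reabsorb. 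The fix to your argument is exactly this: drop the decomposition of the $\vsigma$-side and bound the peeled-off round outright rather than differentially.
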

\begin{proof}
By Proposition~\ref{prop:onlydev}, we can write
\begin{equation}
\label{eq:preserve}
\begin{array}{lll}
  &\mathbb{E}^{\vsigma',\rho}[P_i \mid G,I_i] &= \\
=&\sum_{h \in I_i} \mu^*(h \mid G,I_i)\sum_{\va} \pra{\va}{\vsigma'}{G,h} (P_i(\va) + \sum_{r} \pra{r}{\vsigma}{G,(h,\va)} P_i(r,m)) &\le\\
\le & k + \sum_{h \in I_i} \mu^*(h \mid G,I_i)\sum_{\va} \pra{\va}{\vsigma'}{G,h} \sum_{r} \pra{r}{\vsigma}{G,(h,\va)} P_i(r,m) &=\\
=&k + \sum_{I_i' \in \I_i(G,I_i,m+1)} \sum_{(h,\va) \in I_i'} \pra{I_i'}{\vsigma'}{G,I_i} \mu^*((h,\va) \mid I_i') \sum_{r} \pra{r}{\vsigma}{G,(h,\va)} P_i(r,m) & =\\
=& k + \sum_{I_i' \in \I_i(G,I_i,m+1)}\pra{I_i'}{\vsigma'}{G,I_i} \mathbb{E}^{\vsigma,\rho}[P_i \mid G,I_i'].
\end{array}
\end{equation}
where $\I_i(G,I_i,m+1)$ is the set of round-$m+1$ information sets compatible with $I_i$ 
(for every $I_i' \in \I_i(G,I_i,m+1)$ and $(h,\va) \in I_i'$, $h \in I_i$),
$P_i(\va) \le k$ is the number of punishments of $i$ in $\va$,
$P_i(r,m)$ is the equivalent number of punishments that occur after round $m$ in run $r$,
and $\pra{I_i'}{\vsigma'}{G,I_i}$ is the probability of $I_i'$ being observed given $I_i$.
In the second step, we used the fact that summing over $h \in I_i$ and round-$m'$ action profile $\va$
is equivalent to summing over $I_i' \in \I_i(G,I_i,m+1)$ and $(h,\va) \in I_i'$, and multiplying
by the probability $\pra{I_i'}{\vsigma'}{G,I_i}\mu^*((h,\va) \mid G,I_i')$ of $I_i'$ being reached from $I_i$
and the history being $(h,\va)$. By Proposition~\ref{prop:onlydev}, we have
$$
\begin{array}{lll}
  &\pra{I_i'}{\vsigma'}{G,I_i} \mu^*((h,\va) \mid G,I_i') & =\\
=& \frac{\sum_{(h',\va') \in I_i'} \pra{(h',\va')}{\vsigma'}{G}}{\sum_{h' \in I_i} \pra{h'}{\vsigma'}{G}}
      \frac{\pra{(h,\va)}{\vsigma'}{G}}{\sum_{(h',\va') \in I_i'} \pra{(h',\va')}{\vsigma'}{G}} & =\\
=& \frac{\pra{h}{\vsigma'}{G}}{\sum_{h'  \in I_i} \pra{h'}{\vsigma'}{G}} \pra{\va}{\vsigma'}{G,h} & =\\
=& \mu^*(h \mid G,I_i) \pra{\va}{\vsigma'}{G,h}.
\end{array}
$$

Equation (\ref{eq:preserve}) implies that there is $I_i' \in \I_i(G,I_i,m+1)$ such that
$$
\mathbb{E}^{\vsigma,\rho}[P_i \mid G,I_i'] \ge  \mathbb{E}^{\vsigma',\rho}[P_i \mid G,I_i] - k 
\ge \mathbb{E}^{\vsigma,\rho}[P_i \mid G,I_i]  + c -k.
$$

This concludes the proof.
\end{proof}
}

\vspace{1mm}\textit{Punishment Opportunities.}
Roughly speaking, a punishment opportunity (PO) for an interaction of agent $i$ with $j$
is a later interaction between an agent $l$ and $i$ where $l$ has been informed
of a deviation of $i$ towards $j$ and thus has the opportunity to punish $i$.
This requires the existence of a temporal path (a sequence of causally influenced interactions)
from $j$ to $l$ in the evolving graph such that $i$ cannot
interfere with information forwarded from $j$ to $l$.
Formally, given $G \in \G^*$ and agent $i$, a round-$m$ $i$-edge is a pair $(j,m)$ where $(i,j)$ is an edge in $G^m$.
We say that $j$ causally influences $l$ in $G$ between $m$ and $m'$\,\cite{Kuhn:10},
denoted $(j,m) \leadsto^G (l,m')$, if $m < m'$ and either $j=l$ or there is a $j$-edge 
$(o,m'')$ in $G$ such that $(o,m'') \leadsto^G (l,m')$.
We say that $j$ causally influences $l$ in $G$ without
interference from $i$ between rounds $m$ and $m'$, denoted as $(j,m) \leadsto_i^G (l,m')$,
if the above holds for $o \ne i$. A PO of $i$ for $(j,m)$ in $G$
is an $i$-edge $(l,m')$ such that $(j,m) \leadsto_i^G (l,m')$.

\vspace{1mm}\textit{Evasive strategies.}
Given a protocol $\vsigma^*$, an evasive strategy $\sigma_i'$ for agent $i$
is a strategy where $i$ first deviates from $\sigma_i^*$ by defecting
some neighbours, and then hides this deviation from as many agents
as possible, for as long as possible.\shortv{ In particular, $i$ may defect a neighbour $j$ 
and then behave as if nothing happened,
so that the probability of neighbours of $i$ not causally influenced by $j$
observing each information set is the same, whether $i$ follows $\sigma_i^*$ or $\sigma_i'$.}

\vspace{1mm}\textit{Indistinguishable Evolving Graphs.}
We say that an evolving graph $G$ is indistinguishable from $G'$ to agent $i$ at round $m$
if $i$ acquires the same information about $G$ and $G'$, regardless of the protocol
followed by agents. Formally, given $G \in \G^*$ and round $m' \ge m$, let
$\G_j^{m'}(G)$ be the set of round-$m'$ graphs that provide the same the information to $j$ 
about the round-$m'$ topology as $G^{m'}$, and let $C_i^{m'}(G)$
be the set of agents $j$ such that $(j,m') \leadsto^G (i,m)$;
$G$ is indistinguishable from $G'$ to $i$ at $m$ iff, 
for every round $m' \le m$, $C_i^{m'}(G) = C_i^{m'}(G')$ and
$\G_j^{m'}(G) = \G_j^{m'}(G')$ for all $j \in C_i^{m'}(G)$.

\fullv{
Proposition~\ref{prop:indist} shows the intuitive result that,
if an evolving graph $G$ is indistinguishable from $G'$ to an agent $j$ at round $m'$,
then the probability of $j$ observing each round-$m$ information set under
any protocol is the same, whether the evolving graph is $G$ or $G'$,
even when conditioning on an information set of another agent $i$.
Given a run $r$, let $r_j(m')$ denote the round-$m'$ information set $I_j$
such that $r(m') \in I_j$.

\begin{proposition}
\label{prop:indist}
Given $G,G' \in \G^*$, agents $i$ and $j$, round-$m$ $I_i \in \I_i(G)$, and round-$m'$ $I_j \in \I_j(G)$,
if $G$ and $G'$ are indistinguishable to $j$ at $m'$, then for every protocol $\vsigma$ 
and belief system $\mu^*$ consistent with $\vsigma$ and $\G^*$, we have
$$\sum_{h \in I_i}\mu^*(h \mid I_i) \sum_{r: r_j(m) = I_j} \pra{r}{\vsigma}{G,h}= \sum_{h \in I_i}\mu^*(h \mid I_i) \sum_{r: r_j(m) = I_j} \pra{r}{\vsigma}{G',h}.$$
\end{proposition}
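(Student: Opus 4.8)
The plan is to reduce the claimed equality to a statement about a single history and then prove that statement by a causal‑cone argument, reading off the definition of indistinguishability. Throughout, identify a history with its sequence of action profiles, so that $\pra{r}{\vsigma}{G,h}$ and $\pra{r}{\vsigma}{G',h}$ are both well defined (the evolving graph entering only as the parameter). For $\hat G\in\{G,G'\}$ and $h\in I_i$, set $Q(\hat G,h):=\sum_{r:\,r_j(m')=I_j}\pra{r}{\vsigma}{\hat G,h}$, the probability that, starting from $h$ under $\vsigma$ with evolving graph $\hat G$, the round-$m'$ information set of $j$ equals $I_j$. Since the weights $\mu^*(h\mid I_i)$ appearing on the two sides are identical (no property of $\mu^*$ beyond this is used), it suffices to show $Q(G,h)=Q(G',h)$ for every $h\in I_i$; summing against these weights then gives the proposition.

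To prove $Q(G,h)=Q(G',h)$, I would expand $\pra{r}{\vsigma}{\hat G,h}$ as a product of factors $\sigma_k(a_k^{\bar m}\mid I_k^{\bar m})$, one per agent $k$ and round $\bar m$, where a run $r$ compatible with $h$ is specified by its action profiles in rounds $\bar m\ge m$, and $I_k^{\bar m}$ is the round-$\bar m$ information set of $k$ induced by $h$ and by the actions of $r$ before round $\bar m$. The key point is a \emph{closure property of the causal cone}: the event $\{r_j(m')=I_j\}$ constrains only the topology observations $\G_k^{\bar m}(\hat G)$ and the actions $a_k^{\bar m}$ of agents $k\in C_j^{\bar m}(\hat G)$ for $\bar m\le m'$; and, conversely, for such $k$ the information set $I_k^{\bar m}$ — hence the factor $\sigma_k(a_k^{\bar m}\mid I_k^{\bar m})$ — depends only on observations and earlier actions of agents that causally influence $(k,\bar m)$, which by transitivity of $\leadsto^{\hat G}$ again lie in the cone $C_j^{\cdot}(\hat G)$. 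Consequently, summing out the actions of all agents not in the cone of $(j,m')$, and of all rounds $\ge m'$, contributes a factor $1$, leaving
$$Q(\hat G,h)=\sum\ \prod_{\bar m<m'}\ \prod_{k\in C_j^{\bar m}(\hat G)}\sigma_k(a_k^{\bar m}\mid I_k^{\bar m}),$$
where the actions in rounds $\bar m<m$ are those recorded in $h$, the sum is over all assignments of the remaining cone actions (agents $k\in C_j^{\bar m}(\hat G)$, rounds $m\le\bar m<m'$) that are consistent with $r_j(m')=I_j$, and $I_k^{\bar m}$ is the information set so induced.

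Now I would invoke the hypothesis that $G$ and $G'$ are indistinguishable to $j$ at $m'$: by definition, $C_j^{\bar m}(G)=C_j^{\bar m}(G')$ and $\G_k^{\bar m}(G)=\G_k^{\bar m}(G')$ for every $\bar m\le m'$ and every $k\in C_j^{\bar m}(G)$. Hence the index sets of the products, the domain of the outer sum, and each information set $I_k^{\bar m}$ occurring above coincide for $\hat G=G$ and $\hat G=G'$ (an information set of $k$ being, by definition, the class of histories inducing the same observations, namely the values $\G_k^{\bar m'}(\hat G)$ for $\bar m'\le\bar m$ together with the action records, none of which change). Since $\vsigma$ is fixed, the two right-hand sides are term-by-term equal, so $Q(G,h)=Q(G',h)$, and the proposition follows by summing against $\mu^*(\cdot\mid I_i)$.

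The main obstacle is making the closure property rigorous: a careful induction — on rounds, or on the partial order $\leadsto^{\hat G}$ — showing both that $\{r_j(m')=I_j\}$ is determined by the cone's actions and observations and that the cone is $\leadsto$-downward closed so that a factor $\sigma_k(a_k^{\bar m}\mid I_k^{\bar m})$ for a cone agent never references data outside the cone. Once this is in place, the remainder is a Fubini-style rearrangement of a finite product of sums (peeling off the out-of-cone and late-round factors as $1$) plus the direct translation of the definition of indistinguishability; the identification of information sets across $G$ and $G'$ is immediate from that definition and needs only a remark.
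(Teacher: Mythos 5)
Your proposal is correct and is essentially the paper's own argument: both reduce the claim to showing that the (joint) behaviour of the causal cone $C_j^{\bar m}$ of $(j,m')$ has the same distribution under $G$ and $G'$, proved by a round-by-round induction that uses downward closure of the cone together with the definition of indistinguishability ($C_j^{\bar m}(G)=C_j^{\bar m}(G')$ and identical topology observations for cone agents). Your per-history reduction $Q(G,h)=Q(G',h)$ followed by summing against $\mu^*(\cdot\mid I_i)$ is just a slightly more explicit handling of the conditioning that the paper dispatches with its final appeal to consistency of beliefs.
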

\begin{proof}
Given a protocol $\vsigma$, round $m'' \le m'$, $G'' \in \G^*$,
set $S_{m''}$ of agents $l$ such that $(l,m'') \leadsto^G (j,m')$,
and round-$m''$ information set $I_{S_{m''}} = \cap I_{l \in S_{m''}}$
fixing the observations of agents in $S_{m''}$, 
let $\pra{I_{S_1}}{\vsigma}{G''} = \sum_{h \in I_{S_{m''}}} \pra{h}{\vsigma}{G''}$.
We can show that for every round $m''$ and round-$m''$ $I_{S_{m''}}$, we have
$$\pra{I_{S_{m''}}}{\vsigma}{G} =  \pra{I_{S_{m''}}}{\vsigma}{G'}.$$
At round $1$, fix action profile $\va_{S_1}$. By definition of indistinguishable graphs,
we have that all agents in $S_{1}$ have the same information about the round-$1$ topology in $G$ and in $G'$
and the same set of neighbours.
Therefore, we have
$$\pra{I_{S_1}}{\vsigma^*}{G} = \pra{\va_{S_1}}{\vsigma^*}{G} = \pra{\va_{S_1}}{\vsigma^*}{G'} = \pra{I_{S_1}}{\vsigma^*}{G'}.$$

Now, fix $I_{S_{m''}}$. By the same reasons as above, agents in $S_{m''+1}$
follow every action profile $\va_{S_{m''+1}}$ with a probability that depends only
on $I_{S_{m''}}$ and $\vsigma^*$. Since the probability of $I_{S_{m''}}$ is the same, whether the evolving
graph is $G$ or $G'$, and since $I_{S_{m''}}$ and $\va_{S_{m''+1}}$ completely determine the 
resulting information set $I_{S_{m''+1}}$, the inductive step holds.

By consistency of beliefs and the fact that $j \in S_{m'}$,
the result follows directly from the induction.
\end{proof}
}

\section{Enforcing Accountability with Weakest Adversary}
\label{sec:necessary}
We identify a necessary restriction on $\G$ for enforcing accountability,
and provide a protocol that enforces accountability assuming only this restriction
and that pairwise exchanges are valuable.

\subsection{Need for Timely Punishments}
\label{sec:minimal}
The weakest restriction on $\G^*$ is called \emph{timely punishments}:
it says that, for some bound $\rho > 0$, and for every $G \in \G^*$, agent $i$, and $i$-edge $(j,m)$,
there must be a PO $(l,m')$ of $i$ in $G$ for $(j,m)$ such that $m' < m + \rho$.
The need for this restriction is fairly intuitive.
If the adversary is not restricted by timely punishments, then
there is $G \in \G^*$ and an agent $i$ such that either
(1) for some $i$-edge $(j,m)$, there is no PO of $i$ in $G$ for that $i$-edge,
or (2) there is no limit on the time it takes between an interaction
of $i$ and a corresponding PO. In case (1), $i$ can follow an evasive strategy to ensure that
no agent capable of punishing $i$ learns about the defection,
thus never being punished.
In case (2), $i$ can delay a punishment for an arbitrarily long time;
the problem here is the discount factor $\delta$: for an arbitrarily large constant $d$,
there is an interaction of $i$ such that, if $i$ defects the neighbour and later
follows an evasive strategy, then $i$ is only punished after $d$ rounds,
and the utility loss of this punishment is discounted by $\delta^d$;
for a sufficiently large $d$, the immediate gain of defecting outweighs the loss.
Unfortunately, some real networks do not always admit timely punishments.
For instance, in a file sharing application, agents with similar interest may exchange
files frequently, but occasionally they may interact with agents with different interests.
If agents $i$ and $j$ have different interests and $i$ happens to interact with $j$,
then $j$ may never be able to report a defection of $i$ to agents with interests similar to $i$,
which may be the only timely PO's of $i$.

Theorem~\ref{theorem:citi} shows that
timely punishments are necessary to enforce accountability.\shortv{ The proof is in Appendix~\ref{app:citi}.}
\fullv{
In the proof, an agent $i$ follows a single evasive strategy, which we now define.
In a single evasive strategy, agent $i$ deviates from a strategy $\sigma_i$ 
by defecting a neighbour $j$ and then behaves as if $i$ had not deviated,
such that the probability of neighbours of $i$ not causally influenced by $j$
observing each information set is the same, whether $i$ follows $\sigma_i'$ or $\sigma_i$.

Formally, we define a single evasive strategy $\sigma_i'$ inductively relative 
to a protocol $\vsigma$.
Let $\vsigma' = (\sigma_i',\vsigma_{-i})$.
For every round $m' < m$, $\sigma_i'$ is identical to $\sigma_i$.
At every round-$m$ $I_i \in \I_i(G)$,
with probability $\sigma_i^*(a_i^* \mid G^1)$,
$i$ follows $a_i$ identical to $a_i^*$ except $i$ defects $j$.
After observing round-$m+1$ $I_i'$ consistent
with $i$ following $a_i$, $i$ selects $I_i''$ identical to $I_i'$
except $I_i''$ is compatible with $i$ following $a_i^*$ at round $m$.
Given $m' \ge m$, round-$m'$ $I_i \in \I_i(G)$,
and corresponding $I_i' \in \I_i(G)$ selected by $i$,
$i$ follows $a_i$ with probability $\sigma_i(a_i \mid I_i')$.
After observing round-$m'+1$ $I_i''$ consistent with $i$ following $a_i'$,
$i$ selects $I_i'''$ compatible with $I_i''$ (same observations by $i$ prior to round $m'$),
and compatible with the observations made by $i$ in round-$m'$ in $I_i''$.
We show that, if there is no PO of $i$ for $(j,m)$ in $G$ up to round $m'$,
then every neighbour of $i$ in rounds between $m$ and $m'$ makes observations with the same probability,
whether $i$ follows $\sigma_i'$ or $\sigma_i$. Notice that now runs of $\vsigma''$
also specify the choices of information sets made by $i$ prior to every round.

Fix round-$m$ $I_i \in \I_i(G)$. Given history $h \in I_i$,
round $m' \ge m$, set $S_{m'}$ of agents not causally influenced by $j$ 
between $m$ and $m'$ in $G$, and round-$m'$ information set $I_{S_{m'} \cup i}^* = \cap_{l \in S_{m'} \cup i} I_l^*$ fixing the
observations of agents in $S_{m'}$,
let $Q^{\vsigma'}(I_{S_{m'} \cup \{i\}} \mid G,h)$ be the probability of agents in $S_{m'}$ observing $I_{S_{m'}}^*$
and $i$ selecting $I_{i}^*$. Lemma~\ref{lemma:hide} shows that $Q^{\vsigma'}(I_{S_{m'} \cup \{i\}}^* \mid G,h)$
is the same as the probability of agents in $S_{m'} \cup \{i\}$ observing $I_{S_{m'} \cup \{i\}}$ when they follow $\vsigma$.
Since there is no PO of $i$ for $(j,m)$ up to round-$m'$, every neighbour of $i$ is in $S_{m'}$. 
This implies that every neighbour of $i$ makes the same 
observations between $m$ and $m'$ with the same probability, whether 
$i$ follows $\sigma_i'$ or $\sigma_i$.

\begin{lemma}
\label{lemma:hide}
If there is no PO of $i$ in $G$ for $(j,m)$ up to round $m'$, then for every $h \in I_i$, round $m'' > m$ with $m'' \le m'$,
and round-$m''$ information set $I_{S_{m''} \cup \{i\}}$, we have
\begin{equation}
\label{eq:hide}
Q^{\vsigma'}(I_{S_{m''} \cup \{i\}} \mid G, h) = \sum_{h' \in I_{S}} \pra{h'}{\vsigma}{G,h}.
\end{equation}
\end{lemma}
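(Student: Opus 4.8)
The plan is to prove the identity by induction on $m''$, coupling the execution of $\vsigma'=(\sigma_i',\vsigma_{-i})$ from $h$ with the execution of $\vsigma$ from $h$, and showing that both induce the same joint law on the round-$m''$ observations of the agents in $S_{m''}\cup\{i\}$ --- where for $i$ we record the fictitious information set it re-selects and for every other agent its genuine information set. Two structural facts about the sets $S_{m''}$ drive the argument. First, $S_{m''+1}\subseteq S_{m''}$, and every round-$m''$ neighbour of an agent $l\in S_{m''+1}$ lies in $S_{m''}\cup\{i\}$: if $l$ had a round-$m''$ neighbour $o\neq i$ with $(j,m)\leadsto_i^G(o,m'')$, then $(j,m)\leadsto_i^G(l,m''+1)$, contradicting $l\in S_{m''+1}$. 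Second, because there is no PO of $i$ for $(j,m)$ up to round $m'$, for every $m''\le m'$ each round-$m''$ neighbour $o$ of $i$ lies in $S_{m''}$, since the $i$-edge $(o,m'')$ is not a PO and hence $\neg\bigl((j,m)\leadsto_i^G(o,m'')\bigr)$. Combining the two, for every $m''$ with $m+1\le m''<m'$ every round-$m''$ neighbour of an agent in $S_{m''+1}\cup\{i\}$ lies in $S_{m''}\cup\{i\}$.

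For the base case $m''=m+1$, every agent's round-$m$ information set is fixed by $h$. Under $\vsigma$ the round-$m$ action profile $\va$ occurs with probability $\prod_l\sigma_l(a_l\mid I_l^m)$ ($a_l$ the action of $l$ in $\va$, $I_l^m$ its round-$m$ information set); under $\vsigma'$ every $l\neq i$ acts identically, while $i$ draws $a_i^*$ with probability $\sigma_i(a_i^*\mid I_i^m)$ and then plays $a_i$ equal to $a_i^*$ except for defecting $j$. Since $a_i$ and $a_i^*$ differ only in the component towards $j$, and by the private-edges assumption $i$'s defection is observed only by $j$ at the end of round $m$, with $j\notin S_{m+1}$, the round-$(m+1)$ observations of every agent in $S_{m+1}$ are the same function of $\va$ in both executions; and the set $I_i''$ that $i$ re-selects --- its real round-$(m+1)$ observations with its round-$m$ action towards $j$ rewritten to match $a_i^*$ --- is exactly $i$'s round-$(m+1)$ information set in the $\vsigma$-execution in which $i$ plays $a_i^*$. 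Reparametrising each summand by the action profile in which $i$ plays $a_i^*$ rather than $a_i$ therefore turns $Q^{\vsigma'}(I_{S_{m+1}\cup\{i\}}\mid G,h)$ into $\sum_{h'\in I_{S_{m+1}\cup\{i\}}}\pra{h'}{\vsigma}{G,h}$.

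For the inductive step, assume the identity at round $m''$ with $m+1\le m''<m'$ and fix a round-$(m''+1)$ information set $I_{S_{m''+1}\cup\{i\}}$. Split the left-hand side over the round-$m''$ information sets $I_{S_{m''}\cup\{i\}}$ compatible with it. By the structural facts, whether an action profile played at round $m''$ carries $I_{S_{m''}\cup\{i\}}$ to $I_{S_{m''+1}\cup\{i\}}$ depends only on the round-$m''$ actions of the agents in $S_{m''}\cup\{i\}$; each such $l\neq i$ acts by $\sigma_l(\cdot\mid I_l^{m''})$, and $i$ acts by $\sigma_i$ applied to its fictitious round-$m''$ information set, which by the induction hypothesis is distributed exactly like $i$'s genuine round-$m''$ information set under $\vsigma$. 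Hence the transition probability from $I_{S_{m''}\cup\{i\}}$ to $I_{S_{m''+1}\cup\{i\}}$ agrees under $\vsigma'$ and $\vsigma$; multiplying by the round-$m''$ factor (equal by the induction hypothesis) and summing over $I_{S_{m''}\cup\{i\}}$ yields the identity at round $m''+1$. The manipulation of the sums over compatible information sets is exactly the one carried out in the proof of Proposition~\ref{prop:indist}.

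The argument is essentially bookkeeping; the single place where a hypothesis really bites is the second structural fact. Without the absence of a PO, some round-$m''$ neighbour of $i$ could be causally influenced by $j$ and would hand $i$ an observation inconsistent with the fiction that round $m$ went as $a_i^*$ prescribed, so that $i$'s fictitious information set would no longer track the one it would hold under $\vsigma$ and the coupling would fail. The remaining points are routine: the ``compatible round-$m''$ information sets'' and the action profiles realizing each transition are the same combinatorial objects in the two executions, since they mention only agents in $S_{m''}\cup\{i\}$ and the adversary-chosen, hence action-independent, topology; and always working with $S_{m''}\cup\{i\}$ makes it immaterial whether $i\in S_{m''}$.
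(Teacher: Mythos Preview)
Your proposal is correct and follows essentially the same approach as the paper: induction on $m''$, with the base case $m''=m+1$ handled by noting that $i$'s modified action differs from the drawn $a_i^*$ only in the component towards $j\notin S_{m+1}$, and the inductive step driven by the containment $S_{m''+1}\cup\{i\}\subseteq S_{m''}\cup\{i\}$ together with the fact that all round-$m''$ neighbours of agents in $S_{m''+1}\cup\{i\}$ lie in $S_{m''}\cup\{i\}$. You are simply more explicit than the paper in isolating the two structural facts and in spelling out the coupling between the $\vsigma'$-execution and the $\vsigma$-execution, but the skeleton of the argument is identical.
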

\begin{proof}
First, consider round $m'' = m+1$ and fix $h$.
Clearly, every agent different from $i$ follows each round-$m$ action with the same probability,
whether $i$ follows $\sigma_i'$ or $\sigma_i$. In addition,
when following $\sigma_i'$, $i$ selects $a_i^*$ with probability $\sigma_i(a_i^* \mid I_i)$,
and sends messages according to $a_i^*$ except to $j$.
Since $S_{m+1}$ includes all agents but $j$ and $i$, then every $l \in S_{m+1}$
makes each observation with the same probability whether $i$ follows $\sigma_i$ or $\sigma_i'$,
and $i$ selects each action with the same probability as if $i$ were following $\sigma_i$.

Now, suppose that the hypothesis holds for $m'' > m$ with $m'' < m'$.
Fix round-$m''$ $I_{S_{m''} \cup \{i\}}^1$.
Since $i$ has no PO for $(j,m)$ up to $m''$, for every agent $l \in S_{m''+1} \cup \{i\}$,
$l \in S_{m''} \cup \{i\}$, so $l$ follows each action $a_l^*$ with probability $\sigma_l(a_i^* \mid G, I_l^1)$.
Moreover, every round-$m''$ neighbour of $l$ is in $S_{m''} \cup \{i\}$,
hence, given $I_{S_{m''} \cup \{i\}}$, $l$ makes observations in round $m''$ (if $l=i$, $i$ selects each information set)
with the same probability, whether $i$ follows $\sigma_i$ or $\sigma_i'$.
The result follows directly from the hypothesis.
\end{proof}

We can now prove Theorem~\ref{theorem:citi}.
}

\begin{restatable}{theorem}{theoremciti}
\label{theorem:citi}
If the adversary is not restricted by timely punishments, then there is no protocol that enforces accountability.
\end{restatable}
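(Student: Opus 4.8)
The plan is to prove the contrapositive: assuming the adversary is \emph{not} restricted by timely punishments, I exhibit an agent $i$, an evolving graph $G \in \G^*$, and an information set at which $i$ strictly gains by a single deviation, contradicting the \oape{\G^*} property of any protocol $\vsigma^*$ that would enforce accountability. By the negation of timely punishments, for every bound $\rho$ there is $G \in \G^*$, agent $i$, and an $i$-edge $(j,m)$ such that $i$ has no PO for $(j,m)$ in $G$ before round $m + \rho$. I will instantiate $\rho$ depending on $\delta$, $\beta$, $\alpha$, $\pi$, $n$ large enough that the discounted worst-case future loss from punishments over the first $\rho$ rounds after $m$ is dwarfed by the immediate one-round gain of defecting $j$ (recall defecting costs nothing and saves the send cost, while cooperating yields at most $\beta - 1 - \alpha$ and punishment from $j$ that round can be avoided in $a_i$ anyway); concretely, pick $\rho$ with $\delta^{\rho} \cdot \frac{\beta n}{1-\delta} < \epsilon$ where $\epsilon$ is a lower bound on the per-round gain of defection, mirroring the threshold trick used in the One-Shot-Deviation proof.

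The core device is the single evasive strategy $\sigma_i'$ already set up in the excerpt together with Lemma~\ref{lemma:hide}. At the round-$m$ information set $I_i$ reached by following $\sigma_i^*$, let $i$ follow $\sigma_i'$: defect $j$ in round $m$ (choosing the round-$m$ action $a_i$ identical to the prescribed $a_i^*$ except that $j$ is omitted), then continue exactly as if nothing had happened, re-selecting at each subsequent round an information set compatible with never having deviated. Lemma~\ref{lemma:hide} gives that, since there is no PO of $i$ for $(j,m)$ up to round $m+\rho-1$, every neighbour $l$ of $i$ in rounds $m \le m' < m+\rho$ lies in $S_{m'}$ and hence makes observations with exactly the same probability under $\sigma_i'$ as under $\sigma_i^*$. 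Consequently, the protocol-prescribed behaviour of every such neighbour toward $i$ — in particular whether and when they punish $i$ — is probabilistically identical in the two cases for those $\rho$ rounds, \emph{except} for whatever $j$ itself does in round $m$, which does not matter because $i$ omitted that message (so $i$ neither pays $\alpha$ nor suffers an active punishment from $j$ that round), and because $i$ is the only deviator we may invoke consistency of beliefs (Proposition~\ref{prop:onlydev}) to pin down the belief system.

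Now I compare utilities conditional on $G$ and $I_i$. The difference $u_i(\vsigma^*|_{I_i,a_i} \mid G, I_i) - u_i(\vsigma^* \mid G, I_i)$ decomposes as: (a) the immediate round-$m$ gain, which is at least the saved upload cost minus the at-most-$\beta$ benefit forgone from $j$, bounded below by a positive constant $\epsilon$ in the regime where the benefit from a single neighbour does not dominate — and in any case, by repeating the argument across \emph{all} of $i$'s round-$m$ neighbours if needed, or by choosing the interaction where the net is positive, one secures a strictly positive immediate term; (b) the change in discounted utility over rounds $m+1,\dots,m+\rho-1$, which is $0$ by Lemma~\ref{lemma:hide} since neighbours behave identically and $i$ replays $\sigma_i^*$; (c) the change in discounted utility from round $m+\rho$ onward, whose absolute value is at most $\delta^{\rho} \cdot \frac{\beta n}{1-\delta} < \epsilon$ since in any round $i$ interacts with at most $n-1$ neighbours and the per-round utility swing is bounded by $\beta n$. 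Summing, the deviation yields a strictly positive change, so $\vsigma^*$ violates the One-Shot-Deviation characterization of \oape{\G^*}, contradiction. The same $\epsilon$-versus-$\delta^{\rho}$ balancing handles the second failure mode (unbounded time-to-PO): even if a PO exists, choosing $d$ with $\delta^{d}\beta < (\text{immediate gain})$ and applying the evasive strategy up to the first PO gives the same contradiction.

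\textbf{Main obstacle.} The delicate point is making the immediate round-$m$ gain genuinely positive and bounded below independently of $\rho$ and $\delta$: a single omitted message saves cost $1$ but forgoes benefit $\beta$ from $j$, which could be a net loss if $\beta > 1$. The resolution is that in a safe, accountability-enforcing protocol the prescribed round-$m$ action $a_i^*$ has $i$ cooperating with \emph{all} neighbours; deviating to defect \emph{every} neighbour in round $m$ saves $\dgr(i)$ upload costs while forgoing at most $\dgr(i)\cdot\beta$ in one round — still potentially negative — so instead one must argue via the \emph{aggregate} evasive deviation over a block of rounds, or restrict to the interaction structure where omitting is profitable, or (cleanest) observe that the relevant comparison is against the continuation value: by cooperating, $i$ pays $1$ now to avoid future punishments; if those punishments are pushed past round $m+\rho$, their discounted weight is below $\epsilon$, so $i$'s best response is to defect, and the constant $\epsilon$ can be taken as the cost $1$ saved minus the small $\delta^{\rho}$-discounted residual. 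Getting this accounting airtight — and correctly bounding $P_i$-type terms using the expected-punishment machinery of Proposition~\ref{prop:preserve} rather than crude worst cases — is where the real work lies; everything else is bookkeeping on top of Lemma~\ref{lemma:hide}.
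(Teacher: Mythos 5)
Your overall architecture matches the paper's proof exactly: negate timely punishments to obtain, for every $\rho$, a graph $G$, agent $i$, and $i$-edge $(j,m)$ with no PO before round $m+\rho$; have $i$ play the single evasive strategy; invoke Lemma~\ref{lemma:hide} to conclude that all neighbours of $i$ in rounds $m,\dots,m+\rho-1$ behave identically in distribution; and choose $\rho$ so that the tail $\delta^{\rho}yn/(1-\delta)$ is below the immediate gain. Where your write-up goes wrong is in the accounting of the round-$m$ gain, and this is not a side issue --- it is the one place where the argument has to be exactly right, and your ``main obstacle'' paragraph shows you did not find the resolution.

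You assert in step (a) that defecting $j$ ``forgoes the at-most-$\beta$ benefit from $j$'' in round $m$, and then spend the obstacle paragraph trying to repair a possibly negative immediate term (aggregating over neighbours, restricting to profitable interactions, etc.). None of that is needed, and none of it would work, because the premise is false: in this model the two endpoints of an edge choose their round-$m$ individual actions \emph{simultaneously} and only learn each other's actions at the end of the round. Hence $j$'s round-$m$ action towards $i$ --- and therefore the benefit $\beta$ that $i$ receives and the download cost $\alpha$ that $i$ pays --- is literally the same random variable whether $i$ cooperates with or defects $j$ in round $m$. The only round-$m$ difference is that $i$ saves the send cost $1$; there is no forgone benefit to offset. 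Combined with Lemma~\ref{lemma:hide} for rounds $m+1$ through $m+\rho-1$, the paper gets the clean bound $u_i(\vsigma^*\mid G,I_i) - u_i(\vsigma'\mid G,I_i) < -1 + \delta^{\rho}yn/(1-\delta) < 0$ with no case analysis. Your closing remark that ``the relevant comparison is against the continuation value'' gestures at this, but it contradicts your own step (a), and you never identify simultaneity as the reason the immediate term is exactly $+1$. Two smaller points: punishment avoidance is not an available action in the general model (only in valuable exchanges), so you cannot lean on ``punishment from $j$ that round can be avoided''; and the machinery of Proposition~\ref{prop:preserve} is not needed here --- the crude worst-case bound $yn$ per round after $m+\rho$ is all the paper uses.
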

\fullv{
\begin{proof}
The proof is by contradiction. Suppose that $\vsigma^*$ enforces accountability.
Suppose also that the adversary is not restricted by timely punishments.
For every $\rho > 0$, we can fix $G \in \G^*$, agent $i$, and $i$-edge $(j,m)$
such that, for every PO $(l,m')$ of $i$ for $(j,m)$ in $G$, we have $m' - m \ge \rho$.
This implies that, for every round $m' \ge m$ with $m' < m+\rho$,
and every $i$-edge $(l,m')$ with $l \ne j$, $(j,m) \leadsto^G (l,m')$ is false, since the
existence of a PO $(l,m')$ of $i$ for $(j,m)$ in $G$ with $m' < m+\rho$ is equivalent
to the existence of one such $(l,m')$ with $(j,m) \leadsto^G (l,m')$.
Let $\rho$ be such that $y \delta^{\rho}/(1-\delta) < 1$,
where $y$ is the (bounded) maximum difference between utilities of a single interaction.
Since $\sigma_i^*$ enforces accountability, $i$ is expected to cooperate at 
every round-$m$ information set $I_i \in \I_i(G)$ consistent with $\vsigma^*$ and $G$.
Here, $i$ may follow a single evasive strategy $\sigma_i'$ 
relative to $\vsigma^*$, $G$, and $(j,m)$ \shortv{(See Appendix~\ref{app:evasive})}. 
Let $\vsigma' = (\sigma_i',\vsigma^*_{-i})$.
By Lemma~\ref{lemma:hide}, for every round $m' < m+\rho$ with $m' \ge m$,
the round-$m'$ neighbours of $i$ observe each round-$m'+1$ information
set with the same probability, whether $i$ follows $\sigma_i'$ or $\sigma_i^*$.
Since these information sets specify the individual actions followed by and taken towards $i$,
the expected utility of $i$ in round $m'$ is the same,
whether agents follow $\vsigma'$ or $\vsigma^*$,
except $i$ avoids at least the cost $1$ of defecting $j$ in round $m$.
Moreover, the maximum utility difference in every round $m' \ge m+\rho$ is $y n$.
Therefore, we have
$$u_i(\vsigma^* \mid G,I_i) - u_i(\vsigma' \mid G,I_i) < -1 + \delta^{\rho} yn/(1-\delta) < 0.$$

This contradicts the assumption that $\vsigma^*$ enforces accountability, concluding the proof.
\end{proof}
}

\subsection{A \oape{\G^*} for Valuable Pairwise Exchanges with Timely Punishments}
\label{sec:valuable}
We now describe a safe-bounded protocol $\vsigma^{\val}$ that enforces accountability
in a setting of valuable pairwise exchanges (defined below),
assuming an adversary restricted by timely punishments.
In valuable pairwise exchanges, we assume that agents can perform proportional punishments
and punishment avoidance actions. This implies that agents can punish each other
with a cost proportional to $\pi$, and they can avoid such punishments at the expense
of not receiving the value from their neighbours.
We also assume that the benefit-to-cost ratio of receiving/sending a value is high.
Specifically, we assume that $\beta > 1 + \alpha + \rho\pi$ and $\pi > n$,
where $\rho$ is the constant in the definition of timely punishments.
\shortv{In Appendix~\ref{app:penances},}\fullv{In Section~\ref{sec:valuable},}
we discuss exchanges that can be modelled in this way,
assuming that agents exchange multiple messages per round
and neglect download costs ($\alpha = 0$).

We now describe $\vsigma^{\val}$.
In every round, agents exchange values and monitoring information that
includes accusations revealing defections. These accusations are disseminated across
the network, and used by agents to adjust the cost of a punishment applied to neighbours.
More precisely, let $\rho$ be as in the definition of timely punishments.
At the beginning of round $m$, agent $i$ keeps for every agent $l$ and
round $m' \in \{m-\rho \ldots m-1\}$ a report indicating whether
$l$ defected some neighbour in round $m'$. We call a report
indicating a defection an accusation.
In a round-$m$ action, agents $i$ and $j$ exchange
their reports and values, and apply punishments proportional
to the number of accusations against each other.
At the end of round $m$, $i$ updates its reports relative
to every $l \ne j$ and $m' < m$ if $j$ does not defect $i$,
otherwise $i$ emits an accusation against $j$ for round $m$.

Theorem~\ref{theorem:penance} shows that $\vsigma^{\val}$ enforces accountability
in valuable pairwise exchanges\shortv{ (see Appendix~\ref{app:penances})}.
We make two minimal assumptions: (1) the adversary is restricted by timely punishments,
which as we have seen is strictly necessary, and (2) agents are \emph{sufficiently patient},
i.e., the factor $\delta$ is sufficiently close to $1$, which is a standard assumption in proofs
of Folk Theorems, and is necessary for future losses of punishments to always outweigh
the gains of deviating in the present (recall that future losses are discounted to the present by $\delta$).
The proof shows that a defection of $i$ is always matched by a punishment that occurs
after at most $\rho$ rounds. $i$ gains at most $n$ by defecting
neighbours in a round, while losing at least $\delta^{\rho} \pi$.
Given that $\beta > 1+\alpha + \rho\pi$ and $\pi > n$,
if $\delta$ is sufficiently close to $1$, then the loss outweighs the gain.
Moreover, if $i$ avoids the cost of a punishment and of sending and receiving messages (at most $1 + \alpha + \rho \pi$),
$i$ does not receive the value, thus losing $\beta$.
Since $i$ can never influence the reports relative to itself, 
$i$ never gains by deviating.

\begin{restatable}{theorem}{theorempenance}
\label{theorem:penance}%
If the adversary is restricted by timely punishments and agents are sufficiently patient,
then $\vsigma^{\val}$ enforces accountability in valuable pairwise exchanges.
\end{restatable}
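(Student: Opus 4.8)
The plan is to invoke the One-Shot-Deviation Principle (the Proposition proved earlier), so that it suffices to show that no agent $i$ can gain from a single one-shot deviation $\vsigma^{\val}|_{I_i,a_i}$ at any round-$m$ information set $I_i\in\I_i(G)$, for any fixed $G\in\G^*$. First I would set up the state-machine description of $\vsigma^{\val}$: the relevant state at round $m$ is, for each agent $l$, the vector of reports for rounds $m-\rho,\dots,m-1$ (defection or not), and the cooperation state is the one where every such report is ``no defection''. I would then exploit the key structural fact that $i$ can never influence the reports that other agents hold \emph{about} $i$: by the definition of a PO and the timely-punishments restriction, for every $i$-edge $(j,m)$ where $i$ defects $j$, there is a PO $(l,m')$ with $m'<m+\rho$, and the causal path $(j,m)\leadsto_i^G(l,m')$ avoids $i$, so $i$ cannot suppress the accusation before it reaches a punisher. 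This is the analogue of the ``single deviation is preserved'' phenomenon, and it is the crux: I would phrase it as a lemma saying that for each neighbour $j$ that $i$ defects in round $m$, the expected discounted punishment loss incurred by $i$ is at least $\delta^{\rho}\pi$.

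Next I would do the case analysis over the possible one-shot deviations $a_i$ at $I_i$, comparing each against the prescribed cooperation action $a_i^*$. There are essentially three cases per neighbour $j$. (i) $i$ defects $j$ instead of cooperating: $i$ saves the sending cost $1$ (and possibly forgoes receiving, hence forgoes $\beta$ but also saves $\alpha$), for a one-round gain of at most $n$ summed over all neighbours, against a future loss of at least $\delta^{\rho}\pi$ per defected neighbour from the guaranteed PO; since $\pi>n$ and $\delta$ is close enough to $1$ that $\delta^{\rho}\pi>n$, this is a net loss. (ii) $i$ avoids a punishment it is supposed to endure (punishment avoidance towards a $j$ who is entitled to punish $i$): $i$ saves at most the round's communication and punishment cost $1+\alpha+\rho\pi$, but forgoes the value $\beta$; since $\beta>1+\alpha+\rho\pi$, this is a net loss. (iii) $i$ fails to punish a neighbour it is supposed to punish, or emits a spurious accusation, or otherwise tampers with reports about \emph{others}: since reports about $i$ are unaffected, $i$'s own future punishments are unchanged, so the only effect is the immediate communication cost, which $i$ never saves by punishing (punishing costs $1$) — and crucially not punishing does not reduce $i$'s own future losses, so there is no gain; emitting spurious accusations against $l\ne i$ likewise costs $i$ nothing and gains $i$ nothing because $i$'s own reports are untouched. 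I would also note that in each case, after the one-shot deviation agents return to $\vsigma^{\val}$, and by the structural lemma the continuation from round $m+1$ onward is exactly as if $i$ had cooperated except for the already-triggered POs, so the bookkeeping closes.

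The remaining ingredient is to pin down ``sufficiently patient'': choose $\delta_0<1$ such that for all $\delta\in(\delta_0,1)$ we have $\delta^{\rho}\pi>n$ and $\delta^{\rho}\beta>1+\alpha+\rho\pi$ (the latter is where we use $\beta>1+\alpha+\rho\pi$ strictly, absorbing the discount). Then assembling the per-neighbour inequalities and summing over neighbours gives $u_i(\vsigma^{\val}\mid G,I_i)\ge u_i(\vsigma^{\val}|_{I_i,a_i}\mid G,I_i)$ for every $I_i$ and every $a_i$, and the One-Shot-Deviation Principle yields that $\vsigma^{\val}$ is a \oape{\G^*}; together with the fact that in $\vsigma^{\val}$ agents cooperate until a deviation occurs, $\vsigma^{\val}$ enforces accountability.

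I expect the main obstacle to be the structural lemma in case (iii) combined with the belief-system subtlety: one must argue that even off the equilibrium path — at an information set $I_i$ reached only after $i$ has already deviated — the consistent beliefs $\mu^*$ still assign $i$ the belief that only $i$ deviated (this is exactly Proposition~\ref{prop:onlydev}), so that $i$'s estimate of its own pending punishments is governed by the reports that have already been set in motion and cannot be revised by further manipulation. Handling the interaction between ``$i$ cannot influence reports about itself'', the $\rho$-round window, and the discounting uniformly over all $I_i$ (including deeply off-path ones) is the delicate part; the per-round utility comparisons themselves are routine once that is in place.
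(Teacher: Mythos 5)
Your proposal is correct and follows essentially the same route as the paper's proof: invoke the One-Shot-Deviation Principle, establish that $i$ cannot influence the accusations held about itself (the paper's Facts~1--2), bound the per-round utility differences (Fact~3), use the timely PO to guarantee a loss of at least $\delta^{\rho}\pi$ per defection within $\rho$ rounds (Fact~4), and close with $\Delta \ge \delta^{\rho}\pi - n \ge 0$ for $\pi > n$ and $\delta$ near $1$, with punishment avoidance ruled out by $\beta > 1+\alpha+\rho\pi$. The only cosmetic difference is that the paper treats the punishment-avoidance comparison undiscounted (it is an in-round trade-off), so your extra $\delta^{\rho}$ factor there is unnecessary but harmless.
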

\fullv{
\begin{proof}
We show that $\vsigma^{\val}$ is a \oape{\G^*}.
Fix $G \in \G^*$, agent $i$, round-$m$, round-$m$ information set $I_i \in \I_i(G)$, $h \in I_i$,
and actions $a_i^*,a_i'$ such that $\sigma_i^{\val}(a_i^*|I_i) > 0$.
Fix $G \in \G^*$, agent $i$, round-$m$, round-$m$ information set $I_i \in \I_i(G)$, $h \in I_i$,
and actions $a_i^*,a_i'$ such that $\sigma_i^{\gen}(a_i^*|I_i) > 0$.
Let $\Delta = \eu{i}{\vsigma^1}{h,G} - \eu{i}{\vsigma^2}{h,G}$,
where $\vsigma^1 = \vsigma^{\gen}|_{I_i,a_i^*}$, $\vsigma^2 = \vsigma^{\gen}|_{I_i,a_i'}$,
and $\vsigma^{\gen}|_{I_i,a_i}$ represents the protocol that differs from $\vsigma^{\gen}$
exactly in that $i$ deterministically follows $a_i$ at $I_i$.
\shortv{In the full paper,
we show that the result known
as One-Shot-Deviation principle\,\cite{Hendon:96} applies to the solution concept \oape{\G^*}.
This principle states that}\fullv{By the One-Shot-Deviation principle,} it suffices to show that $\Delta \geq 0$ to prove the result.
For every $m' \ge m$, let $\Delta^{m'}$
be the difference between the expected utility of $i$ in round $m'$,
between agents following $\vsigma^1$ and $\vsigma^2$, i.e.,
$$\Delta^{m'}  =  \sum_{r} \pra{r}{\vsigma^1}{G,h} u_i(\va^{m'}) - 
 \sum_{r} \pra{r}{\vsigma^1}{G,h} u_i(\va^{m'}),$$
where $\va^{m'}$ is the round-$m'$ action profile in $r$.
We show the following facts, where again we say that
a fact holds after $i$ follows $\va_i'$ or $\va_i^*$
if the fact holds for all runs $r^1$ and $r^2$ in $G$ of $\vsigma^1$ and $\vsigma^2$, respectively:

\begin{enumerate}[(a)]
  \item \emph{Fact 1}: If $m'\ge m$ and $i$ follows $a_i'$, then $j$ has an accusation 
  against $i$ for round $m$ at the end of $m'$ iff $i$ defects a round-$m$ neighbour $l$ and $(l,m) \leadsto_i^G(j,m'+1)$.
  \begin{proof}
  At the end of round $m$, an agent $j$ has an accusation against $i$ iff $i$ defects $j$.
  The set of neighbours that hold accusations are exactly the set of agents $j$ such that $(l,m) \leadsto_i^G (j,m+1)$,
  where $l$ is a round-$m$ neighbour that $i$ defects.  
  Continuing inductively, at the end of every subsequent round $m'$, an agent $j$ has an accusation
  if $j$ already had that accusation or received it from $l \neq i$. Either way, by the induction hypothesis,
  this is only the case if $i$ defected some round-$m$ neighbour $l$ and $(l,m) \leadsto_i^G (j,m'+1)$.
    \end{proof}
   
  \item \emph{Fact 2}: $j$ has an accusation against $i$ for $m'' \neq m$ at the end of round $m' \ge m$
  after $i$ follows $a_i^*$ iff the same is true after $i$ follows $a_i'$:
  \begin{proof}
  When $m' = m$, agents ignore  the accusations relative to $i$ and $m''$ sent by $i$ in round $m$,
  so the actions $a_i^*$ and $a_i'$ have the same impact on the accusations held by other agents after $m'$.
  The same is true in every subsequent round $m' > m$, except when $m' = m''$. In this case, since
  $i$ follows $\sigma_i^{\val}$ in round $m'$, both after following $a_i^*$ and $a_i'$ in round $m$,
  $i$ defects no neighbour and thus no agent will have an accusation against $i$ relative to $m''$ at the end of $m'$.
  \end{proof}
  
  \item \emph{Fact 3}: If $m' > m$, then $\Delta^{m'} \ge 0$.
  \begin{proof}
  Since $i$ follows $\sigma_i^{\val}$ after $m$, in round $m'$, $i$ either cooperates with or punishes every
neighbour, and these actions have a fixed cost $1$, whether $i$ follows $a_i^*$ or $a_i'$.
Moreover, $i$ always receives the values from all neighbours, obtaining a benefit $\beta$
and incurring a fixed cost $\alpha$. So, the utility in round $m'$ of $i$ following $a_i'$ may differ from that of $i$ following $a_i^*$
only if the neighbours of $i$ have a different number of accusations against $i$.
By Facts~1 and 2, the number of accusations when $i$ follows $a_i'$ is at least as high
as when $i$ follows $a_i^*$. This implies that $\Delta^{m'} \ge 0$.
  \end{proof}
  
  \item \emph{Fact 4}: If $i$ defects a neighbour in $a_i'$, then there exists a round $m' < m+\rho$
  such that $\Delta^{m'} \ge \pi$.
  \begin{proof}
  Suppose that $i$ defects a neighbour $j$ in $a_i'$.
 There is a PO $(l,m')$ of $i$ for $(j,m)$ in $G$ with $m' - m < \rho$,
  such that $(j,m) \leadsto_i^G (l,m')$. By Fact~1, $l$ has an accusation 
  against $i$ for round $m$ when $i$ follows $a_i'$
  and no accusation relative to $m$ when $i$ follows $a_i^*$.
  As in the proof of Fact~3, $i$ incurs the same costs of cooperating or punishing neighbours and obtains the same 
  benefit and incurs the same cost of receiving values, whereas $i$ incurs the additional cost $\pi$ of being punished by $l$
  when $i$ follows  $a_i'$. Therefore, $\Delta^{m'} \ge \pi$.
  \end{proof}
\end{enumerate}

We conclude by showing that $\Delta \ge 0$.
Suppose that in $a_i'$ agent $i$ avoids punishments from $x$ neighbours
and defects $y$ neighbours. By avoiding a punishment from $j$, $i$
avoids at most the cost $1 + \alpha +  \rho \pi$, but loses the benefit $\beta$,
thus the variation in the utility of these interactions is $\beta - 1- \alpha - \rho\pi > 0$.
By defecting $j$, $i$ gains at most $1$. This implies that $\Delta^m \ge -y$.
By Facts~3 and 4, we have
$$\Delta \ge \delta^{\rho} \pi - y \ge \delta^{\rho}\pi - n.$$
By $\pi > n$, if $\delta$ is sufficiently close to $1$, $\Delta \ge 0$,
as we intended to prove.
\end{proof}

\subsection{Examples of Valuable Pairwise Exchanges}
We now provide examples of interactions that meet the restrictions of valuable pairwise exchanges.
First, we consider an interaction without cryptography. Then, we show how cryptography
can decrease the number of restrictions on the utility.

\paragraph{Without cryptography.} Consider that every neighbouring 
agents $i$ and $j$ can exchange messages in three phases per round
and neglect download costs, which may be the case if bandwidth is asymmetric.
In phase 1, $i$ send reports indicating the number $c_j$ of accusations against $j$, and similarly
$j$ may send the equivalent number $c_i$ of accusations against $i$. In phase 2, $i$ and
$j$ send $c_i$ and $c_j$ penance messages that cost $\pi$ each, respectively.
In phase 3, they exchange their values only if both agents have sent all required information in previous phases.
In this setting, agents can adjust the size of penances such that $\pi > n$.
Suppose that the cost of sending phase 1, 2, and 3 messages is $1$.
If $i$ sends all requested information in the first two phases, $i$
does not avoid the punishment of sending the $c_i$ penance messages,
while avoiding at most the cost $1$ of defecting $j$ by not sending the value in phase 3.
As shown by Theorem~\ref{theorem:penance}, $i$ does not gain from defecting.
$i$ can avoid the punishment by not sending the requested messages in the first two phases,
but in this case $i$ does not receive the value from $j$ in phase $3$,
also not gaining from this deviation.

\paragraph{With cryptography.} The previous interactions require knowledge of $\beta$ 
in order to appropriately define the size of a penance message,
and require $\beta = \Omega(n\rho)$, making it restrictive in practice if $n$ is large.
We can mitigate these problems using the technique of delaying gratification from~\cite{Li:06}.
Instead of exchanging values in phase 3 of round $r$, neighbours $i$ and $j$ 
exchange the values ciphered with private keys in phase $1$ along with monitoring information.
In phase 2, they still send the penances and in phase 3 they exchange the keys. 
As before, if messages are omitted in phase 1 or 2, then the agents send no further messages.
Let $\alpha^\kappa$ be the cost of sending the key.
This mechanism only requires  $\pi > \alpha^k$ and $\beta > 1 + n\rho\pi$.
Thus, it suffices that $\beta > 1  +  \epsilon$, where $\epsilon = n\rho\alpha^\kappa$.
If $\alpha^\kappa \ll 1$, then $1+ \epsilon$ is close to the optimal restriction of $\beta > 1$.
It is also possible to adjust $\pi$ without knowing $\beta$:
we can define the size of a penance to be larger than that of a key, but smaller than the value.
The arguments that show that this mechanism is a \oape{\G^*} are the same as Theorem~\ref{theorem:penance}.
}

\section{Enforcing Accountability in General Pairwise Exchanges}
\label{sec:sufficient}
We now address the problem of enforcing accountability without making the 
assumption that pairwise exchanges are valuable. We consider the least restrictive
assumptions about utility and individual actions available to agents.
First, we consider the smallest benefit-to-cost ratio of receiving/sending values.
We still need the benefit $\beta$ to be larger than the total costs $1+ \alpha$ of sending and receiving messages,
or else agents would not have incentives to engage in exchanges.
Second, we do not assume that agents neglect download costs ($\alpha \ge 0$),
nor that they communicate in multiple steps. Therefore, proportional punishments
and punishment avoidance are not available actions. We still assume
that agents can defect, cooperate, or (actively) punish other agents.
Since there is a trivial one-shot implementation of active punishments,
where an agent sends garbage of the size of the value instead of the value,
we consider that a punishment causes a utility loss of $\pi \ge \beta$.

With this in mind, we say that a protocol enforces accountability in general
pairwise exchanges if it is an equilibrium for all utilities such that $\beta > 1+\alpha$,
and only has agents following the three aforementioned actions.
\shortv{We show in the full paper}\fullv{First, we show}
that non-safe protocols cannot in general enforce accountability.
\shortv{In this section,}\fullv{Then,} we identify a necessary and
a sufficient condition for enforcing accountability with safe-bounded protocols
in general pairwise exchanges. The results are generalized for non-bounded protocols
in the full paper.

\fullv{
\subsection{Need for Safe Strategies}
\label{sec:problems}
Existing proofs of Folk Theorems devise protocols where agents
provide the minimax utility in punishments\,\cite{Mailath:07}.
In pairwise exchanges, the minimax utility is $0$, and is obtained when neighbouring agents defect each other.
With such strategies, deviations during punishments provide no gain and hence
require no further punishments.
We identify two scenarios that illustrate problems of applying minimax punishments in dynamic networks,
which occur in both types of exchanges considered in this paper.
The first scenario occurs when an agent $i$ deviates and in a later interaction with $j$
agent $i$ does not know whether $j$ will punish $i$ (in which case $i$ is allowed to defect)
or expects $i$ not to defect and triggers additional punishments if $i$ defects.
In the second scenario, an agent $j$ knows that a defection between $l$ and $i$ will occur,
and $j$ knows that if he defects another agent, then any monitoring information triggering
future punishments will be lost once $l$ defects $i$.\shortv{
Due to lack of space, we defer the second scenario to Appendix~\ref{app:unsafe}.}
Although these scenarios do not imply that we can never use non-safe strategies,
they suggest that similar problems may arise in general,
thus non-safe strategies may require complicated mechanisms 
for distinguishing between exchanges where there may and may not be defections.
Since it is not clear that such distinction can be made with bounded memory
(agents may need to distinguish between evolving graphs),
we focus on safe strategies in the remainder of the paper.

\begin{figure}
\begin{center}
 \includegraphics[scale=0.2]{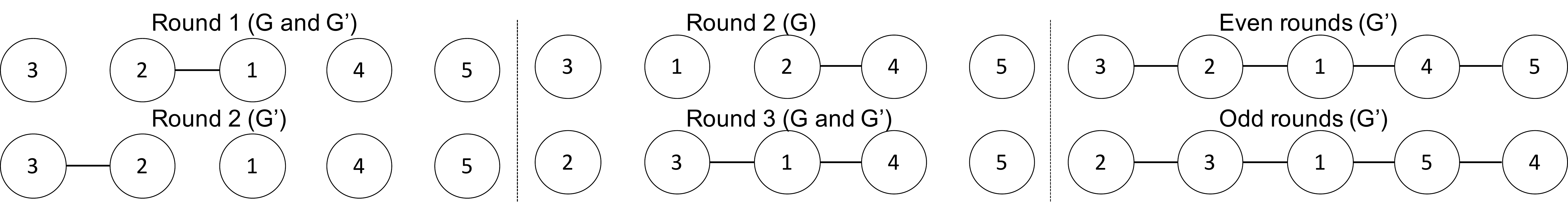}
\end{center}
\vspace{-0.2in}
\caption{Ambiguous PO - $G$ and $G'$ are indistinguishable to 1 at round $3$.}
\label{fig:ambig}
\end{figure}

\subsubsection{Ambiguous Punishment Opportunities}
Consider the scenario depicted in Fig.~\ref{fig:ambig}.
There are five agents numbered 1 to 5,
and two evolving graphs $G$ and $G'$ from $\G^*$.
In round~$1$, agent~1 interacts with agent~2, both in $G$ and $G'$.
In round~$2$, agent~2 interacts with agent~4 in $G$,
whereas 2 interacts with agent~3 in $G'$.
Finally, in round~$3$ agent~1 interacts with agents~3 and~4 both in $G$ and $G'$.
Suppose that, if the adversary selects $G$, then 4 is the only agent capable
of punishing a defection of 1 towards 2 in a timely fashion,
and consider that after round~$3$, $G'$ forms a line
where the edges to agent~1 always form a cut between $\agents_1 = \{2,3\}$ and $\agents_2=\{4,5\}$,
and the set of neighbours of 1 alternates between $\{3,4\}$ and $\{2,5\}$.
Notice that $G'$ admits timely PO's for every agent $i$ and $i$-edge.
If the only information about the graph topology available to agents
in each round is the set of their neighbours, then 1 cannot distinguish
between $G$ and $G'$ at round $3$.
Suppose that there is a protocol $\vsigma^*$ that enforces
accountability and requires a mutual defection between 1 and 4 if 1 defects 2.
In $G$, since $\vsigma^*$ is a \oape{\G^*}, 1 obtains a higher utility for defecting 4.
Now, consider that the adversary selects $G'$, and consider the two cases where
(a) 1 defects 2 in round $2$ and defects 4 because 1 is following $\sigma_1^*$,
and (b) 1 does not deviate from $\sigma_1^*$ in round $2$ and deviates in round $4$ by defecting 4.
Agent 1 can devise an evasive strategy where 
1 cooperates with 4 and 5 and hides the defection towards 2 from them,
such that agents 4 and 5 never punish 1. On the other hand, if $i$ follows $\sigma_i^*$
and defects 4 in round~$3$, then either 4 or 5 must later punish 1, because
they cannot distinguish between cases (a) and (b), and in case (b)
they are the only agents capable of punishing 1 for the defection.
This implies that 1 obtains a strictly higher utility of cooperating with 4
rather than following $\sigma_i^*$ and defecting 4,
contradicting the assumption that $\vsigma^*$ enforces accountability.

Formally, given agents $i,j$ and $G,G' \in \G^*$,
we say that there is a round-$m$ partition of $G'$ regarding $i$ and $j$ if 
there is a partition $(\agents_1,\agents_2)$ of $\agents \setminus \{i\}$ satisfying
(1) $j \in \agents_2$, (2) for every two $i$-edges $(l,m')$ and $(o,m'')$ in $G'$,
$(l,m') \leadsto_i^{G'} (o,m'')$ only if $l$ and $o$ are in the same set of the partition,
and (3) for every $i$-edge $(l,m')$ with $m' < m$, we have $l \in \agents_1$.
We say that $i$-edge $(j,m)$ in $G \in \G^*$ is an ambiguous PO
iff there is $G' \in \G^*$ such that $G'$ is indistinguishable to $i$ from $G$ at $m$ and
there is a round-$m$ partition $(\agents_1,\agents_2)$ of $G'$ regarding $i$ and $j$.
We say that $\vsigma^*$ has ambiguous defections iff there is $G \in \G^*$, agent $i$,
and ambiguous $i$-edge $(j,m)$ in $G$ such that $\sigma_i^*$ requires $i$ to defect $j$
at some information set $I_i \in \I_i(G)$ consistent with only $i$ deviating from $\vsigma^*$,
i.e., there is $\sigma_i' \ne \sigma_i^*$ such that $I_i$ is consistent with $(\sigma_i',\vsigma^*_{-i})$ and $G$.

Theorem~\ref{theorem:ambiguous} proves that a $\vsigma^*$ that has ambiguous defections
cannot be a \oape{\G^*}. The proof uses the notion of dual evasive strategy,
which we now define. In a dual evasive strategy $\sigma_i''$, $i$ cooperates with an agent $j$ in round $m$
where $(j,m)$ is an ambiguous PO\shortv{ (see Appendix~\ref{app:ambig})},
and tries to convince agents from the two disjoint 
sets $\agents_1$ and $\agents_2$ that $i$ is following two different
strategies $\sigma_i'$ and $\sigma_i^*$, respectively,
where, in $\sigma_i'$, $i$ defects some agents from $\agents_1$
and follows $\sigma_i^*$ afterwards.
This way, $i$ hides the defections from the agents in $\agents_2$,
while not revealing to agents in $\agents_1$ that $i$ is hiding
the defections from agents in $\agents_2$.
The idea of $\sigma_i''$ is similar to a single evasive strategy:
after round $m$, $i$ sends messages to agents in $\agents_2$
as if $i$ did not defect any agent, and sends messages to agents in $\agents_1$
as if $i$ were following $\sigma_i''$.

Specifically, suppose that $i$-edge $(j,m)$ is an ambiguous PO,
and fix corresponding $G$, $G'$, $\vsigma^*$, $\vsigma'$, and $I_i$,
such that $\vsigma^*$ has an ambiguous defection, $I_i$ is a round-$m$ information
set consistent with $\vsigma'$ and $G$, and in $\vsigma'$ only deviates from $\sigma_i^*$,
and only deviates before round $m$.
We define a dual evasive strategy $\sigma_i''$ as follows.
For every round $m' < m$ or information set $I_i'$ incompatible with $I_i$,
$\vsigma''$ is identical to $\vsigma^*$.
The definition of $\sigma_i''$ for the remaining information sets is inductive.
At every round $m' \ge m$, $\sigma_i''$ specifies a probability
distribution over a pair of round-$m$ histories $(h^1,h^2)$ 
and round-$m$ actions followed by $i$. Intuitively, $h^1$ is used to emulate
the behaviour of agents in $\agents_1$ and $h^2$ is used to emulate the behaviour of agents in $\agents_2$.

Consider round $m$. Since $G'$ is indistinguishable from $G$ at $m$,
we have $I_i \in \I_i(G')$ and $I_i$ is consistent with $\vsigma'$ and $G'$.
Agent $i$ selects a round-$m$ history $h^1 \in I_i$ 
with probability $\pra{h^1}{\vsigma^*}{G'}/\sum_{h' \in I_i'} \pra{h'}{\vsigma^*}{G}$,
where $I_i'$ is some round-$m$ information set consistent with $\vsigma^*$ and $G$.
In addition, $i$ selects $h^2$ with probability $\pra{h^2}{\vsigma'}{G'}/\sum_{h' \in I_i} \pra{h'}{\vsigma^*}{G'}$.
Now, consider round $m' \ge m$, and suppose that $\vsigma''$ is defined for every previous round.
Fix a pair $(h^1,h^2)$ selected at the end of $m'-1$.
At round $m'$, agent $i$ does the following.
First, $i$ selects two action profiles $\va^1_{\agents_2 \cup \{i\}}$ and $\va^2_{\agents_1 \cup \{i\}}$
with probabilities $\pra{\va^1_{\agents_2 \cup \{i\}}}{\vsigma^*}{G',h^2}$
and $\pra{\va^2_{\agents_1 \cup \{i\}}}{\vsigma^*}{G',h^1}$, respectively.
These are the actions that agents in $\agents_2$ and $\agents_1$ could have followed
had $i$ followed $\sigma_i^*$ at $I_i$ and $\sigma_i^*$ from the beginning, respectively.
Then, $i$ follows individual actions towards agents in $\agents_1$ according
to $a_i^1$ and follows individual actions towards agents in $\agents_2$ according to $a_i^2$.
After observing round-$m'+1$ information set $I_i'$,
$i$ deterministically selects two action profiles $\va^1_{\agents_1}$ and $\va^2_{\agents_2}$
such that the round-$m'$ individual actions followed by agents in $\agents_1$ towards $i$ in $I_i'$
are the same as those followed by these agents in $\va^1_{\agents_1}$, and the 
same holds for agents in $\agents_2$ and $\va^2_{\agents_2}$.
Let $\va^1 = (\va^1_{\agents_1},\va_{\agents_2 \cup \{i\}}^1,\va_{\agents_2}^1)$
and $\va^2 = (\va_{\agents_1 \cup \{i\}}^2,\va_{\agents_2}^2)$.
The selected round-$m'+1$ histories are $(h^1,\va^2)$ and $(h^2,\va^1)$,
where $(h',\va')$ represents the history that results from appending $\va'$
to the sequence of action profiles in $h'$. This concludes the definition.

We prove Lemma~\ref{lemma:dual},
which shows that by following $\vsigma''$, agent $i$ leads
agents in $\agents_1$ to believe that $i$ follows $\sigma_i^*$ at $I_i$,
and lead agents in $\agents_2$ to believe that all agents followed $\vsigma^*$ since the beginning.
The proof is similar to that of Lemma~\ref{lemma:hide}.

Given round-$m'$ history $h$, let $Q^{\vsigma''}_1(h \mid G',I_i)$ be the probability of
$h$ specifying the actions followed by agents in $\agents_2$ conditional on $I_i$,
and the actions of other agents in $h$ being the same as in the round-$m$ history $h^1$ selected by $i$.
Let $Q^{\vsigma''}_2(h \mid G',I_i)$ denote the corresponding probability regarding the
actions of agents in $\agents_1$ and selection of $h^2$.
Fix $\mu^*$ consistent with $\vsigma^*$ and $\G^*$.
Lemma~\ref{lemma:dual} shows that $Q^{\vsigma''}_1(h \mid G',I_i)$ is the same as 
the belief held by $i$ that the round-$m'$ history will be $h$ conditional
on $G'$, some information set $I_i'$ consistent with $\vsigma^*$ and $G'$,
and agents following $\vsigma^*$. Since $h$ determines the observations
of agents in $\agents_2$, this implies that $i$ believes that these agents will not detect
the deviations of $i$. Similarly, $Q^{\vsigma''}_2(h \mid G',I_i)$ is the same 
as the belief held by $i$ that  the round-$m'$ history will be $h$ conditional
on $G'$, $I_i$, and agents following $\vsigma^*$, hence $i$ believes
that agents in $\agents_1$ will not detect a deviation of $i$ from $\vsigma^*$ at and after $I_i$.

\begin{lemma}
\label{lemma:dual}
For every round $m' \ge m$ and round-$m'$ $h \in \hist$, there is a round-$m$ information set $I_i'$
consistent with $\vsigma^*$ and $G'$ such that
$$Q^{\vsigma''}_2(h \mid G',I_i) = \sum_{h' \in I_i} \mu^*(h' \mid G',I_i) \pra{h}{\vsigma'}{G',h'},$$
$$Q^{\vsigma''}_1(h \mid G',I_i) = \sum_{h' \in I_i'} \mu^*(h' \mid G',I_i') \pra{h}{\vsigma^*}{G',h'}.$$
\end{lemma}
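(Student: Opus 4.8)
The plan is to prove both identities simultaneously by induction on the round $m' \ge m$, following the same strategy as the proof of Lemma~\ref{lemma:hide}, but tracking two ``emulated'' histories instead of one set of hidden observations. The key observation is that in the dual evasive strategy $\sigma_i''$, agent $i$ maintains two parallel simulations: one (indexed by $h^1$) that emulates what agents in $\agents_2$ would see if $i$ had followed $\sigma_i^*$ at $I_i$ and the beginning, and another (indexed by $h^2$) that emulates what agents in $\agents_1$ would see if $i$ had followed $\vsigma'$. Because the partition $(\agents_1,\agents_2)$ of $\agents \setminus \{i\}$ has the property that causal influence without interference from $i$ never crosses between $\agents_1$ and $\agents_2$ (condition (2) in the definition of round-$m$ partition), agents in $\agents_2$ only ever receive messages from other agents in $\agents_2$ or from $i$, and likewise for $\agents_1$; hence the two simulations never interfere, and $i$ can feed each side exactly the messages it expects.

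\textbf{Base case.} First I would check $m' = m$. Since $G'$ is indistinguishable from $G$ to $i$ at $m$, we have $I_i \in \I_i(G')$ and $I_i$ is consistent with both $\vsigma^*$ and $\vsigma'$ under $G'$. For the $Q_2$ identity, $i$ selects $h^2 \in I_i$ with probability $\pra{h^2}{\vsigma'}{G'}/\sum_{h' \in I_i}\pra{h'}{\vsigma^*}{G'}$, which by Proposition~\ref{prop:onlydev} (applied with $\vsigma'$, since $I_i$ is consistent with only $i$ deviating) equals $\mu^*(h^2 \mid G', I_i)$ up to the appropriate normalization, matching the right-hand side when $h = h^2$ is a round-$m$ history (the factor $\pra{h}{\vsigma'}{G',h'}$ being $1$ when $h = h'$ and $0$ otherwise). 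For the $Q_1$ identity, $i$ selects $h^1 \in I_i$ with probability $\pra{h^1}{\vsigma^*}{G'}/\sum_{h' \in I_i'}\pra{h'}{\vsigma^*}{G}$; taking $I_i'$ to be a round-$m$ information set consistent with $\vsigma^*$ and $G$ (and using that $G'$ is indistinguishable from $G$ to $i$ at $m$, so that the relevant probabilities agree), Proposition~\ref{prop:onlydev} again identifies this with $\mu^*(h^1 \mid G', I_i')$.

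\textbf{Inductive step.} Assume both identities hold at round $m'-1$ for the pair $(h^1, h^2)$ carried forward. At round $m'$, agent $i$ draws $\va^1_{\agents_2 \cup \{i\}}$ with probability $\pra{\va^1_{\agents_2 \cup \{i\}}}{\vsigma^*}{G',h^2}$ and $\va^2_{\agents_1 \cup \{i\}}$ with probability $\pra{\va^2_{\agents_1 \cup \{i\}}}{\vsigma^*}{G',h^1}$, then deterministically completes $\va^1_{\agents_1}$ and $\va^2_{\agents_2}$ from the freshly observed information set. The crucial step is to argue that, conditioned on the emulated history $h^2$, agents in $\agents_1$ genuinely follow $\va^2_{\agents_1}$ with probability $\pra{\va^2_{\agents_1}}{\vsigma^*}{G',h^2}$ as prescribed by $\vsigma'$, and symmetrically for $\agents_2$: this holds because every round-$m'$ neighbour of an agent in $\agents_1$ (other than $i$) is again in $\agents_1$, so the messages they receive in round $m'$ are exactly those generated by the $h^2$-simulation, and $i$ sends them the individual actions dictated by that simulation. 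Multiplying the inductive probabilities by the round-$m'$ transition factors, and using that the product of per-agent factors for $\agents_1 \cup \{i\}$ reconstructs $\pra{\va^2}{\vsigma'}{G',h^2}$ and likewise for $\agents_2 \cup \{i\}$ and $\pra{\va^1}{\vsigma^*}{G',h^1}$, I would conclude that the updated $Q_2$ and $Q_1$ match the required right-hand sides with $h$ replaced by $(h^2,\va^1)$ and $(h^1,\va^2)$ respectively, invoking Proposition~\ref{prop:onlydev} once more to rewrite the belief-weighted sums.

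\textbf{Main obstacle.} The delicate point — and the one I expect to require the most care — is the bookkeeping that keeps the two simulations genuinely independent: one must verify that no agent in $\agents_2$ ever receives (directly or transitively, without interference from $i$) a message originating from an agent in $\agents_1$ or carrying information about $i$'s defections, so that feeding $\agents_2$ the ``clean'' $h^1$-simulation is consistent with what those agents would actually observe. This is exactly where condition (2) of the round-$m$ partition is used, together with the fact that $(j,m)$ is an ambiguous PO so that all of $i$'s pre-$m$ neighbours lie in $\agents_1$ (condition (3)); the deterministic completion of $\va^1_{\agents_1}$ and $\va^2_{\agents_2}$ from the observed information set must be shown to be well-defined, i.e., the observed individual actions of $\agents_1$-agents towards $i$ never contradict the $h^2$-simulation, which follows because those agents behaved exactly as the simulation predicted. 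Once this separation is established, the algebra mirrors Lemma~\ref{lemma:hide} line for line.
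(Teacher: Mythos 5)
Your proposal is correct and follows essentially the same route as the paper's proof: induction on $m'$ with a base case at $m'=m$ resolved via Proposition~\ref{prop:onlydev}, and an inductive step that multiplies by the round-$m'$ transition probabilities, using condition (2) of the round-$m$ partition (the $i$-edges form a cut) to keep the two emulated histories from interfering. The only cosmetic difference is that you run the two identities through a single simultaneous induction, whereas the paper proves the $Q_2$ identity first and obtains $Q_1$ by swapping the roles of $\agents_1$ and $\agents_2$.
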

\begin{proof}
We first show for $Q^{\vsigma''}_2(h \mid G',I_i)$, using induction on $m'$.

Consider $m' = m$. By the definition of ambiguous PO $(j,m)$, 
for every $l \in \agents_1$, $o \in \agents_2$, and $m' < m$,
it is false that $(l,m') \leadsto^{G'} (o,m)$ and $(o,m') \leadsto^{G'} (l,m)$.
Since $i$ does not interact with agents in $\agents_2$, the actions of
agents in $\agents_2$ do not depend on the actions of agents in $\agents_1 \cup \{i\}$, and
their probability is given by $\vsigma^*$.
Similarly, the actions of agents in $\agents_1$ 
do not depend on the actions of agents in $\agents_2$, and their probability is given by $\vsigma'$.
This implies that the belief held by $i$ conditional on $G'$ and $I_i$
that the actions of agents in $\agents_2$ are given by $h$
is the same as the probability of these agents following those actions,
conditional on $G'$, $I_i$, and agents following $\vsigma^*$;
by construction of $\sigma_i''$, this is exactly the probability of
$i$ selecting those actions. Moreover, the actions of $i$ selected by $i$ in $h$ are fixed by $I_i$;
clearly, these are the same as the actions followed by $i$ conditional on $G'$ and $I_i$. 
Finally, \fullv{by Proposition~\ref{prop:onlydev},}\shortv{ we show in the full paper that
$i$ believes at $I_i$ that only $i$ deviated, hence}
the belief that agents in $\agents_1$ follow the actions
specified in $h$ conditional on $G'$ and $I_i$ is exactly the probability that those agents
followed those actions, conditional on $G'$, $I_i$, and agents following $\vsigma'$.
Therefore, we have $\mu^*(h \mid G',I_i) = Q^{\vsigma''}_2(h \mid G',I_i)$, as we intended to prove.

Now, assume the hypothesis for round $m'$ and consider round-$m'+1$ history $(h,\va)$.
We have 
$$
Q^{\vsigma''}_2((h,\va) \mid G',I_i) = Q^{\vsigma''}_2(h \mid G',I_i) \pra{\va}{\vsigma^*}{G',h},
$$
which directly implies the result by the hypothesis.
To see this, notice that, if $h$ represents the actions taken by agents in $\agents_1$
and the selection of $h^2$, then $h_{\agents_1 \cup \{i\}}$
specifies the observations of agents in $\agents_1$,
so they follow $\va_{\agents_1}$ with probability $\pra{\va_{\agents_1}}{\vsigma^*}{h_{\agents_1 \cup \{i\}}}$,
i.e., the product of the probabilities $\sigma_l^*(a_l \mid I_l)$ for $l \in \agents_1$,
where $I_l$ is the round-$m'$ information set, which, by the definition of ambiguous PO,
is determined by the actions $h_{\agents_1 \cup \{i\}}$ of agents in $\agents_1 \cup \{i\}$.
Conversely, $i$ selects $\va_{\agents_2 \cup \{i\}}$ with probability
$\pra{\va_{\agents_2 \cup \{i\}}}{\vsigma^*}{G',h_{\agents_2 \cup \{i\}}}$.

The proof for $Q^{\vsigma''}_1(h \mid G',I_i)$ is identical,
except we revert the roles of $\agents_1$ and $\agents_2$, and $i$ initially
selects $I_l'$ consistent with $\vsigma^*$ and $G'$.
This concludes the proof.
\end{proof}

We can now prove Lemma~\ref{theorem:ambiguous}.

\begin{restatable}{theorem}{theoremambig}
\label{theorem:ambiguous}
If $\vsigma^*$ has ambiguous defections, then $\vsigma^*$ does not enforce accountability.
\end{restatable}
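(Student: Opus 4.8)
The plan is a proof by contradiction built on the dual evasive strategy $\sigma_i''$ constructed above. Assume $\vsigma^*$ has ambiguous defections and enforces accountability, and fix the witnessing data: $G,G'\in\G^*$ with $G'$ indistinguishable from $G$ to $i$ at round $m$, an ambiguous $i$-edge $(j,m)$, a round-$m$ partition $(\agents_1,\agents_2)$ of $\agents\setminus\{i\}$ regarding $i$ and $j$ (so $j\in\agents_2$ and, by condition~(3), every $i$-edge $(l,m')$ with $m'<m$ has $l\in\agents_1$), a strategy $\sigma_i'$ deviating from $\sigma_i^*$ only before round $m$, and a round-$m$ information set $I_i$ consistent with $\vsigma'=(\sigma_i',\vsigma^*_{-i})$ and $G$ at which $\sigma_i^*$ requires $i$ to defect $j$. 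First I would record that, because $G$ and $G'$ are indistinguishable to $i$ at $m$ (and by Proposition~\ref{prop:indist}), $I_i\in\I_i(G')$ and $I_i$ remains consistent with $\vsigma'$ in $G'$, so at $I_i$ in $G'$ agent $i$ genuinely has the option of following either $\sigma_i^*$ (whose continuation defects $j$ at round $m$) or $\sigma_i''$.

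The heart of the argument is to compare $u_i(\vsigma^*\mid G',I_i)$ with $u_i((\sigma_i'',\vsigma^*_{-i})\mid G',I_i)$ and exhibit a strict gain, contradicting \oape{\G^*}. I would split $i$'s continuation payoff from $I_i$ into the part contributed by interactions with $\agents_1$ and the part contributed by interactions with $\agents_2$; these are disjoint since $\agents_1\cup\agents_2=\agents\setminus\{i\}$. By the $Q_2$ part of Lemma~\ref{lemma:dual}, under $\sigma_i''$ the agents in $\agents_1$ make exactly the same observations, with the same probabilities, as when $i$ follows $\sigma_i^*$ from $I_i$, and $i$'s own individual actions toward $\agents_1$ coincide with the $\sigma_i^*$-continuation; hence the $\agents_1$-part of $i$'s payoff is identical under $\sigma_i''$ and under $\sigma_i^*$. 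By the $Q_1$ part of Lemma~\ref{lemma:dual}, under $\sigma_i''$ the agents in $\agents_2$ observe a completely on-path history of $\vsigma^*$, so they cooperate with $i$ in every round and never punish $i$; thus the $\agents_2$-part of $i$'s payoff under $\sigma_i''$ equals the on-path value, in which $i$ cooperates with every $\agents_2$-neighbour in every round, including $j$ at round $m$.

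It then remains to show that the $\agents_2$-part of $i$'s payoff under $\sigma_i^*$ is strictly smaller. Here I would first argue that following $\sigma_i^*$ from $I_i$ in $G'$ necessarily exposes $i$ to a punishment by some agent of $\agents_2$ with positive probability: since $j\in\agents_2$, condition~(2) of the partition forces every PO of $i$ for $(j,m)$ in $G'$ to lie in $\agents_2$; and, by indistinguishability, the agents of $\agents_2$ cannot tell the present situation (in which $i$'s earlier deviation was confined to $\agents_1$ and $i$ now defects $j$ only because $\sigma_i^*$ prescribes it) apart from the situation in which $i$ is on-path up to $m$ and then deviates by defecting $j$ --- and in that second situation $\vsigma^*$, being an equilibrium, must have $i$ punished, which only agents of $\agents_2$ can do in time. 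So under $\sigma_i^*$ agent $i$ incurs a punishment from $\agents_2$ of some fixed positive discounted cost $L$, whereas defecting rather than cooperating with its round-$m$ $\agents_2$-neighbours saves $i$ at most $k$ units of sending cost, where $k$ is the number of round-$m$ neighbours of $i$. Invoking the equilibrium property of $\vsigma^*$ on the on-path-then-defect-$j$ history in $G'$ --- where $i$'s gain from that deviation, together with whatever it can still save while being punished, must be strictly dominated by the induced punishment --- yields that $L$ strictly exceeds the saved sending cost, so $u_i((\sigma_i'',\vsigma^*_{-i})\mid G',I_i)>u_i(\vsigma^*\mid G',I_i)$; this contradicts that $\vsigma^*$ is a \oape{\G^*}, and hence that it enforces accountability.

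I expect the main obstacle to be precisely this last quantitative step: converting ``some agent of $\agents_2$ punishes $i$ with positive probability'' into a lower bound on $L$ that strictly dominates the round-$m$ sending costs $i$ forgoes by cooperating. This requires invoking the equilibrium property on the correct off-equilibrium history (the on-path-then-defect-$j$ history in $G'$) and carefully matching the observations of $\agents_2$ there with their observations under $\sigma_i^*$ from $I_i$, which is exactly where the precise statement of Lemma~\ref{lemma:dual} and conditions~(1)--(3) of the partition carry the weight. The remaining pieces --- reachability of $I_i$ in $G'$, equality of the $\agents_1$-payoffs, and the absence of punishments by $\agents_2$ under $\sigma_i''$ --- are routine once Lemma~\ref{lemma:dual} and the indistinguishability machinery are in hand.
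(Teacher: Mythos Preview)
Your setup is correct and tracks the paper closely through the payoff decomposition into $\agents_1$- and $\agents_2$-parts and the use of Lemma~\ref{lemma:dual} on the $\sigma_i''$ side. The gap is precisely where you flag it, but your proposed fix does not work. You want to show that under $\sigma_i^*$ from $I_i$ in $G'$ the agents in $\agents_2$ punish $i$ with discounted cost $L$ exceeding the round-$m$ savings, by matching what $\agents_2$ observes there to what $\agents_2$ observes in the ``on-path-then-defect-$j$'' scenario. But Lemma~\ref{lemma:dual} only controls $\agents_2$'s observations under $\sigma_i''$; it says nothing about what $\agents_2$ sees when $i$ follows $\sigma_i^*$ from $I_i$. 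And those two scenarios are \emph{not} indistinguishable to $\agents_2$ in general: after round $m$, $i$'s actions toward $\agents_2$ are dictated by $\sigma_i^*$ applied to $i$'s own information sets, which encode $i$'s earlier history with $\agents_1$. There is no reason $\sigma_i^*$ from $I_i$ should send $\agents_2$ the same messages as $\sigma_i^*$ from the on-path set $I_i'$ after a one-shot defection, so the equilibrium inequality at the on-path history does not transfer, and the lower bound on $L$ is unjustified.

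The paper closes this with a case split you omit. Either some agent in $\agents_2$ fails to cooperate with $i$ in some positive-probability run of $\sigma_i^*$ from $I_i$, in which case the $\agents_2$-part of $i$'s payoff under $\sigma_i^*$ is strictly below the all-cooperate value achieved by $\sigma_i''$, and the contradiction is at $I_i$ as you intend; or every agent in $\agents_2$ always cooperates with $i$ under $\sigma_i^*$ from $I_i$. In this second case the contradiction is obtained not at $I_i$ but at the on-path information set $I_i'\in\I_i(G')$: one builds a \emph{second} dual evasive strategy $\sigma_i'''$ that, toward $\agents_2$, mimics $\sigma_i^*$-from-$I_i$ (so by the case hypothesis $\agents_2$ keeps cooperating), and toward $\agents_1$, mimics $\sigma_i^*$-from-$I_i'$ (so $\agents_1$ stays on path and cooperates). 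Under $\sigma_i'''$ agent $i$ defects $j$ at round $m$, pockets the sending cost, and is never punished by anyone, strictly beating $\sigma_i^*$ at $I_i'$. Without this second case your argument is incomplete: nothing you have written rules out a $\vsigma^*$ in which the ambiguous defection at $I_i$ is simply never punished by $\agents_2$.
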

\begin{proof}
Fix a belief system $\mu^*$ consistent with $\vsigma^*$ and $\G^*$.
Fix $G$, $i$, and an ambiguous $(j,m)$.
Suppose that $\sigma_i^*(a_i^D \mid I_i) >0 $, where $I_i \in \I_i(G)$
is a round-$m$ information set consistent with $G$ and a
protocol $\vsigma' = (\sigma_i',\vsigma^*_{-i})$,
$a_i^D \in \act_i(G^m)$ is an action where $i$ omits a message to $j$,
and $\sigma_i'$ is some strategy identical to $\sigma_i^*$ in every round $m' \ge m$.
Consider $G'$ indistinguishable from $G$ to $i$ at $m$
as in the definition of ambiguous PO $(j,m)$,
and corresponding partition $(\agents_1,\agents_2)$.
Since the $i$-edges form a cut between agents in $\agents_1$
and in $\agents_2$ in $G'$, $i$ can control all the information
that flows between agents from the two sets.
More precisely, $i$ can follow a dual evasive strategy $\sigma_i''$\shortv{ (see Appendix~\ref{app:dual}),
where $i$ ensures that}
\fullv{such that by Lemma~\ref{lemma:dual}}
agents in $\agents_1$
observe each information set with the same probability, whether $i$ follows $\sigma_i''$ and $\sigma_i^*$ at $I_i$,
and agents in $\agents_2$ observe information sets consistent with agents following $\vsigma^*$ from the beginning.
This implies that, when $i$ follows $\sigma_i''$ at $I_i$ and the evolving graph is $G'$,
the expected utility of $i$ obtained in interactions with agents in $\agents_1$ is the same as when $i$ follows $\sigma_i^*$ (identical to $\sigma_i'$ at $I_i$).
Regarding interactions with agents $l \in \agents_2$, both $i$ and $l$ cooperate if $i$ follows $\sigma_i''$.
Now, suppose that for some $h \in I_i$ with $\mu^*(h \mid G',I_i) > 0$,
and run $r$ with $\pra{r(m'')}{\vsigma^*}{G',h} > 0$ for every $m'' > m$,
an agent $l \in \agents_2$ does not cooperate with $i$ in $r$.
Then, the expected utility of $i$ following $\sigma_i^*$ obtained in interactions with agents in $\agents_2$
is strictly lower than if $i$ always cooperated with those agents by following $\sigma_i''$, so we have
$$u_i(\vsigma^* \mid G',I_i) < u_i((\sigma_i'',\vsigma^*_{-i}) \mid G',I_i').$$
Thus, $\vsigma^*$ does not enforce accountability.

Now, consider that when $i$ follows $\sigma_i^*$ at $I_i$ agents in $\agents_2$ never punish $i$.
Fix round-$m$ information set $I_i'$ consistent with $\vsigma^*$ and $G'$.
Suppose that $\vsigma^*$ enforces accountability.
Then, at $I_i'$, $i$ must cooperate with $j$ according to $\sigma_i^*$.
Given that $i$ gains $1$ by defecting $j$,
there must be a utility loss in future interactions if $i$ 
deviates from $\sigma_i^*$ by defecting.
However, $i$ may follow a dual evasive strategy $\sigma_i'''$ at $I_i'$, 
where $i$ is never punished.
More precisely, $i$ can behave towards agents in $\agents_2$ as if the information
set were $I_i$ and $i$ followed $\sigma_i^*$ such that
by assumption agents in $\agents_2$ always cooperate with $i$.
In $\sigma_i'''$, $i$ can also behave towards agents in $\agents_1$
as if $i$ followed $\sigma_i^*$ at $I_i'$, such that these agents also cooperate with $i$.
Hence, $i$ obtains a strictly higher utility by following
$\sigma_i'''$ instead of $\sigma_i^*$ at $I_i'$.
Again, $\vsigma^*$ cannot enforce accountability.
This concludes the proof.
\end{proof}

\subsubsection{Unsafe Defections}

Fix a bounded protocol $\vsigma^*$ that enforces accountability,
and let $\rho$ be the maximum convergence time from an arbitrary state to a cooperation state.
Suppose that, in order to avoid the problem of ambiguous defections,
$\vsigma^*$ never requires an agent $i$ to defect $j$ if $j$ did not deviate from $\vsigma^*$.
Consider a scenario with three agents numbered 1 to 3,
where, agent 1 interacts with 2 and 3 in round~$1$,
agent 2 interacts with 3 in round~$\rho-1$,
and agent 3 interacts with 1 in round~$\rho$.
Suppose that (1) if 1 defects 2 or 3 in round~$1$, then the only timely opportunity
to punish 1 is in round~$\rho$,
and (2) if 2 defects 3 in round~$\rho-1$, then monitoring information revealing
this defection to agents that may punish 2 must be forwarded by 3 to 1
(e.g., due to all temporal paths crossing this edge).
Consider that $\vsigma^*$ requires 3 and 1 to mutually defect 
in round $3$ as a punishment for a defection of 1 in round~$1$.
Notice that $\vsigma^*$ still requires 2 to not defect 3.
After the defection of 1, if agent 2 knows that 3 will defect 1, regardless of whether
2 defects 3, then 2 will always gain from defecting 3, since any monitoring
information revealing this defection to other agents is lost when 3 defects 1.
Instead, suppose that 3 does not defect 1 after a defection of 2.
Now, we may have another problem: if 3 defects 1, its future neighbours
may never learn from the defection of 2, thus believing
that agents followed $\vsigma^*$ after round~$1$.
Since the system converges to a cooperation state after $\rho$,
future neighbours of 3 may never punish 3 for defecting 1,
in which case 3 gains from defecting 1, and $\vsigma^*$ cannot enforce accountability.

Formally, given $\rho$, agents $i$, $j$, $j'$, and $l$, rounds $m$ and $m'$, and $G\in \G^*$,
a \emph{temporal path} in $G$ from $i$-edge $(j,m)$
to $j'$-edge $(l,m')$ is a sequence of $c$ tuples $(j_k,j_{k+1},m_k)$ such that
(1) for every $k \le c$, $j_k \ne i$, $(j,m) \leadsto^G (j_k,m_k)$,
and $(j_k,j_{k+1})$ is an edge in $G^{m_k}$,
and (2) $j_{c+1} = l$ and $m_{c} < m'$. We say that $G$ is unsafe
if there exist rounds $m,m^1,m^2$, agents $i,j,l$,
$i$-edges $(j,m)$, $(l,m^2)$, and $j$-edge $(l,m^1)$, such that
(1) $l$ has no interactions after round $m^2$ and prior to round $m+\rho$,
and (2) $(l,i,m^2)$ is in all temporal paths from every $j$-edge $(l,m^1)$ to 
all $j$-edges after round $m^1$ and $i$-edges after round $m^2$.
A protocol $\vsigma^*$ is said to avoid ambiguous punishments
iff for every set $S$ of agents, protocol $\vsigma_{S}$ followed by agents in $S$,
$G \in \G^*$, agents $i\notin S$ and $j$, information set $I_i \in \I_i(G)$ 
consistent with $(\vsigma_S,\vsigma^*_{-S})$ and $G$,
and $h \in I_i$, $\vsigma^*$ requires $j$ to not defect $i$ at $h$.
Fix an unsafe $G \in \G^*$ and corresponding agents $i,j,l$ and rounds $m,m^1,m^2$.
Given a protocol $\vsigma^*$ that avoids ambiguous punishments, let $\sigma_i'$ be a strategy
where $i$ deviates from $\sigma_i^*$ exactly by defecting all round-$m$ neighbours.
We say that $\vsigma^*$ has an unsafe defection iff in every run $r$ 
of $(\sigma_i',\vsigma^*_{-i})$ in $G$, agent $l$ defects $i$ at round $m^2$.

Theorem~\ref{theorem:unsafe} shows that no bounded protocol $\vsigma^*$ 
that enforces accountability can have an unsafe defection if beliefs
are reasonable in the following sense. $\mu^*$ provides the simplest 
possible explanation to deviations,
by considering the smallest set of agents that deviated and the simplest
deviating strategy that explains the observations.
Specifically, fix $G \in \G^*$, agent $i$, and round-$m$ information set $I_i$.
We define $\mu^*(\cdot \mid G,I_i)$ as follows.
Agent $i$ deterministically selects the smallest set $S$
of agents for which there is a completely mixed protocol $\vsigma_S'$
such that $I_i$ is consistent with $(\vsigma_S',\vsigma_{-S}^*)$ and $G$.
Given $S$, $i$ deterministically selects a protocol $\vsigma_S'$
such that, for each agent $l \in S$, $l$ deviates in the least number of rounds,
and, if $l$ deviates in round $m$, then $l$ follows the same individual action
towards all round-$m$ neighbours.
It is easy to see that $\mu^*$ is consistent with $\vsigma^*$ and $\G^*$.
We just need to establish a total order regarding the protocols $(\vsigma_S',\vsigma_{-S}^*)$,
from the protocol with largest $S$ and largest number of deviations per agent,
to $\vsigma^*$, using some criteria to order protocols with the same size of $S$
and same number of deviations per agent such as the order of the identities of agents
and round numbers when agents deviate.
Every such protocol is completely mixed, and the sequence converges to $\vsigma^*$.

The proof of Lemma~\ref{theorem:unsafe} uses the notion of lenient evasive
strategy, defined as follows.
In a lenient evasive strategy, agent $i$ 
ignores a defection of some other agent $j$ towards
$i$ at round $m$, while still ensuring that no future neighbour of $i$
notices that $i$ ignored the defection of $j$.
Such strategy allows $i$ to take advantage of 
an unsafe defection in an interaction with $l$\shortv{ (see Appendix~\ref{app:unsafe})};
if the strategy $\sigma_i^*$ requires $i$ to defect $l$
iff $j$ does not defect $i$, then $i$ can gain by deviating
from $\sigma_i^*$ by defecting $l$ such that no agent will punish $i$ in the future.

Formally, consider an unsafe evolving graph $G \in \G^*$, 
\shortv{as defined in Appendix~\ref{app:unsafe},}
and fix corresponding agents $i,j,l$ and rounds $m,m^1,m^2$.
Fix also a corresponding $\vsigma^*$ that has unsafe omissions,
fix $\vsigma'$ where exactly $i$ deviates from $\vsigma^*$ by defecting all round-$m$ neighbours,
and fix round-$m^2$ information set $I_l \in \I_l(G)$ consistent with 
protocol $\vsigma''$ and $G$, where in $\vsigma''$
exactly $i$ and $j$ deviate from $\vsigma^*$ by defecting all
round-$m$ and round-$m^1$ neighbours, respectively.
We define a strategy $\sigma_l'''$ similar to single evasive strategies,
where $l$ pretends that $j$ did not defect at round $m$.
Specifically, let $S_{m'}$ be the set of agents $o$ not causally influenced
in $G$ by round-$m^1$ neighbours of $j$ between $m^1$ and $m'$,
and let $O_{m'}$ be the set of remaining agents $o$
for which there is an $l$-edge $(o',m'')$ with $m'' > m^2$
and $(o,m') \leadsto^G (o',m'')$.
Let $L_{m'} = S_{m'} \cup O_{m'} \cup \{i\}$.
Let $I_l'$ be an information set identical to $I_l$
in terms of the actions followed agents in $L_{m'} \setminus \{i\}$,
and consistent with $\vsigma'$ and $G$.
In $\sigma_l'$, at round $m^2$, agent $l$ selects $I_l'$
and follows actions with probability $\sigma_l^*(\cdot \mid I_l')$, defecting $i$.
For every $m' > m^2$, fix round-$m'$ information set $I_l$ and
corresponding $I_l'$ previously selected by $l$.
At $I_l$, agent $l$ follows actions with probability distribution $\sigma_l^*(\cdot \mid I_l')$.
Given round-$m'+1$ information set $I_l''$, $l$ selects $I_l'''$
compatible with $I_l'$ and with the round-$m'$ observations in $I_l''$.
We prove Lemma~\ref{lemma:lenient},
which shows that, for all rounds $m' > m^2$,
both $l$ and neighbours of $l$ follow actions consistent with $\vsigma'$,
using similar arguments to the proof of Lemma~\ref{lemma:hide}.

Let $\vsigma''' = (\sigma_l''',\vsigma^*_{-l})$. 
Notice that runs of $\vsigma'''$ also specify the selections of information sets by $l$.
We still use the same notation for runs. Given a run $r$,
an information set $I_{L_{m'}}$ compatible with $r$
specifies the observations of agents in $L_{m'} \setminus \{l\}$ and the selections of $l$.
We show in Lemma~\ref{lemma:lenient} that for all runs $r$ of $\vsigma'''$ and
$m' > m^2$, and round-$m'$ information set $I_{L_{m'}} \in \I_{L_{m'}}(G)$
compatible with $r$, $I_{L_{m'}}$ is consistent with $\vsigma'$.
By the definition of unsafe $G$, every round-$m'$ neighbour of $i$ 
is in $L_{m'}$, so every neighbour of $i$ believes that agents followed $\vsigma'$ after $m$.

\begin{lemma}
\label{lemma:lenient}
Suppose that beliefs are reasonable.
For every round $m' \ge m^2$, history $h \in I_l$ with $\mu^*(h \mid G,I_l) > 0$,
and run $r$ of $\vsigma''$ after $h$ in $G$, the round-$m'$ information set $I_{L_{m'}}$
compatible with $r$ is consistent with $\vsigma'$.
\end{lemma}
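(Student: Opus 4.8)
The plan is to prove Lemma~\ref{lemma:lenient} by induction on $m'$, closely following the structure of the proof of Lemma~\ref{lemma:hide}, but tracking the three-part set $L_{m'} = S_{m'} \cup O_{m'} \cup \{i\}$ rather than a single set of agents not causally influenced by a defecting node. The base case is round $m' = m^2$: here $l$ deterministically selects an information set $I_l'$ consistent with $\vsigma'$ and $G$ that agrees with $I_l$ on the actions of agents in $L_{m^2} \setminus \{i\}$, and then plays $\sigma_l^*(\cdot \mid I_l')$ while defecting $i$. So the round-$m^2$ information set of the coalition $L_{m^2}$ is, by construction, consistent with $\vsigma'$; the only subtlety is that $l$'s \emph{own} observations at $m^2$ must also be explicable under $\vsigma'$, which is exactly why we need $I_l'$ to be consistent with $\vsigma'$ rather than merely with $\vsigma''$, and why reasonable beliefs matter — they guarantee $\mu^*(h \mid G,I_l) > 0$ only for histories where the simplest explanation is that exactly $i$ and $j$ deviated, so $l$'s observations are always consistent with such a history and hence with a $\vsigma'$-history after reinterpretation.

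For the inductive step, assume that at round $m'$ the coalition information set $I_{L_{m'}}$ compatible with a run $r$ of $\vsigma'''$ is consistent with $\vsigma'$. I would argue that every round-$m'$ neighbour of an agent in $L_{m'+1} \setminus \{i\}$ lies in $L_{m'} \cup \{l\}$: for agents in $S_{m'+1}$ this is the definition of causal influence (a neighbour of an agent not influenced by $j$'s round-$m^1$ edges is itself not so influenced, hence in $S_{m'}$, \emph{unless} it is $l$, which is handled because $l$ reinterprets its observations); for agents in $O_{m'+1}$ the analogous argument applies using the fact that $O$ is closed under taking predecessors along temporal paths leading to $l$-edges after $m^2$. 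Consequently each such agent $o \ne l$ follows $\sigma_o^*(\cdot \mid I_o)$ where $I_o$ is determined by the coalition history, and $l$ follows $\sigma_l^*(\cdot \mid I_l')$ where $I_l'$ is the reinterpreted information set; in both cases the action probabilities match exactly what $\vsigma'$ would prescribe given the $\vsigma'$-history that the induction hypothesis supplies. The resulting round-$m'+1$ coalition information set is therefore again consistent with $\vsigma'$, closing the induction.

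The conclusion then follows from the definition of unsafe $G$: the tuple $(l,i,m^2)$ lies on every temporal path from the round-$m^1$ edges of $j$ to all later $j$-edges and $i$-edges, and $l$ has no interactions between $m^2$ and $m+\rho$; so once $l$ "absorbs" the defection of $j$ by pretending it did not occur, no information about it can reach any round-$m'$ neighbour of $i$ without passing through the $(l,i,m^2)$ edge, which $l$ has neutralized. Hence every round-$m'$ neighbour of $i$ is in $L_{m'}$, and by the induction its information set is consistent with $\vsigma'$, meaning it believes that after round $m$ only $i$ deviated (by defecting its round-$m$ neighbours) and nothing else happened — in particular it does not see $l$'s defection of $i$.

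The main obstacle I anticipate is the bookkeeping in the inductive step showing the closure property of $L_{m'}$, specifically disentangling the two reasons an agent can be in $L_{m'}$ (not causally influenced by $j$'s defection versus being an ancestor of a future $l$-edge) and checking that $l$'s presence does not break the argument — $l$ is the one agent that \emph{is} causally influenced by $j$ yet must be treated as if it were not, and one has to verify that $l$'s reinterpretation of its observations is globally consistent, i.e. that the $I_l'$ chosen at round $m^2$ can be extended coherently to all later rounds without ever contradicting what $l$ actually sees. This is where the perfect-recall assumption and the careful inductive definition of $\sigma_l'''$ (always selecting $I_l'''$ compatible with both the previously selected $I_l'$ and the genuine round-$m'$ observations) do the real work, paralleling the role of the analogous construction in Lemma~\ref{lemma:hide}.
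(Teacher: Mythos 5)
Your proposal is correct and follows essentially the same route as the paper: an induction on $m'$ mirroring Lemma~\ref{lemma:hide}, with the base case at $m^2$ resting on reasonable beliefs (so $h$ is a $\vsigma''$-history), the emptiness of $O_{m^2}$, and $l$'s reinterpreted information set $I_l'$, and the inductive step resting on the closure of $L_{m'}$ under round-$m'$ neighbourhoods supplied by the definition of unsafe $G$, with $l$ acting according to $\sigma_l^*(\cdot\mid I_l')$. The bookkeeping concerns you flag are exactly the points the paper's proof also handles (and partly glosses over), so no substantive gap.
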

\begin{proof}
The proof is by induction on $m'$. Consider $m' = m^2$.
Fix $h \in I_l$ with $\mu^*(h \mid G,I_l) > 0$.
Since beliefs are reasonable and $h$ is consistent with $\vsigma''$ and $G$,
$\mu^*(h \mid G,I_i)$ is given by conditioning on agents following $\vsigma''$.
Using the same arguments as in the proof of Lemma~\ref{lemma:hide},
we can show that for every agent $o \in S_{m'}$, the information set $I_o$
corresponding to $h$ is consistent with $\vsigma'$.
Moreover, since every temporal path from round-$m^1$ $j$-edges to $l$-edges
after $m^2$ crosses $(l,i,m^2)$, we have $O_{m'} = \emptyset$.
Finally, $l$ selects $I_l'$, which is consistent with $\vsigma'$ by construction.
This proves the base case.

Now, suppose that the hypothesis holds for $m' \ge m^2$
and fix round-$m'$ history $h'$ such that there is $h \in I_l$
with $\mu^*(h \mid G,I_l) > 0$ and $\pra{h'}{\vsigma'}{G,h} > 0$.
Let $I_l'$ be the round-$m'$ information set selected by $l$.
Fix $\va'$ with $\pra{\va'}{\vsigma'}{G,h'} > 0$.
Every agent $o \in L_{m'}$ follows $a_o$ with probability $\sigma_o^*(a_l \mid I_o)$ if $o \ne i$,
where $h \in I_o$, or $o$ follows $a_o$ with probability $\sigma_o^*(a_o \mid I_l')$ if $o = l$.
Since $I_o$ and $I_l'$ are consistent with $\vsigma'$ and $G$,
$\va'$ is also consistent with $\vsigma'$ (which is equivalent to $\vsigma^*$ after $m$) and $G$.
By definition of unsafe $G$, for every $o \in L_{m'+1} \setminus \{i\}$,
the round-$m'$ neighbours of $o$ are in $L_{m'}$, so the round-$m'+1$ information set $I_o'$
corresponding to $(h,\va)$ is consistent with $\vsigma'$ and $G$.
In addition, $l$ selects $I_l''$ compatible with $I_l'$ and the actions of the round-$m'$
neighbours of $l$ in $\va$. Since every round-$m'$ neighbour of $l$ is also in $L_{m'}$,
$I_l''$ is consistent with $\vsigma'$ and $G$. This concludes the proof.
\end{proof}

We can now prove Lemma~\ref{theorem:unsafe}.

\begin{restatable}{theorem}{theoremunsafe}
\label{theorem:unsafe}
If $\vsigma^*$ is bounded and has unsafe defections and beliefs are reasonable,
then $\vsigma^*$ does not enforce accountability.
\end{restatable}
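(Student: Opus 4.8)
The plan is to argue by contradiction: assume $\vsigma^*$ enforces accountability, so it is a \oape{\G^*} witnessed by the reasonable belief system $\mu^*$, and then exhibit an agent and an information set at which that agent strictly gains by deviating. We keep the fixed unsafe $G$, agents $i,j,l$, rounds $m,m^1,m^2$, strategy $\sigma_i'$, protocol $\vsigma''$, and information set $I_l$ already introduced, and we write $\vsigma' := (\sigma_i',\vsigma^*_{-i})$. Recall that $\vsigma^*$ avoids ambiguous punishments, so it never has an agent defect a non-deviator, and that in every run of $\vsigma'$ in $G$ agent $l$ defects $i$ at round $m^2$. The argument splits on the action $\sigma_l^*$ prescribes at $I_l$.

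\emph{Case 1: $\sigma_l^*$ has $l$ defect $i$ at $I_l$.} Then $l$ defects $i$ at round $m^2$ whether or not $j$ deviated at round $m^1$. I would look at agent $j$ at a round-$m^1$ information set $I_j$ reached along a run of $\vsigma'$ (so $I_j$ is consistent with only $i$ having deviated): since $l$ is not a deviator and $\vsigma^*$ avoids ambiguous punishments, $\sigma_j^*$ has $j$ not defect $l$ at $I_j$, so $j$ pays the unit sending cost on that edge. Now let $j$ instead defect $l$ at round $m^1$ and conceal it with a single evasive strategy. By the definition of the unsafe graph, every temporal path that could carry the information that $j$ defected $l$ --- to a later $j$-edge, or to an $i$-edge after round $m^2$ --- passes through the tuple $(l,i,m^2)$, and this is blocked because $l$ defects $i$ at round $m^2$; since $l$ has no interaction strictly between $m^2$ and $m+\rho$, and the bounded protocol $\vsigma^*$ has re-converged to a cooperation state by round $m+\rho$, no agent ever detects or punishes this deviation. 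By the argument of Lemma~\ref{lemma:hide}, $j$'s continuation is then unchanged, so $j$ pockets the saved unit cost, contradicting that $\vsigma^*$ is a \oape{\G^*}.

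\emph{Case 2: $\sigma_l^*$ does not have $l$ defect $i$ at $I_l$.} Then $\sigma_l^*$ has $l$ cooperate with, or actively punish, $i$ at round $m^2$, so $l$ pays the unit sending cost on the edge $(i,m^2)$. Here I would have $l$ deviate at $I_l$ by following the lenient evasive strategy $\sigma_l'''$: $l$ defects $i$ at round $m^2$ and thereafter plays so as to hide $j$'s deviation. Since beliefs are reasonable, Lemma~\ref{lemma:lenient} applies and shows that every information set reached after round $m^2$ along the resulting runs is consistent with $\vsigma'$; hence every agent subsequently met by $l$ or $i$ believes that only $i$ deviated and that agents followed $\vsigma'$ after round $m$. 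As $\vsigma^*$ has an unsafe defection, defecting $i$ at round $m^2$ is exactly what $\vsigma'$ prescribes for $l$, so $l$'s move is indistinguishable from prescribed play and is never punished, and the emulated $\vsigma'$-continuation re-converges to a cooperation state within $\rho$ rounds. Comparing $l$'s utility at $I_l$: it saves the unit sending cost at round $m^2$ and forfeits no benefit there, because $i$'s round-$m^2$ action toward $l$ is chosen simultaneously and so cannot depend on whether $l$ forwards information about $j$; and, because $\vsigma^*$ is bounded, the $\vsigma'$-continuation $l$ enjoys under $\sigma_l'''$ can be shown to be at least as good for $l$ as its continuation under $\sigma_l^*$ in the true run, whose only extra feature --- the protocol's handling of $j$'s deviation, which must be relayed through $i$ --- cannot reward $l$. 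Hence $l$ strictly gains, again contradicting that $\vsigma^*$ is a \oape{\G^*}.

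Both cases end in contradiction, so no bounded $\vsigma^*$ with unsafe defections and reasonable beliefs can enforce accountability. I expect the crux to be Case 2: establishing, via Lemma~\ref{lemma:lenient} and the reasonableness of $\mu^*$, that the lenient evasive strategy genuinely leaves every downstream agent believing only $i$ deviated --- this is where the structural hypotheses on the unsafe graph ($i$ lying on every relevant temporal path through $(l,i,m^2)$, the $\rho$-round gap in $l$'s interactions, boundedness) are all consumed --- and then making the continuation comparison for $l$ watertight, so that the single unit of saved sending cost is not swamped by continuation effects; boundedness of $\vsigma^*$ is exactly what keeps those effects under control.
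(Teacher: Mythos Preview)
Your two-case architecture mirrors the paper's proof, and the roles you assign --- $j$ deviates profitably in Case~1, $l$ deviates profitably via the lenient evasive strategy in Case~2 --- are exactly the paper's. One quantifier needs sharpening: you keep a single fixed $I_l$ and split on the action there, but the split that actually works is \emph{universal versus existential} over all round-$m^2$ information sets consistent with $\vsigma''$. Case~1 must assume that $l$ defects $i$ in \emph{every} run of $\vsigma''$, because $j$'s deviation at $m^1$ could (if $\vsigma^*$ is randomized) land $l$ at any of several such information sets, and $j$ needs the edge $(l,i,m^2)$ to be blocked in all of them; Case~2 then takes the complementary existential. With that fix, your Case~1 argument goes through from the unsafe-graph structure alone --- the convergence to a cooperation state is not needed there, and ``no agent ever detects'' should read ``no later $j$-neighbour ever detects'' ($l$ itself certainly detects). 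In Case~2, the continuation comparison you flag as the crux is handled in the paper simply by the per-interaction upper bound $\beta-1-\alpha$: under $\sigma_l'''$ every post-$m^2$ interaction of $l$ achieves exactly $\beta-1-\alpha$ (by Lemma~\ref{lemma:lenient} plus convergence within $\rho$ rounds plus the gap in $l$'s interactions), while under $\sigma_l^*$ each interaction yields at most $\beta-1-\alpha$; the unit saved at $m^2$ then gives the strict gain.
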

\begin{proof}
Suppose that beliefs are reasonable and that $\vsigma^*$ has unsafe omissions,
and fix corresponding $G \in \G^*$, $i$, $j$, $l$, $m$, $m^1$, and $m^2$.
Fix a reasonable belief system $\mu^*$.
We show that $\vsigma^*$ is not a \oape{\G^*} with $\mu^*$.

Let $\sigma_i'$ be a strategy that differs from $\sigma_i^*$ exactly in that
$i$ defects all round-$m$ neighbours, and let $\vsigma' = (\sigma_i',\vsigma^*_{-i})$.
Since $\vsigma^*$ avoids ambiguous punishments, $i$ must not defect $j$ and $l$ at 
round-$m$ information sets consistent with $\vsigma^*$ and $G$ (and with $\vsigma'$).
This implies that, in every run $r$ of $\vsigma'$ in $G$, prior to round $m^1$,
both $j$ and $l$ know that $i$ deviated from $\sigma_i^*$.
That is, for all $o \in \{j,l\}$, round-$m^1$ information set $I_o \in \I_o(G)$
consistent with $\vsigma'$ and $G$, and history $h \in I_o$,
agent $i$ deviates from $\sigma_i^*$ in $h$ at round $m$, by defecting $o$.
This implies that in the definition of $\mu^*$, the set $S$ must include $i$,
and $i$ must deviate by defecting all round-$m$ neighbour.s
Since $I_o$ is consistent with $\vsigma'$ and $G$, 
the simplest explanation for $I_o$
is that exactly $i$ deviated by defecting all round-$m$ neighbours, so
$\mu^*( h \mid G,I_o)$ is defined by conditioning on $G$ and agents following $\vsigma'$.

Let $\sigma_j''$ be an evasive strategy where $j$ deviates from $\sigma_j^*$
by defecting $l$ at $m^1$, and later $j$ behaves as if $j$ did not defect $l$.
Let $\vsigma'' = (\sigma_i',\sigma_j'',\vsigma^*_{-i,j})$.
First, suppose that in every run of $\vsigma''$ agent $l$ still defects $i$ at round $m^2$.
Fix round-$m^1$ $I_j$ consistent with $\vsigma'$ and $G$.
We can show that if agents follow $\vsigma''$,
then for every round $m' > m^1$, $j$-edge $(o,m')$,
and round-$m'$ information set $I_o \in \I_o(G)$, we have
\begin{equation}
\label{eq:unsafe}
\sum_{h \in I_j} \mu^*(h \mid G,I_j) \sum_{r : r_o(m') = I_o} \pra{r}{\vsigma''}{G,h} = 
\sum_{h \in I_j} \mu^*(h \mid G,I_j) \sum_{r : r_o(m') = I_o} \pra{r}{\vsigma'}{G,h},
\end{equation}
where $r_o(m')$ is the round-$m'$ information set with $r(m') \in r_o(m')$.
To see this, given round $m' > m^1$, let $S_{m'}$ be the set of agents $o$ such that
$(l,m^1) \leadsto^G (o,m')$ is false.
We can show using the same
arguments as Lemma~\ref{lemma:hide}\shortv{ (see Appendix~\ref{app:evasive})} that
(\ref{eq:unsafe}) holds for every $m' > m$ and $o \in S_{m'}$.
Since $G$ is unsafe, every neighbour of $j$ up to round and including $m^2$ is in $S_{m'}$.
This proves (\ref{eq:unsafe}) for $m' \le m^2$.

Now, let $O_{m'}$ be the set of agents $o$ such that, for some $o'$ and $m'' < m'$,
$(o',o,m'')$ is in a temporal path from round-$m^1$ $j$-edges to $j$-edges after $m^1$.
We can show using induction that (\ref{eq:unsafe}) also holds
for every $m' > m^2$ and $o \in O_{m'}$.
Since $G$ is unsafe, for every round $m' > m^2$, 
$l$ cannot be in $O_{m'}$ and every round-$m'$ neighbour of $j$ is in $S_{m'} \cup O_{m'}$.
This implies (\ref{eq:unsafe}) for all $j$-edge $(o,m')$ with $m' > m^1$.

We now prove the induction.
First, consider $m' = m^2 + 1$. By definition of unsafe $G$,
we have $O_{m'} = \{i\}$ and $i \in S_{m'}$.
As we have seen before, in runs of $\vsigma'$ and $\vsigma''$ in $G$,
$i$ observes each round-$m^2$ information with the same probability,
and, since $i$ follows $\sigma_i^*$ after round $m$,
$i$ also follows each round-$m^2$ action with the same probability.
The same applies to all round-$m^2$ neighbours of $i$, except $l$.
However, since $l$ always defects $i$, it holds that $i$ observes
each round-$m^2+1$ information set with the same probability,
whether agents follow $\vsigma'$ or $\vsigma''$.
This proves (\ref{eq:unsafe}) for $m' = m^2 + 1$.
Continuing inductively, for every $o \in O_{m'+1}$, by
the definition of unsafe $G$, both $o$ and
every round-$m'$ neighbour of $o$ are in $S_{m'} \cup O_{m'}$, so the
result follows from the same arguments as above and the hypothesis.

Now, suppose that there is a round-$m^1$ information set $I_l \in \I_l(G)$
consistent with $\vsigma''$ such that $l$ does not defect $i$.
In every run of $\vsigma'$ in $G$ agent $j$ must not defect $l$ at $m^1$.
This implies that at $I_l$ agent $l$ knows that both $i$ and $j$ deviated from $\sigma_j^*$.
Again, $\mu^*(\cdot \mid G,I_l)$ is defined by conditioning on $\vsigma''$.
\fullv{As shown in Lemma~\ref{lemma:lenient},}
\shortv{As shown in Appendix~\ref{app:variant},}
$l$ may follow a lenient evasive strategy $\sigma_l'''$
such that in every round $m' > m^2$ the neighbours of $l$ behave as if if all agents followed $\vsigma'$
after $m$. Since $\vsigma^*$ and $\vsigma'$ are equivalent after $m$,
and since the system stabilizes to a cooperation state by round $m+\rho$
and $l$ has no interactions after $m^2$ and prior to $m+\rho$,
in every interaction after $m^2$ between $l$ and a neighbour $o$, both 
$l$ and $o$ cooperate. Thus, the expected utility in all future interactions is $\beta - 1-\alpha$
when $l$ follows $\sigma_l'''$. If $l$ follows $\sigma_l^*$
at $I_l$, then the expected utility in all of those interactions is at most $\beta - 1 - \alpha$.
Since $l$ gains at least $1$ by defecting $i$ at $I_l$,
$l$ gains by deviating from $\sigma_l^*$ at $I_l$,
hence $\vsigma^*$ cannot enforce accountability.
This concludes the proof.
\end{proof}

}

\subsection{Need for Eventual Distinguishability}
\label{sec:dist}

We identify a necessary restriction on $\G^*$ to enforce accountability with safe-bounded protocols
in general pairwise exchanges, called eventual distinguishability.
\fullv{Later, we generalize the results to safe (possibly non-bounded) protocols.}
\shortv{Due to lack of space, we only provide an informal definition of this condition
and an intuition for its necessity, deferring the formal definition and proof to Appendix~\ref{app:dist};
we defer the generalized proof for non-bounded protocols to the full paper.}
Roughly speaking, we say that the \emph{adversary is restricted by eventual distinguishability}
if for every evolving graph $G$ in $\G^*$ and agent $i$,
eventually $i$ stops having interactions in $G$ where $i$ defects some neighbour $j$,
and then expects to be punished by two or more agents for that defection,
because each of those agents cannot distinguish between two evolving graphs
where it should and should not punish $i$.
If the adversary is not restricted by eventual distinguishability
and agent $i$ keeps defecting its neighbours, then the number
of punishments that $i$ expects to receive in future rounds grows unboundedly with time.
Consequently, the protocol cannot be bounded,
since there is no bound on the time it takes for punishments to end after agents stop deviating.

Specifically, the formal definition of eventual distinguishability has two parts:
(1) a definition of \emph{indistinguishable rounds}, in which a defection of some
agent is matched by more than one future punishment,
and (2) the requirement that indistinguishable rounds must eventually stop occurring.
Fig.~\ref{fig:indist} depicts an example of an indistinguishable round.
There are three agents numbered $1$ to $3$.
The adversary may generate three alternative evolving graphs $G^1$, $G^2$, and $G^3$.
In round $1$, agent $1$ interacts with agent 2.
In round $2$, agent 2 interacts with agents 3.
In round $3$, the interactions depend on the evolving graph:
($G^1$) $i$ interacts only with $2$, ($G^2$) $i$ interacts only with $3$,
and ($G^3$) $i$ interacts with both $2$ and $3$.
In every safe protocol that enforces accountability in general pairwise exchanges,
1 must send a message to 2 in round 1, 
so 1 gains by defecting 2.
Suppose that round 3 is the only opportunity to 
punish $i$ for defecting 2 in round $1$ in a timely fashion.
In an equilibrium protocol, the immediate gain $1$ of defecting must be lower than the
future loss $\beta$ of being punished (or else 1 would gain by deviating).
Since $\beta$ can be arbitrarily close to $1$\,\footnote{Recall that we only assume that $\beta > 1+\alpha$ and $\alpha$ can be small.},
1 must be (deterministically) punished by 2 in $G^2$ and must
be punished by 3 in $G^3$ (in which case, 2 must inform 3 of the defection of 1 in round $2$).
Suppose that agents only know the identities of their neighbours.
Then, neither agent 2 can distinguish $G^3$ from $G^1$ nor can
agent 3 distinguish $G^3$ from $G^2$ at round~$3$.
Thus, they both have to punish 1 in $G^3$, and
the expected number of punishments
must increase by $2$, even though 1 only defects one neighbour.
This shows that there is a round-2 information set $I_1 \in \I_1(G^3)$
such that, according to the information in $I_1$ available to $1$,
the expected number of punishments of $1$ conditional on $G^3$ and $I_1$ is $2$.
If this type of interactions keeps occurring in $G^3$,
then, for an arbitrarily large number $c$, there is an information set $I_1 \in \I_1(G^3)$
such that, basing on the information in $I_1$, agent
1 expects to be punished by at least $c$ neighbours after $I_1$.
The fact that only agent 1 deviates plays a key role here:
by the definition of equilibrium, we only need to ensure that
1 never gains by defecting a neighbour at round 1 provided that other agents do not
deviate afterwards. If multiple agents
deviate after round $1$ (e.g., agent 2 defects 3 in round 2),
then 1's defection in round $1$ can be forgiven,
since such behaviour is unexpected at the time 1 decides to defect 2 in round 1.
The assumption that protocols are safe is also crucial,
since non-safe protocols are not susceptible to the aforementioned problem.
Unfortunately,\shortv{ as we show in the full paper,}
we cannot in general devise a non-safe protocol
that enforces accountability.

\begin{figure}
\begin{center}
 \includegraphics[scale=0.2]{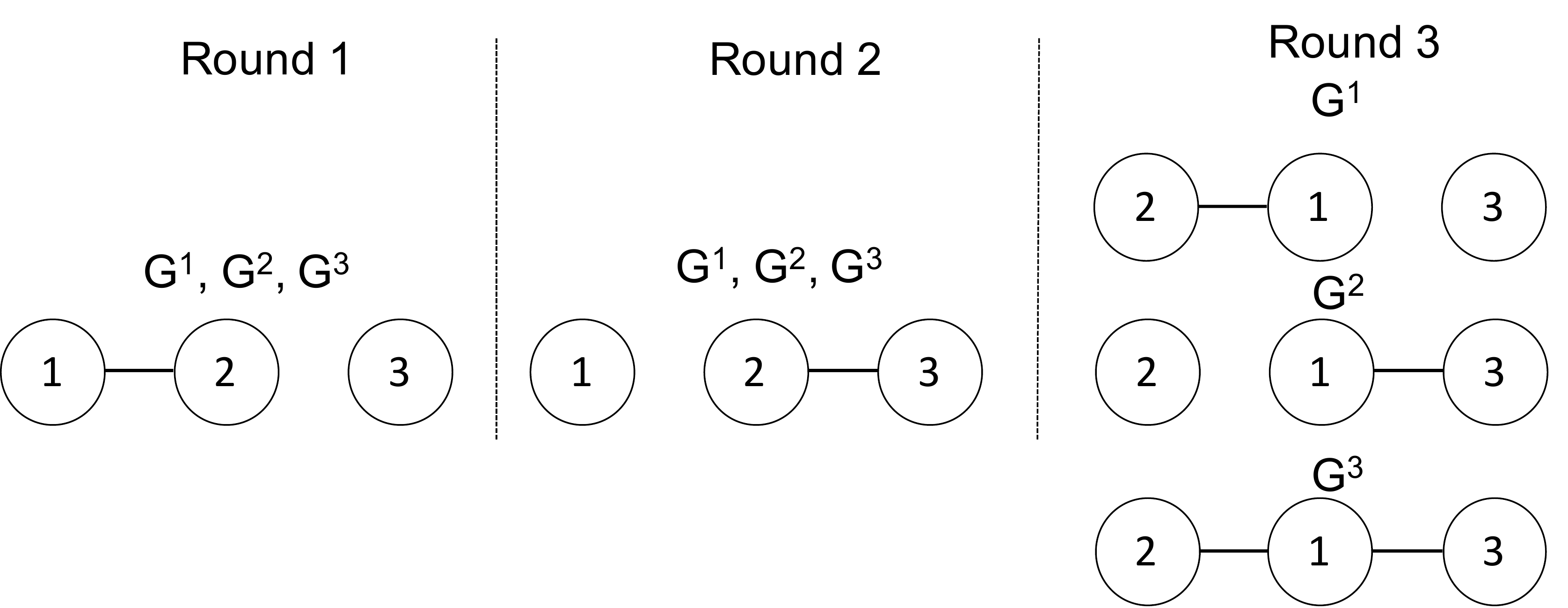}
\end{center}
\vspace{-0.2in}
\caption{$G^3$ is indistinguishable from $G^1$ ($G^2$) to 2 (3) at round $3$.}
\label{fig:indist}
\end{figure}

\shortv{In Appendix~\ref{app:dist}, we prove Theorem~\ref{theorem:dist}.
The proof shows that, if the adversary is not restricted by eventual distinguishability,
then, for some agent $i$ and evolving graph $G$ in $\G^*$, if $i$ always defects all neighbours in $G$, then, 
in an indistinguishable round, the expected number of future punishments after the defections
is strictly larger than before, whereas in all remaining rounds the expected number of punishments never decreases.
This is sufficient to show that the expected
number of punishments is unbounded, and hence that no safe protocol that enforces
accountability in general pairwise exchanges can be bounded.

}\fullv{
We now provide a rigorous definition of an adversary restricted by eventual distinguishability.
Given an evolving graph $G$, agent $i$, and constant $\rho$,
let $\po{G}{\rho}{m}$ be the set of PO's $(l,m')$ of $i$ in $G$ for $i$-edges $(j,m'')$ such that $m'' \ge m$. 
We say that round $m$ is $(G,i,\rho)$-indistinguishable iff
there are two evolving graphs $G^1,G^2 \in \G^*$ such that
(1) for every $G' \in \{G^1,G^2\}$ and $(j,m') \in \po{G'}{\rho}{m}$, $G'$ is indistinguishable from $G$ to $j$ at $m'$,
(2) $|\po{G^1}{\rho}{m} \cap \po{G^2}{\rho}{m}| < k$ where $k$ is the number of $i$-edges in round $m$,
and (3) $\po{G^1}{\rho}{m} \cup \po{G^2}{\rho}{m} = \po{G}{\rho}{m}$.
We say that the adversary is restricted by \emph{eventual distinguishability} iff
there is a constant $\rho>0$ and a round $m^*$ such that for all $G \in \G^*$ and agent $i$,
every round $m > m^*$ is \emph{not} $(G,i,\rho)$-indistinguishable.

This definition has two parts: (1) the definition of $(G,i,\rho)$-indistinguishable round
and (2) the requirement that, eventually, no round is $(G,i,\rho)$-indistinguishable.
We have already explained that (2) is necessary to keep punishments bounded.
Part (1) is a generalization of the scenario depicted in Fig.~\ref{fig:indist}
for the worst-case scenario where some agent always defects its neighbours.
To understand its definition, recall that the problem identified in the example
was that there were two different agents 2 and 3 who were responsible for punishing 1 in a timely fashion,
after 1 defected 2 in round 1, in evolving graphs $G^1$ and $G^2$,
and who could not distinguish a third evolving graph $G^3$ from $G^1$ and $G^2$, respectively.
More generally, this happens when, for some agent $i$, round $m$, and bound $\rho$ on the delay of punishments,
(1) there are sets $S_1$ and $S_2$ of agents responsible for punishing $i$ in $G^1$ and $G^2$, respectively,
in the $\rho$ rounds after $i$ defects some (or all) neighbours in round $m$ (in our example, $S_1 = \{2\}$ and $S_2 = \{3\}$),
(2) $S_1$ and $S_2$ do not intersect, and (3) for $l =1,2$, no agent in $S_l$ can distinguish $G^3$ from $G^l$.
We further generalize this intuition by showing that the problem arises even if $S_1$ and $S_2$ intersect;
they just cannot intersect in more than $k$ agents, where $k$ is the degree of $i$ in round $m$.
We also show that, if $S_1$ and $S_2$ contain all the agents that can punish $i$ 
no later than $\rho$ rounds after $m$, then no protocol can avoid the aforementioned problem.
Given that we are considering the worst-case scenario where $i$ always defects its neighbours,
this is exactly the case when $S_1$ and $S_2$ are the agents in 
$\po{G^1}{\rho}{m}$ and $S_2 = \po{G^2}{\rho}{m}$, respectively.

To see this, notice that when $i$ defects a round-$m$ neighbour $j$,
if $(l,m')$ is a PO of $i$ for $(j,m)$, then $l$ can clearly learn of the deviation of $i$ 
and punish $i$ in round $m'$, so $l$ is an agent capable of punishing $i$ for defecting a round-$m$ neighbour.
However, since $i$ may also defect other agents in rounds after $m$,
these are not the only agents capable of punishing $i$ for a defection in round $m$.
Consider the following scenario where the neighbours of $i$ in $m$ and $m+1$ are
$j$ and $j'$ respectively. Consider three PO's of $i$: (1) $(l^1,m^1)$ is a PO for both $(j,m)$
and $(j',m+1)$, (2) $(l^2,m^2)$ is a PO for $(j',m+1)$ but not $(j,m)$,
and, similarly, (3) $(l^3,m^3)$ is a PO for $(j',m+1)$ but not $(j,m)$.
In this case, we can devise a punishment scheme were $i$ is punished
at least once for each defection, and at most twice for defecting both $j$ and $j'$:
(i) if $i$ only defects $j$, then only $l^1$ punishes $i$; and
(ii) if $i$ defects $j'$, then $l^2$ and $l^3$ punish $i$,
whereas $l^1$ does not punish $i$, whether $i$ defects $j$.
(ii) is possible, because $l^1$ can be informed of the defection of $i$ towards $j'$.
In this case, we can say that $i$ is punished for defecting $j$ by $l^2$, even though 
$(l^2,m^2)$ is not a PO of $i$ for $(j,m)$; it is a PO for an $i$-edge from a round $m' > m$.
In other words, in general, $i$ may be punished for defecting a round-$m$
neighbour by an agent $l$, in round $m'$, if $(l,m')$ is a PO of $i$ for some $i$-edge
$(j',m')$ with $m' \ge m$. So, the set of $i$-edges corresponding to agents that may punish $i$
for defecting a round-$m$ neighbour in evolving graph $G$,
in the $\rho$ rounds after $m$, is exactly $\po{G}{\rho}{m}$.

Theorem~\ref{theorem:dist} proves the need for eventual distinguishability, in order
to devise safe-bounded protocols that enforce accountability in general pairwise exchanges.
Fix a safe protocol $\vsigma^*$ that enforces accountability in general pairwise exchanges,
fix a belief system $\mu^*$ consistent with $\vsigma^*$ and $\G^*$,
and let $\rho$ be the maximum time it takes for $\vsigma^*$ to converge to a cooperation state.
\shortv{}

The proof is divided into three parts. First, we show in Lemma~\ref{lemma:minpuns}
that when agent $i$ defects $k$ neighbours in a round the expected number of 
punishments in the next $\rho-1$ rounds must increase at least by $k$.
Then, Lemma~\ref{lemma:acum} shows that, if $i$ constantly defects all neighbours
and the adversary is not restricted by eventual distinguishability,
then the expected number of punishments of $i$ grows without bound.
Finally, in the proof of Theorem~\ref{theorem:dist}, we identify a contradiction
between Lemma~\ref{lemma:acum} and the requirement of self-stabilization in bounded time.

\begin{lemma}
\label{lemma:minpuns}
For every $G \in \G^*$, agent $i$,
information set $I_i \in \I_i(G)$, and protocol $\vsigma' = (\sigma_i',\vsigma^*_{-i})$
where $i$ omits messages to $k$ neighbours at $I_i$ and follows $\sigma_i^*$ afterwards, we have
$$\mathbb{E}^{\vsigma',\rho}[P_i \mid G,I_i] \ge \mathbb{E}^{\vsigma^*,\rho}[P_i \mid G,I_i] + k.$$
\end{lemma}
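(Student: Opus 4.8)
The plan is to argue by contradiction: suppose that $\mathbb{E}^{\vsigma',\rho}[P_i \mid G,I_i] < \mathbb{E}^{\vsigma^*,\rho}[P_i \mid G,I_i] + k$. Since $i$ cooperates at $I_i$ under $\vsigma^*$ (because $\vsigma^*$ enforces accountability), whereas under $\vsigma'$ agent $i$ omits messages to $k$ neighbours and follows $\sigma_i^*$ afterwards, I want to compare the total discounted utilities $u_i(\vsigma^* \mid G,I_i)$ and $u_i(\vsigma' \mid G,I_i)$. The immediate gain of the deviation at round $m$ is exactly $k$: omitting a message to each of the $k$ neighbours saves the normalized sending cost $1$ and, since $\vsigma^*$ is safe and $\beta > 1+\alpha$, the neighbours were going to cooperate (so $i$ still receives all their values in round $m$ — the deviation does not cost $i$ any benefit in round $m$). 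In rounds $m' > m$, the only utility difference between $\vsigma^*$ and $\vsigma'$ comes from punishments that $i$ receives, since $i$ follows $\sigma_i^*$ after $m$ and the other agents follow $\vsigma^*_{-i}$; each such punishment costs $i$ exactly $\pi \ge \beta$.

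The core step is to bound the discounted future loss by the (undiscounted, truncated) difference in expected punishment counts plus a tail term. First I would pick $\rho$ as the convergence bound, so that after $\rho$ rounds past the last deviation the protocol is back in a cooperation state; since $i$ does not deviate after $m$, by round $m+\rho$ all punishments triggered by this single deviation have ended, hence the punishment-count difference over the full future equals the difference restricted to rounds $m' \in \{m+1,\dots,m+\rho-1\}$, which is precisely $\mathbb{E}^{\vsigma',\rho}[P_i \mid G,I_i] - \mathbb{E}^{\vsigma^*,\rho}[P_i \mid G,I_i]$. Wait — I must be careful: punishments within those $\rho-1$ rounds are discounted, so the discounted future loss is at most $\pi \cdot (\mathbb{E}^{\vsigma',\rho}[P_i \mid G,I_i] - \mathbb{E}^{\vsigma^*,\rho}[P_i \mid G,I_i])$ but at least $\delta^{\rho}\pi$ times that difference. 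Combining, $u_i(\vsigma' \mid G,I_i) - u_i(\vsigma^* \mid G,I_i) \ge k - \pi(\mathbb{E}^{\vsigma',\rho}[P_i \mid G,I_i] - \mathbb{E}^{\vsigma^*,\rho}[P_i \mid G,I_i])$. Under the contradiction hypothesis the parenthesized difference is at most $k-1$ (it is integer-valued as a difference of expected counts? — not quite; it is real-valued, so I should instead use strictness: it is $< k$), so I would instead work directly: if it is $< k$ then... this is where the argument needs sharpening.

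The honest version: I would show $u_i(\vsigma' \mid G,I_i) \ge u_i(\vsigma^* \mid G,I_i)$ strictly whenever the punishment increase is strictly less than $k$, contradicting that $\vsigma^*$ is a \oape{\G^*} (by the One-Shot-Deviation Principle, it suffices that $i$ gains nothing by the one-shot deviation to $\vsigma' = \vsigma^*|_{I_i, a_i'}$ where $a_i'$ omits to the $k$ neighbours). Concretely, since $\pi \ge \beta$ and $\beta$ can be taken arbitrarily close to $1+\alpha$ with $\alpha$ small — no, $\pi$ is fixed by the model at $\pi \ge \beta$ and we need the bound for \emph{all} such utilities — I would use that the gain $k$ must be outweighed, i.e. $\delta^{\rho}\pi \cdot (\text{increase}) \ge k$; since this must hold for $\delta$ arbitrarily close to $1$ and $\pi$ as small as $\beta$, and $\beta$ as small as just above $1+\alpha$ with $\alpha \to 0$, we get $\text{increase} \ge k$ in the limit, hence $\text{increase} \ge k$ since the increase is determined by $G$, $I_i$ and $\vsigma^*$ independently of $\delta,\beta,\alpha$. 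The main obstacle is making this limiting argument rigorous: I must verify that the expected punishment counts $\mathbb{E}^{\vsigma',\rho}[P_i\mid G,I_i]$ and $\mathbb{E}^{\vsigma^*,\rho}[P_i\mid G,I_i]$ depend only on $\vsigma^*,\mu^*,G,I_i$ and not on $\delta,\beta,\pi,\alpha$ — which is true because they are defined purely from the punishment-probability functions $\pra{j,m'}{\vsigma}{G,I_i}$ — and then invoke the fact that "enforces accountability" quantifies over all admissible utilities (all $\delta$ close to $1$, all $\beta > 1+\alpha$, all $\alpha \ge 0$, all $\pi \ge \beta$), so the inequality $\delta^\rho \pi \cdot (\text{increase}) \ge k$ holding for all of these forces $\text{increase} \ge k$.
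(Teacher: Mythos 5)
Your proposal is correct and follows essentially the same route as the paper's proof: argue by contradiction, note the immediate gain of $k$ from the saved sending costs, bound the future loss by the per-punishment loss times the increase in expected punishments (confined to the next $\rho-1$ rounds by boundedness), and then exploit the fact that the expected punishment counts are independent of the utility parameters while the equilibrium must hold for all $\beta$ arbitrarily close to $1+\alpha$. The paper's version states the per-punishment loss as $\beta$ and picks $\beta$ close to $1$ exactly as you do in your final paragraph, so the two arguments coincide.
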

\begin{proof}
Suppose that there is $G$, $i$, $I_i$, and $\vsigma'$ where $i$ omits messages to $k$ neighbours at $I_i$ and
\begin{equation}
\label{eq:minpuns1}
\mathbb{E}^{\vsigma',\rho}[P_i \mid G,I_i] < \mathbb{E}^{\vsigma^*,\rho}[P_i \mid G,I_i] + k.
\end{equation}

At $I_i$, $i$ avoids the cost $k$ of sending messages by following $\sigma_i'$ instead of $\sigma_i^*$,
whereas the expected benefits and costs of receiving messages and being punished in round $m$ are the same,
whether $i$ follows $\sigma_i^*$ or $\sigma_i'$.
In every round $m' > m$ and interaction with round-$m'$ neighbour $j$,
both $i$ and $j$ either cooperate with or punish each other, 
given that $\vsigma^*$ is safe.
Thus, $i$ incurs the fixed cost $1+\alpha$ of sending and receiving messages,
and the expected utility loss of following $\sigma_i'$ instead
of $\sigma_i^*$ in an interaction with $j$
is upper bounded by the probability of $j$
punishing $i$ times the maximum loss of $\beta$.
Let $S$ be the largest set of $i$-edges such that
$\Delta(j,m') > 0$ for every $(j,m') \in S$, where
$\Delta(j,m') = \pra{j,m'}{\vsigma'}{G,I_i} - \pra{j,m'}{\vsigma^*}{G,I_i}$.
Since $\vsigma^*$ is self-stabilizing in at most $\rho$ rounds, for every $(j,m') \in S$,
we have $m' - m < \rho$, and
\begin{equation}
\label{eq:lemma2}
\begin{array}{lll}
&u_i(\vsigma' \mid G,I_i) - u_i(\vsigma^* \mid G,I_i) & \ge\\
\ge & k - \sum_{(j',m') \in S} \delta^{m' - m}\Delta(j',m') \beta & \ge\\
\ge & k - \sum_{(j',m') \in S} \Delta(j',m') \beta & \ge\\
\ge &  k - (\mathbb{E}^{\vsigma',\rho}[P_i \mid G,I_i] - \mathbb{E}^{\vsigma^*,\rho}[P_i \mid G,I_i]) \beta.
\end{array}
\end{equation}

By (\ref{eq:minpuns1}) and the assumption that $\vsigma^*$ is independent
of the utilities, there is a value of $\beta > 1$ sufficiently
close to $1$ such that, for some $G$ and $I_i$, the utility difference between
$i$ following $\sigma_i'$ and $\sigma_i^*$ is strictly positive,
regardless of the value of $\delta$,
so $i$ gains by deviating from $\sigma_i^*$ at $I_i$.
This is a contradiction to $\vsigma^*$ being a \oape{\G^*} for every $\beta > 1$,
thus proving the result.
\end{proof}

\fullv{The following lemma is useful in the proof.
Let $\sigma_i'$ be a strategy where $i$ defects round-$m$
neighbours but follows $\sigma_i^*$ afterwards.
We show that, if $(l,m')$ is not a PO of $i$ in $G$
for $i$-edges $(j,m'')$ with $m'' \ge m$,
then $(l,m')$ never learns from the round-$m$ defections of $i$.

\begin{lemma}
\label{lemma:needpo}
For every round $m' > m$ with $m' \le m$, set $S$ of agents $l \ne i$ such that
$(l,m')$ is not a PO of $i$ in $G$ for $i$-edges between $m$ and $m'$, and round-$m'$ information set $I_{S} = \cap_{l \in S} I_l$,
we have
\begin{equation}
\label{eq:needpo}
\sum_{h \in I_i} \mu^*(h \mid G,I_i) \sum_{h' \in I_{S}} \pra{h'}{\vsigma'}{G,h} =
\sum_{h \in I_i} \mu^*(h \mid G,I_i)\sum_{h' \in I_S} \pra{h'}{\vsigma^*}{G,h}.
\end{equation}
\end{lemma}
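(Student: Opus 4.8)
The plan is to prove the lemma by induction on $m'$, following closely the template of Lemma~\ref{lemma:hide} and Proposition~\ref{prop:indist}. First I would recast the statement so the induction has something to grip on: for each round $m''$ with $m < m'' \le m'$ let $S_{m''}$ be the set of agents $l \ne i$ such that no $i$-edge $(j,\bar m)$ with $m \le \bar m < m''$ satisfies $(j,\bar m) \leadsto_i^G (l,m'')$, so that $S = S_{m'}$; it then suffices to prove the analogue of~(\ref{eq:needpo}) for every round $m''$ and its set $S_{m''}$. Throughout I would use the fact that $\vsigma'$ and $\vsigma^*$ differ only in that $i$ omits its round-$m$ messages to its round-$m$ neighbours, and that at every round strictly after $m$ all agents, $i$ included, play by the same rules.

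For the base case $m'' = m+1$ I would first note that no $l \in S_{m+1}$ can be a round-$m$ neighbour of $i$: otherwise $(l,m)$ would be an $i$-edge and $(l,m) \leadsto_i^G (l,m+1)$ holds trivially, contradicting $l \in S_{m+1}$. Hence $i$ sends nothing to $l$ at round $m$, none of $l$'s round-$m$ neighbours is $i$, and the round-$m$ actions of $l$ and of all its round-$m$ neighbours are drawn from $\vsigma^*_{-i}$ regardless of whether $i$ follows $\sigma_i'$ or $\sigma_i^*$ at $I_i$; so the group information set $I_{S_{m+1}}$ at round $m+1$ has the same distribution under both protocols, and conditioning on $I_i$ and averaging against $\mu^*$ exactly as in Proposition~\ref{prop:indist} gives~(\ref{eq:needpo}) at $m+1$. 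For the inductive step the structural claim I would establish is that $S_{m''+1}$ is causally closed inside $S_{m''}$: if $l \in S_{m''+1}$ then $l \in S_{m''}$, and every round-$m''$ neighbour $o$ of $l$ satisfies $o \ne i$ and $o \in S_{m''}$. Each of these follows by a short argument by contradiction, using transitivity of $\leadsto_i^G$ together with the trivial $(l,m'') \leadsto_i^G (l,m''+1)$ (this also excludes $o = i$, since then $(l,m'')$ is an $i$-edge reaching $(l,m''+1)$) and, for $o \ne i$ with $\{o,l\} \in G^{m''}$, the implication $(j,\bar m) \leadsto_i^G (o,m'') \Rightarrow (j,\bar m) \leadsto_i^G (l,m''+1)$. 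Given this closure, since all agents play by the same rules at round $m'' > m$ and, by the induction hypothesis, the group information set of $S_{m''}$ at round $m''$ has the same distribution under $\vsigma'$ and $\vsigma^*$, the round-$m''$ actions of the agents of $S_{m''+1} \subseteq S_{m''}$ and of their round-$m''$ neighbours (all lying in $S_{m''}$) are determined by the common randomness of $\vsigma^*$, so the group information set of $S_{m''+1}$ at round $m''+1$ has the same distribution as well; taking $m'' = m'$ finishes the proof.

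I expect the main obstacle to be the single ingredient absent from Lemma~\ref{lemma:hide}, where $i$ ran an explicit hiding strategy. Here $i$ merely reverts to $\sigma_i^*$ after round $m$, so to justify using $\leadsto_i^G$ — a reachability relation that discards paths through $i$ — and quantifying over all $i$-edges of round $\ge m$ rather than only round $m$, I must show that $i$, executing $\sigma_i^*$ on the information sets it reaches after its own round-$m$ defection, never relays to another agent any monitoring information implicating itself. The plan here is to invoke that $\vsigma^*$ is a \oape{\G^*}, so $\sigma_i^*$ is a best response at every information set of $i$, including off-path ones; since forwarding monitoring information that implicates $i$ can only weakly increase the number of future punishments of $i$, a best response never does so. Once this is in place, the remaining bookkeeping is identical to that of Lemma~\ref{lemma:hide}.
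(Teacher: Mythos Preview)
Your inductive argument via the sets $S_{m''}$ and the closure claim is correct and is exactly the paper's proof; the paper states the same closure property (``the information set observed by agents in $S_{m''}$ depends only on the actions of agents in $S_{m''-1}$'') and leaves the induction as direct.

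The last paragraph, however, manufactures a difficulty that your own closure claim has already dissolved. You worry that $i$, playing $\sigma_i^*$ from an off-path information set after round $m$, might leak incriminating information along a path through $i$, and you propose a best-response argument to rule this out. But your closure claim already shows that for every $l \in S_{m''+1}$ no round-$m''$ neighbour of $l$ is $i$: if it were, then $(l,m'')$ is an $i$-edge with $m \le m'' < m''+1$ and $(l,m'') \leadsto_i^G (l,m''+1)$, so $l \notin S_{m''+1}$. Hence agents in $S$ never receive anything from $i$ at any round $\ge m$, and $i$'s post-$m$ behaviour, whatever it is, is irrelevant to the distribution of $I_S$. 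This is precisely the reason the lemma quantifies over \emph{all} $i$-edges at rounds $\ge m$ rather than only the round-$m$ ones: that stronger quantification is what makes the $i$-avoiding reachability $\leadsto_i^G$ sufficient without any hiding strategy. Your proposed best-response justification is not only unnecessary but does not work as stated: the equilibrium condition only says $i$ cannot \emph{gain} by deviating from $\sigma_i^*$, not that $\sigma_i^*$ is the unique best response, so $\sigma_i^*$ could well have $i$ forward self-incriminating information if $i$ is indifferent. Drop the last paragraph and the proof is complete.
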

\begin{proof}
Fix $m'$ and $S$. We have that for all $l \in S$ and $i$-edge $(j,m)$,
$(j,m) \leadsto^G (l,m')$ is false and $i$ does not interact with $l$ prior to $m'$.
Given round $m''$, let $S_{m''}$ be the set of agents $l$ such that 
$(l,m'')$ is not a PO of $i$ in $G$ for $i$-edges between $m$ and $m''$.
At every round $m'' > m$ with $m'' < m'$, the information set
observed by agents in $S_{m''}$, depends only on the actions of 
agents not in $S_{m''-1}$. Using this fact, it is direct
to show the result by induction on $m''$.
\end{proof}
}

We can now prove Lemma~\ref{lemma:acum}.

\begin{lemma}
\label{lemma:acum}
If the adversary is not restricted by eventual distinguishability, then there exist
$G \in \G^*$ and agent $i$ such that for every $c > 0$, there is $I_i \in \I_i(G)$ such that
$$\mathbb{E}^{\vsigma^*,\rho}[P_i \mid G,I_i] \ge c.$$
\end{lemma}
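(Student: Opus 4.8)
The plan is to combine the three tools already available — Lemma~\ref{lemma:minpuns} (a deviation never ``costs'' fewer than one punishment per defected neighbour), Proposition~\ref{prop:preserve} (any surplus of owed punishments carries forward, losing at most the current degree), and Proposition~\ref{prop:indist} (indistinguishable evolving graphs induce the same observation, hence the same punishment, probabilities) — with the defining feature of an indistinguishable round (two disjoint-enough sets of punishers) to show that, when $i$ relentlessly defects all of its neighbours, the number of punishments it is owed strictly grows at every indistinguishable round and never decreases otherwise. First I would instantiate the negation of eventual distinguishability at the fixed $\rho$ (the convergence time of $\vsigma^*$): for every round bound it yields some $(G,i,m)$ with $m$ past the bound and $m$ being $(G,i,\rho)$-indistinguishable; using that there are only $n$ agents I fix $i$ and an evolving graph $G\in\G^*$ for which this recurs at infinitely many rounds $m_1<m_2<\cdots$, each witnessed by a pair $G^1_t,G^2_t\in\G^*$. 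Throughout, $\sigma_i'$ denotes the strategy that defects all neighbours of $i$ in every round and $\vsigma'=(\sigma_i',\vsigma^*_{-i})$.

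The core step is the claim that a one-round deviation at an indistinguishable round $m_t$ costs $i$ strictly more than $k$ punishments, where $k$ is the round-$m_t$ degree of $i$. In $G^1_t$, consider the one-round deviation $\vsigma''=(\sigma_i'',\vsigma^*_{-i})$ that defects all round-$m_t$ neighbours of $i$ and then follows $\sigma_i^*$: since $\vsigma^*$ enforces accountability for \emph{every} utility with $\beta>1+\alpha$ and $i$ gains essentially $1$ per defected neighbour, Lemma~\ref{lemma:minpuns} forces $\sum_{(j,m')}\pra{j,m'}{\vsigma''}{G^1_t,I_i}\ge k$, and by Lemma~\ref{lemma:needpo} every $i$-edge carrying positive punishment probability lies in $\po{G^1_t}{\rho}{m_t}$; the same holds in $G^2_t$ with $\po{G^2_t}{\rho}{m_t}$. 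Because each $G^l_t$ is indistinguishable from $G$ to every agent occurring in $\po{G^l_t}{\rho}{m_t}$, Proposition~\ref{prop:indist} transfers these probabilities verbatim to $G$: in $G$ the edges of $\po{G^1_t}{\rho}{m_t}$ contribute total expected punishment $\ge k$ and those of $\po{G^2_t}{\rho}{m_t}$ contribute total $\ge k$, while all punishments for the round-$m_t$ defections occur on $\po{G^1_t}{\rho}{m_t}\cup\po{G^2_t}{\rho}{m_t}=\po{G}{\rho}{m_t}$. Since the two sets share fewer than $k$ edges and each edge contributes at most one punishment, inclusion–exclusion gives a total of at least $2k-(k-1)=k+1$, i.e.\ $\mathbb{E}^{\vsigma'',\rho}[P_i\mid G,I_i]\ge\mathbb{E}^{\vsigma^*,\rho}[P_i\mid G,I_i]+k+1$ at a suitable round-$m_t$ information set.

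Then I would track, along the run of $\vsigma'$, the quantity $\mathbb{E}^{\vsigma^*,\rho}[P_i\mid G,\cdot]$ — the punishments still owed to $i$ for past defections if it switched back to $\vsigma^*$ — over a sequence of information sets $I_i^{(1)},I_i^{(2)},\dots$, each consistent with $i$ being the sole deviator. Passing from a round-$m$ set $I_i^{(s)}$ to a round-$(m{+}1)$ set when $i$ defects all $k$ of its round-$m$ neighbours, Lemma~\ref{lemma:minpuns} (or, at an indistinguishable round, its strengthening above) says the deviation raises the forward count by at least $k$ (resp.\ $k{+}1$), and Proposition~\ref{prop:preserve} then produces a round-$(m{+}1)$ information set $I_i^{(s+1)}$ on which the count has dropped by at most $k$; the $+k$ and $-k$ exactly cancel at an ordinary round and leave a net surplus of at least $1$ at each indistinguishable round. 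Since $G$ contains infinitely many such rounds, $\mathbb{E}^{\vsigma^*,\rho}[P_i\mid G,I_i^{(s)}]$ is non-decreasing and unbounded, so for every $c>0$ some $I_i^{(s)}$ meets the required bound.

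The main obstacle I expect is the bookkeeping that glues the per-graph punishment estimates into an estimate for $G$: Proposition~\ref{prop:indist} equates observation probabilities only for agents whose causal cone is unaffected, so I must verify that every $i$-edge of $\po{G^l_t}{\rho}{m_t}$ genuinely ``sees the same graph'', that the transferred punishments are attributable to the round-$m_t$ defections rather than to later ones (which is precisely why I restrict this step to a single-round deviation and invoke Lemma~\ref{lemma:needpo}), and that the double counting on the fewer-than-$k$ shared edges is controlled. A secondary difficulty is that Proposition~\ref{prop:preserve} is phrased for one deviation at a time, so the surplus must be threaded through the infinitely many deviating rounds without erosion, keeping the sliding window in the definition of $\mathbb{E}^{\cdot,\rho}[P_i\mid\cdot]$ aligned; together with the initial reduction to a single $G$ and agent $i$, this is where the proof requires the most care.
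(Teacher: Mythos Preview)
Your proposal is correct and follows essentially the same route as the paper: fix $G$, $i$, and an infinite sequence of $(G,i,\rho)$-indistinguishable rounds; at each such round combine Lemma~\ref{lemma:minpuns} applied in $G^1_t$ and $G^2_t$, Lemma~\ref{lemma:needpo} to localise the extra punishments to $\po{G^l_t}{\rho}{m_t}$, and Proposition~\ref{prop:indist} to transfer the probabilities to $G$, then use inclusion--exclusion with $|\po{G^1_t}{\rho}{m_t}\cap\po{G^2_t}{\rho}{m_t}|<k$ to obtain the $k{+}1$ surplus, and finally invoke Proposition~\ref{prop:preserve} to thread the surplus forward. The obstacles you flag (the scope of Proposition~\ref{prop:indist} and the iterated use of Proposition~\ref{prop:preserve}) are exactly the points the paper glosses over as well.
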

\begin{proof}
Suppose that the adversary is not restricted by eventual distinguishability.
There is $G \in \G^*$, $i$, and an infinite sequence of rounds $m_1, m_2,\ldots$
such, that for every $k$, $m_k$ is not $(G,i,\rho)$-distinguishable.
Given round $m$, let $\vsigma' = (\sigma_i',\vsigma^*_{-i})$ be the 
protocol where $i$ always defects up to and including round $m$, and follows $\sigma_i^*$ afterwards.
By Lemma~\ref{lemma:minpuns}, for every $G' \in \G^*$ and round-$m$ information set $I_i \in \I_i(G)$,
\begin{equation}
\label{eq:indist2}
\mathbb{E}^{\vsigma',\rho}[P_i \mid G',I_i] \ge \mathbb{E}^{\vsigma^*,\rho}[P_i \mid G',I_i] + k,
\end{equation}
where $k$ is the number of round-$m$ neighbours of $i$.
\fullv{By Proposition~\ref{lemma:needpo},}
\shortv{We show in the full paper that}
only agents agents in $\po{G'}{\rho}{m}$ vary the probability of punishing $i$,
hence we have
$$\mathbb{E}^{\vsigma',\rho}[P_i \mid G',I_i] = \mathbb{E}^{\vsigma^*,\rho}[P_i \mid G',I_i]  + \sum_{(j,m') \in \po{G'}{\rho}{m}} \Delta(j,m').$$

In particular, this is true for $m=m_k$ and $G' \in \{G^1,G^2\}$ such that
(1) for every $(j,m) \in \po{G'}{\rho}{m}$, $G'$ is indistinguishable from $G$ to $j$ at $m$,
(2) $|\po{G^1}{\rho}{m} \cap \po{G^2}{\rho}{m}| < k$,
and (3) $\po{G^1}{\rho}{m} \cup \po{G^2}{\rho}{m} = \po{G}{\rho}{m}$.
\fullv{By Proposition~\ref{prop:indist},
every such $j$ must punish $i$ with the same probability both in $G$ and $G'$,
because $G'$ is indistinguishable from $G$ to $j$ at $m$,
so by (\ref{eq:indist2}) we can write}
\shortv{In the full paper, we show that every such $j$ must
punish $i$ with the same probability both in $G$ and $G'$,
because $G'$ is indistinguishable from $G$ to $j$ at $m$,
so by (\ref{eq:indist2}) we can write}
$$
\begin{array}{lll}
& \mathbb{E}^{\vsigma',\rho}[P_i \mid G,I_i] &= \\
=& \mathbb{E}^{\vsigma^*,\rho}[P_i \mid G,I_i] + &\\
 &+ \sum_{(j,m') \in \po{G^1}{\rho}{m}} \Delta(j,m)
  + \sum_{(j,m') \in \po{G^2}{\rho}{m}} \Delta(j,m) -&\\
  &- \sum_{(j,m') \in \po{G^1}{\rho}{m} \cap \po{G^2}{\rho}{m}} \Delta(j,m) & \ge\\
\ge & \mathbb{E}^{\vsigma^*,\rho}[P_i \mid G,I_i] + 2k -(k - 1) & \ge \\
\ge & \mathbb{E}^{\vsigma^*,\rho}[P_i \mid G,I_i] + k + 1.
\end{array}
$$

\fullv{By Proposition~\ref{prop:preserve},}
\shortv{By the properties of consistent beliefs shown in the full paper,}
there is round-$m+1$ $I_i' \in \I_i(G)$ such that
$$\mathbb{E}^{\vsigma',\rho}[P_i \mid G,I_i'] \ge \mathbb{E}^{\vsigma^*,\rho}[P_i \mid G,I_i] 
+ k +1 - k \ge \mathbb{E}^{\vsigma^*,\rho}[P_i \mid G,I_i] + 1.$$
This implies that, after every round $m_k$, there is an information set such that
the expected number of punishments of $i$
increases by at least $1$ if $i$ follows $\sigma_i'$,
while it does not decrease after every other round.
It is easy to show using induction that, for every round $m$,
there is a round-$m$ information set $I_i \in I_i(G)$
such that the expected number of punishments of $i$ at $I_i$ is at least $k^*$,
where $k^*$ is the largest $k$ with $m_k < m$.
We can have $k^*$ to be arbitrarily large, by arbitrarily increasing $m$,
hence the result follows.
\end{proof}

Finally, we prove Theorem~\ref{theorem:dist}.
}

\begin{restatable}{theorem}{theoremdist}
\label{theorem:dist}
If the adversary is not restricted by eventual distinguishability,
then there is no safe-bounded protocol that enforces accountability in general pairwise exchanges.
\end{restatable}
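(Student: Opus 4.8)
The plan is to obtain a contradiction by combining Lemma~\ref{lemma:acum} with the elementary observation that the expected number of punishments of an agent in any window of $\rho-1$ consecutive rounds is bounded by a quantity depending only on $n$ and $\rho$.

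First I would assume, toward a contradiction, that the adversary is not restricted by eventual distinguishability and yet some safe-bounded protocol $\vsigma^*$ enforces accountability in general pairwise exchanges. Since $\vsigma^*$ is bounded, it converges to a cooperation state from any state within some finite number $\rho$ of rounds; this $\rho$ is exactly the constant with respect to which Lemmas~\ref{lemma:minpuns} and~\ref{lemma:acum} are stated, and it is here that boundedness is used. Fix a belief system $\mu^*$ consistent with $\vsigma^*$ and $\G^*$.

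Next I would invoke Lemma~\ref{lemma:acum}: since the adversary is not restricted by eventual distinguishability, there are $G \in \G^*$ and an agent $i$ such that for every $c>0$ there is a round-$m$ information set $I_i \in \I_i(G)$ with $\mathbb{E}^{\vsigma^*,\rho}[P_i \mid G,I_i] \ge c$. But $\mathbb{E}^{\vsigma^*,\rho}[P_i \mid G,I_i]$ is by definition the sum of the probabilities $\pra{j,m'}{\vsigma^*}{G,I_i}$ over $i$-edges $(j,m')$ with $m < m' < m+\rho$; there are at most $\rho-1$ admissible values of $m'$ and at most $n-1$ neighbours $j$ of $i$ per round, so this sum is at most $(\rho-1)(n-1)$. (Intuitively: once $i$ resumes following $\sigma_i^*$, the safe-bounded protocol returns to cooperation within $\rho$ rounds, so only finitely many further punishments of $i$ can occur.) Taking $c = (\rho-1)(n-1)+1$ then contradicts Lemma~\ref{lemma:acum}, which finishes the proof.

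The real content of the theorem lies in the supporting results, so the \emph{main obstacle} is not in this final gluing step but in Lemma~\ref{lemma:acum} (and the Lemma~\ref{lemma:minpuns} and Proposition~\ref{prop:preserve} on which it rests): one must exhibit a single pair $(G,i)$ witnessing infinitely many $(G,i,\rho)$-indistinguishable rounds, argue that at each such round the ``always defect'' deviation of $i$ forces the expected number of future punishments to strictly increase by at least one (via Lemma~\ref{lemma:minpuns} together with the indistinguishability of the relevant evolving graphs), argue that at every other round this quantity never decreases (via the conservation property of Proposition~\ref{prop:preserve}), and transfer these accumulated increments to the on-path protocol $\vsigma^*$ at a reachable information set through consistency of beliefs. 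For Theorem~\ref{theorem:dist} itself, the only delicate point is to make sure that the constant $\rho$ appearing in the negation of eventual distinguishability is instantiated to the convergence time of the bounded protocol, and to state the trivial upper bound on $\mathbb{E}^{\vsigma^*,\rho}[P_i \mid G,I_i]$ correctly.
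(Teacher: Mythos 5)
Your proposal is correct and follows essentially the same route as the paper: both derive a contradiction by combining Lemma~\ref{lemma:acum} with the observation that $\mathbb{E}^{\vsigma^*,\rho}[P_i \mid G,I_i]$ is, by definition, a sum of at most $(\rho-1)(n-1)$ probabilities (the paper phrases this as $c$ exceeding the number of $i$-edges between $m$ and $m+\rho$), with boundedness of $\vsigma^*$ supplying the finite convergence time $\rho$ used throughout. Your identification of where the real work lies (Lemma~\ref{lemma:minpuns}, Proposition~\ref{prop:preserve}, and the accumulation argument of Lemma~\ref{lemma:acum}) matches the paper's structure exactly.
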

\fullv{
\begin{proof}
The proof is by contradiction. Suppose that the adversary is not restricted by eventual distinguishability.
By Lemma~\ref{lemma:acum},
there is $G$ and $i$ such that
for every $c > 0$, there is a round $m$ and round-$m$ $I_i \in \I_i(G)$ such that
$\mathbb{E}^{\vsigma^*,\rho}[P_i \mid G,I_i] \ge c$.
This is true for $m$ large enough such that
$c$ is larger than the number of $i$-edges between $m$ and $m+\rho$.
This is a contradiction to $\vsigma^*$ converging to a cooperation state in $\rho$ rounds.
\end{proof}
}

\fullv{
\subsection{Need for Distinguishability in Non-bounded Protocols}
We now prove a slightly weaker result than the previous section,
for the more general case of safe but possibly non-bounded protocols.
We show that, in order to enforce accountability in general pairwise exchange
protocols, the adversary must be restricted by \emph{frequent distinguishability},
which we now define. As in the definition of eventual distinguishability,
frequent distinguishability has two parts: (1) the definition of indistinguishable
round, and (2) the requirement that indistinguishable rounds do not occur frequently,
where by frequently we mean they keep occurring at every interval of $c$ rounds
for some constant $c$. Part (1) is the same as in eventual distinguishability.
Part (2) differs exactly in that indistinguishable rounds may occur infinitely often,
but interval between two consecutive indistinguishable rounds cannot be bounded.
Formally, we say that the \emph{adversary is restricted by frequent distinguishability}
iff for every constant $c > 0$, there is $\rho> 0$ such that, for all $G \in \G^*$, and agent $i$, there is 
a sequence of $c$ consecutive rounds in $G$ that are not $(G,i,\rho)$-indistinguishable.

Theorem~\ref{theorem:dist-gen} proves the need for frequent distinguishability.
The proof is almost identical to that of Theorem~\ref{theorem:dist},
in that we show by contradiction that, if the adversary is not restricted
by frequent distinguishability, then the expected number of punishments of some
agent is unbounded. The only difference is that, in every indistinguishable round,
the expected number of punishments grows by $1-\epsilon$ for an arbitrarily small $\epsilon$.

We first show that, for an arbitrarily small $\epsilon > 0$,
there is a sufficiently large delay $\rho$ such that,
if $i$ deviates from a safe equilibria protocol $\vsigma^*$ at round $m$
by first defecting $k$ neighbours, the expected
number of punishments in the next $\rho$ rounds must be higher 
by at least $k - \epsilon$ than following $\vsigma^*$.

\begin{lemma}
\label{lemma:minpuns-gen}
For every $\epsilon > 0$, there is $\rho > 0$ such that, for all $G \in \G^*$, agent $i$,
information set $I_i \in \I_i(G)$, and protocol $\vsigma' = (\sigma_i',\vsigma^*_{-i})$
where $i$ omits messages to $k$ neighbours at $I_i$ and follows $\sigma_i^*$ afterwards, we have
$$\mathbb{E}^{\vsigma',\rho}[P_i \mid G,I_i] \ge \mathbb{E}^{\vsigma^*,\rho}[P_i \mid G,I_i] + k - \epsilon.$$
\end{lemma}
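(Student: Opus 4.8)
The plan is to replay the proof of Lemma~\ref{lemma:minpuns}, the only genuinely new ingredient being that the stream of future punishments is now infinite and must be truncated at a carefully chosen horizon $\rho$. Fix $\epsilon > 0$; we may assume $\epsilon < 1$. Since $\delta \in (0,1)$ and $n$ are fixed, and since in the limiting step below we will vary $\beta$ only over a bounded range (keeping the other utility parameters fixed and small), the difference between the expected round-$m'$ utilities of $i$ under any two protocols is bounded in absolute value by a constant $yn$, with $y$ uniform over that range ($y$ being a bound on the per-interaction utility gap of the sort already used in the proof of Theorem~\ref{theorem:citi}). First I would choose $\rho = \rho(\epsilon)$ large enough that $yn\,\delta^{\rho}/(1-\delta) < \epsilon/2$, so that everything occurring at or after round $m+\rho$ contributes at most $\epsilon/2$ in discounted expectation.

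Next I would argue by contradiction, mirroring Lemma~\ref{lemma:minpuns} step for step. Suppose there are $G \in \G^*$, an agent $i$, a round-$m$ information set $I_i \in \I_i(G)$, and $\vsigma' = (\sigma_i',\vsigma^*_{-i})$ where $i$ omits messages to $k$ round-$m$ neighbours at $I_i$ and follows $\sigma_i^*$ afterwards, with $\mathbb{E}^{\vsigma',\rho}[P_i \mid G,I_i] < \mathbb{E}^{\vsigma^*,\rho}[P_i \mid G,I_i] + k - \epsilon$; write $D = \mathbb{E}^{\vsigma',\rho}[P_i \mid G,I_i] - \mathbb{E}^{\vsigma^*,\rho}[P_i \mid G,I_i]$, so $D < k - \epsilon$. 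I would then bound $u_i(\vsigma' \mid G,I_i) - u_i(\vsigma^* \mid G,I_i)$ round by round. In round $m$, since $\vsigma^*$ is safe agent $i$ would have sent to every neighbour under $\sigma_i^*$, so the defection saves exactly the sending cost $k$ while leaving the messages received by $i$, the benefits obtained by $i$, and any ongoing punishments of $i$ unchanged (the $k$ defected neighbours do not observe the defection until the end of $m$), contributing $+k$. In each round $m < m' < m+\rho$, safety again fixes the sending and receiving costs of $i$, and a single deviation only weakly increases the punishments of $i$, so the round-$m'$ difference is at least $-\beta$ times the increase in the probability that $i$ is punished that round; summing over these rounds gives a total of at least $-\beta D$. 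The remaining rounds $m' \ge m+\rho$ contribute at least $-yn\,\delta^{\rho}/(1-\delta) > -\epsilon/2$ by the choice of $\rho$. Combining, and using $D < k-\epsilon$, yields $u_i(\vsigma' \mid G,I_i) - u_i(\vsigma^* \mid G,I_i) > k - \beta(k-\epsilon) - \epsilon/2$.

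Finally, since $\vsigma^*$ enforces accountability in general pairwise exchanges it is a \oape{\G^*} for every utility with $\beta > 1+\alpha$, while $\vsigma^*$ — and hence $\sigma_i'$, $I_i$, and $D$ — does not depend on the utilities. As $\beta \to 1^{+}$ the quantity $k - \beta(k-\epsilon)$ tends to $k - (k-\epsilon) = \epsilon$, so for $\beta$ sufficiently close to $1$ (and within the bounded range used to define $y$) we get $k - \beta(k-\epsilon) > 3\epsilon/4$, hence $u_i(\vsigma' \mid G,I_i) - u_i(\vsigma^* \mid G,I_i) > \epsilon/4 > 0$; that is, $i$ strictly gains by deviating at $I_i$ under that utility, contradicting the equilibrium condition of \oape{\G^*}, and proving the lemma. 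The main obstacle — and the only place the argument departs from Lemma~\ref{lemma:minpuns}, where the protocol's boundedness forces all punishments into $\rho$ rounds for free — is controlling the infinite tail uniformly: $\rho$ must be fixed before $\beta$ is driven to $1$, so the per-round bound $yn$ must hold for every $\beta$ used in the limit. Pinning a fixed bounded range for $\beta$ (with the other parameters fixed) makes $y$ an absolute constant and dissolves the difficulty; the remainder is bookkeeping identical to Lemma~\ref{lemma:minpuns}.
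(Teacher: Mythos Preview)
Your proof is correct and follows essentially the same route as the paper: assume the bound fails, truncate the infinite punishment stream at a horizon $\rho$ chosen so that the discounted tail is smaller than a fraction of $\epsilon$, and then exploit that $\vsigma^*$ must remain a \oape{\G^*} for $\beta$ arbitrarily close to $1$ to force a strictly profitable deviation. Your version is in fact more careful than the paper's about two bookkeeping points the paper leaves implicit --- fixing a bounded range for $\beta$ (so the per-round constant $y$ is uniform before $\rho$ is chosen) and making the limit $\beta \to 1^{+}$ explicit --- but the structure and the key idea are the same.
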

\begin{proof}
By contradiction, if the Lemma is false, then there exists $\epsilon > 0$
such that, for an arbitrarily large $\rho$, we have
$$\mathbb{E}^{\vsigma',\rho}[P_i \mid G,I_i] < \mathbb{E}^{\vsigma^*,\rho}[P_i \mid G,I_i] + k - \epsilon.$$
By following $\sigma_i'$, $i$ gains $k$ while losing at most $\delta(k-\epsilon) \beta$
in the next $\rho$ rounds. In the rounds after $m+\rho$, $i$ loses at most $\beta$ per round.
Therefore, the utility difference is at most
$$k - \delta \beta ( k - \epsilon + \frac{\delta^{\rho}}{1-\delta}).$$
The constant $\rho$ can be arbitrarily large
so that $\epsilon > \delta^{\rho}/(1-\delta)$, since $\epsilon$ and $\delta$ are constant on $\rho$.
If this is the case, then $i$ gains by following $\sigma_i'$, and $\vsigma^*$ cannot
be a \oape{\G^*}. This is a contradiction, concluding the proof.
\end{proof}

We now prove that, if the adversary is not restricted by frequent distinguishability,
then the expected number of punishments of some agent is unbounded.

\begin{lemma}
\label{lemma:acum-gen}
If the adversary is not restricted by frequent distinguishability, then, for every $\rho > 0$,
there is $G \in \G^*$ and agent $i$ such that for every $b > 0$, there is $I_i \in \I_i(G)$ such that
$$\mathbb{E}^{\vsigma^*,\rho}[P_i \mid G,I_i] \ge b.$$
\end{lemma}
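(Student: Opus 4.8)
The plan is to follow the proof of Lemma~\ref{lemma:acum} almost verbatim, replacing Lemma~\ref{lemma:minpuns} by Lemma~\ref{lemma:minpuns-gen} and carefully tracking the resulting $\epsilon$-slack. Fix the constant $c>0$ witnessing that the adversary is not restricted by frequent distinguishability: for every $\rho'>0$ there are $G\in\G^*$ and an agent $i$ such that every window of $c$ consecutive rounds of $G$ contains a $(G,i,\rho')$-indistinguishable round, i.e.\ such rounds recur with gap at most $c$. Choose $\epsilon>0$ with $(1+c)\epsilon<1$; let $\rho$ be any value at least as large as the bound supplied by Lemma~\ref{lemma:minpuns-gen} for this $\epsilon$ (this is the regime used in Theorem~\ref{theorem:dist-gen}, and $\mathbb{E}^{\vsigma^*,\rho}[P_i\mid G,I_i]$ is non-decreasing in $\rho$), and let $G$ and $i$ be as furnished above for $\rho'=\rho$. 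Fix a belief system $\mu^*$ consistent with $\vsigma^*$ and $\G^*$.

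As in Lemma~\ref{lemma:acum}, for a round $m$ let $\vsigma'=(\sigma_i',\vsigma^*_{-i})$ be the protocol in which $i$ defects all of its neighbours in every round up to and including $m$ and follows $\sigma_i^*$ afterwards, and write $\Delta(j,m')=\pra{j,m'}{\vsigma'}{G,I_i}-\pra{j,m'}{\vsigma^*}{G,I_i}$. Lemma~\ref{lemma:minpuns-gen} gives $\mathbb{E}^{\vsigma',\rho}[P_i\mid G',I_i]\ge\mathbb{E}^{\vsigma^*,\rho}[P_i\mid G',I_i]+k-\epsilon$ for every $G'\in\G^*$ and round-$m$ $I_i$, with $k$ the round-$m$ degree of $i$; the argument of Lemma~\ref{lemma:needpo} shows that only the $i$-edges in $\po{G'}{\rho}{m}$ change their punishment probability, so this difference equals $\sum_{(j,m')\in\po{G'}{\rho}{m}}\Delta(j,m')$; and Proposition~\ref{prop:indist} lets us evaluate each such $\Delta(j,m')$ in $G$ whenever $G'$ is indistinguishable from $G$ to $j$ at $m'$. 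Hence, if $m$ is $(G,i,\rho)$-indistinguishable with witnesses $G^1,G^2$, inclusion--exclusion over $\po{G^1}{\rho}{m}\cup\po{G^2}{\rho}{m}=\po{G}{\rho}{m}$, together with $|\po{G^1}{\rho}{m}\cap\po{G^2}{\rho}{m}|\le k-1$ and $|\Delta(j,m')|\le 1$, yields
$$\mathbb{E}^{\vsigma',\rho}[P_i\mid G,I_i]\ \ge\ \mathbb{E}^{\vsigma^*,\rho}[P_i\mid G,I_i]+2(k-\epsilon)-(k-1)\ =\ \mathbb{E}^{\vsigma^*,\rho}[P_i\mid G,I_i]+k+1-2\epsilon.$$
Proposition~\ref{prop:preserve} then produces a round-$(m+1)$ information set $I_i'$ with $\mathbb{E}^{\vsigma^*,\rho}[P_i\mid G,I_i']\ge\mathbb{E}^{\vsigma^*,\rho}[P_i\mid G,I_i]+1-2\epsilon$. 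For an ordinary round $m$ we use only Lemma~\ref{lemma:minpuns-gen} and Proposition~\ref{prop:preserve}, obtaining a round-$(m+1)$ $I_i'$ with $\mathbb{E}^{\vsigma^*,\rho}[P_i\mid G,I_i']\ge\mathbb{E}^{\vsigma^*,\rho}[P_i\mid G,I_i]-\epsilon$. Throughout, $I_i$ is pinned down as a history consistent with only $i$ having deviated, so $\mu^*$ at $I_i$ is governed by $\vsigma'$ by Proposition~\ref{prop:onlydev}, exactly as in Lemma~\ref{lemma:acum}.

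Chaining these two estimates from round $1$ to round $M$ along a sequence of information sets, each consistent with $i$ having defected every neighbour in every round so far (starting from the trivial bound $\ge 0$ at round $1$), one obtains a round-$M$ information set $I_i$ with $\mathbb{E}^{\vsigma^*,\rho}[P_i\mid G,I_i]\ge (1-2\epsilon)K(M)-\epsilon\,(M-1-K(M))$, where $K(M)$ is the number of $(G,i,\rho)$-indistinguishable rounds below $M$. Since those rounds recur with gap at most $c$, $K(M)\ge (M-1)/c$, so the bound is at least $(M-1)\,\frac{1-(1+c)\epsilon}{c}$, which tends to infinity because $(1+c)\epsilon<1$. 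Therefore, for every $b>0$ one may pick $M$ large enough that the corresponding round-$M$ information set $I_i\in\I_i(G)$ satisfies $\mathbb{E}^{\vsigma^*,\rho}[P_i\mid G,I_i]\ge b$, which proves the lemma.

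The main obstacle is the leakage at ordinary rounds. In the bounded case (Lemma~\ref{lemma:acum}) a one-shot defection raises the expected count of punishments in the next $\rho-1$ rounds by exactly $k$, so Proposition~\ref{prop:preserve} preserves it at ordinary rounds; here Lemma~\ref{lemma:minpuns-gen} only gives $k-\epsilon$, so the count can drift downward by $\epsilon$ at each ordinary round. The argument survives only if indistinguishable rounds recur with a bounded gap and $\epsilon$ is small relative to that gap, which is exactly why \emph{frequent} (rather than merely \emph{eventual}) distinguishability is the right hypothesis, and why the quantifiers must be fixed in the order $c$, then $\epsilon$, then $\rho$ via Lemma~\ref{lemma:minpuns-gen}, then $G$ and $i$. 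The per-round estimates, the use of Proposition~\ref{prop:onlydev} to fix beliefs after a single deviation, and the handling of information sets that are valid across the indistinguishable graphs $G,G^1,G^2$ are otherwise line-for-line those of Lemma~\ref{lemma:acum}.
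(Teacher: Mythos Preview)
Your proposal is correct and follows essentially the same approach as the paper: mimic the proof of Lemma~\ref{lemma:acum}, replacing Lemma~\ref{lemma:minpuns} by Lemma~\ref{lemma:minpuns-gen}, so that indistinguishable rounds contribute a net $+1-2\epsilon$ and ordinary rounds leak at most $\epsilon$, and then use the bounded gap $c$ between indistinguishable rounds to conclude that the expected punishment count grows by at least $1-(c+1)\epsilon>0$ per window. Your write-up is in fact more explicit than the paper's (which just says ``Similar to the proof of Lemma~\ref{lemma:acum}\ldots'') and correctly fixes the quantifiers in the order $c$, then $\epsilon$, then $\rho$; the one parenthetical about monotonicity in $\rho$ is unnecessary but harmless, since the paper's own proof (and its only use in Theorem~\ref{theorem:dist-gen}) also only treats $\rho$ above the threshold from Lemma~\ref{lemma:minpuns-gen}.
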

\begin{proof}
Suppose that the adversary is not restricted by frequent distinguishability.
There exists $c > 0$ such that, for every $\rho$, there is $i$ and $G$ such that frequently,
i.e. at most every $c$ rounds, some round is $(G,i,\rho)$-indistinguishable.
Similar to the proof of Lemma~\ref{lemma:minpuns},
we can show using Lemma~\ref{lemma:minpuns-gen}
that, if $i$ keeps defecting its neighbours, then
the expected number of punishments grows $1-2\epsilon$
every $(G,i,\rho)$-indistinguishable round, while decreasing at most $\epsilon$ in other rounds,
for an arbitrarily small $\epsilon$.
This means that, every period of $c$ rounds,
the expected number of punishments grows
$$\epsilon' = 1-2\epsilon - (c-1) \epsilon = 1- (c+1) \epsilon.$$
Since $c$ is constant on $\rho$,
we can find $\rho$ large enough and $\epsilon$ small enough
such that $\epsilon < 1/(c+1)$,
in which case $\epsilon' > 0$. This shows that the expected number 
of punishments is unbounded, as we intended to prove.
\end{proof}

We can now conclude with the proof of Theorem~\ref{theorem:dist-gen}.

\begin{restatable}{theorem}{theoremdistgen}
\label{theorem:dist-gen}
If the adversary is not restricted by frequent distinguishability,
then there is no safe protocol that enforces accountability in general pairwise exchanges.
\end{restatable}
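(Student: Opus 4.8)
The plan is to argue by contradiction, in close parallel with the proof of Theorem~\ref{theorem:dist}, the only change being that the role played there by the self-stabilization bound of a \emph{bounded} protocol is now played by a trivial counting bound. Assume the adversary is not restricted by frequent distinguishability, and suppose for contradiction that some safe protocol $\vsigma^*$ enforces accountability in general pairwise exchanges; fix a belief system $\mu^*$ consistent with $\vsigma^*$ and $\G^*$, and a discount factor $\delta$.

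First I would invoke Lemma~\ref{lemma:acum-gen} with an arbitrary fixed window length $\rho$ (for concreteness $\rho = 2$): it yields an evolving graph $G \in \G^*$ and an agent $i$ such that for every $b > 0$ there is a round-$m$ information set $I_i \in \I_i(G)$ with $\mathbb{E}^{\vsigma^*,\rho}[P_i \mid G,I_i] \ge b$. Next I would observe that this quantity is in fact bounded independently of $I_i$: by definition,
$$\mathbb{E}^{\vsigma^*,\rho}[P_i \mid G,I_i] = \sum_{\mbox{$i$-edge }(j,m'): m < m' < m+\rho} \pra{j,m'}{\vsigma^*}{G,I_i} \le (\rho-1)(n-1),$$
since each summand is a probability, the window spans only $\rho-1$ rounds, and $i$ has at most $n-1$ neighbours in any round. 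Choosing $b > (\rho-1)(n-1)$ (any $b \ge n$ when $\rho = 2$) contradicts the previous line, which proves the theorem.

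So the theorem itself is essentially a one-line corollary of Lemma~\ref{lemma:acum-gen}, and the real work — the main obstacle — lies entirely upstream. The delicate part is running the accumulation argument of Lemma~\ref{lemma:acum-gen} with the window length $\rho$ as a free parameter rather than as a fixed convergence time: one needs Lemma~\ref{lemma:minpuns-gen} in the sharpened form that, for every $\epsilon>0$, there is a $\rho$ for which defecting $k$ neighbours raises the expected number of punishments in the next $\rho$ rounds by at least $k-\epsilon$, together with the near-conservation of expected punishments from Proposition~\ref{prop:preserve}. Combining these with the definition of a $(G,i,\rho)$-indistinguishable round, such a round must contribute a net increase of roughly $1-2\epsilon$ to the expected count while each of the at most $c-1$ intervening rounds erodes it by at most $\epsilon$, so every block of $c$ rounds grows the expected count by $1-(c+1)\epsilon$; since $c$ is constant in $\rho$, one can take $\rho$ large and $\epsilon<1/(c+1)$ to keep this strictly positive, forcing the expected count to grow without bound. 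The point that makes this go through — and the easiest place to slip — is that all of these estimates must hold uniformly in $\delta$ and over all $\beta>1$, so that no choice of utilities or discount factor can rescue $\vsigma^*$.
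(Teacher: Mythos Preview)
Your proposal is correct and follows essentially the same route as the paper: invoke Lemma~\ref{lemma:acum-gen} to get an information set where the expected number of punishments in a window of length $\rho$ exceeds any bound, then contradict this with the trivial counting bound on the number of $i$-edges in that window. Your direct use of the inequality $\mathbb{E}^{\vsigma^*,\rho}[P_i \mid G,I_i] \le (\rho-1)(n-1)$ is in fact cleaner than the paper's phrasing, which routes the final contradiction through Lemma~\ref{lemma:minpuns-gen} rather than stating the counting bound explicitly.

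One small point worth tightening: do not fix $\rho = 2$. Although Lemma~\ref{lemma:acum-gen} is \emph{stated} for every $\rho > 0$, its proof (which you yourself outline in your last paragraph) only delivers the unboundedness for a sufficiently large $\rho$, chosen so that the $\epsilon$ from Lemma~\ref{lemma:minpuns-gen} satisfies $\epsilon < 1/(c+1)$. The paper's own proof accordingly says ``for an arbitrarily large $\rho$''. This does no damage to your argument---your bound $(\rho-1)(n-1)$ holds for every $\rho$, so you simply take whatever $\rho$ the lemma actually hands you and choose $b > (\rho-1)(n-1)$---but the parenthetical ``for concreteness $\rho = 2$'' is at odds with the accumulation argument you describe afterwards and should be dropped.
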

\begin{proof}
By Lemma~\ref{lemma:acum-gen},
if the adversary is not restricted by frequent distinguishability,
then for an arbitrarily large $\rho$, there is $G$ and $i$ such that 
the expected number of punishments of $i$ 
in the $\rho$ rounds following any round is unbounded.
In particular, for an arbitrarily large $\rho$, there is $I_i \in \I_i(G)$
such that the expected number of punishments in the $\rho$
rounds after $I_i$ is larger than the number of $i$-edges,
and $i$ interacts with some neighbour $j$ at $I_i$,
so $i$ cannot be punished for defecting $j$ at $I_i$ in the next $\rho$ rounds.
This is a contradiction to Lemma~\ref{lemma:minpuns-gen},
concluding the proof.
\end{proof}
}

\subsection{A \oape{\G^*} for General Pairwise Exchanges with Connectivity}
\label{sec:general}

We introduce a restriction of connectivity on $\G^*$ that ensures that the adversary is restricted by eventual distinguishability
and is sufficient for devising safe protocols that enforce accountability in general pairwise exchanges.
To understand this condition, it is useful to first recall the scenario of Figure~\ref{fig:indist}.
In this scenario, the problem arises because neither agent 2 can distinguish $G^1$ from $G^3$ at round $3$
nor agent 3 can distinguish $G^2$ from $G^3$ at round $3$, and thus they cannot coordinate their actions to punish 1 only once.
This problem can be avoided if agents can learn the degree of their neighbours prior to deciding whether to punish them.
However, the knowledge of the degree is not sufficient to satisfy eventual distinguishability,
since for instance we may have a scenario similar to the one depicted in Figure~\ref{fig:indist} where agents 2 and 3
do not interact with 1 in the same round and cannot communicate with each other to coordinate the punishments.
This can be avoided if it is always the case that either 2 and 3 interact in the same round with 1 or
the first agent to interact with 1 can causally influence the other without interference from 1.
More generally, it suffices that, for every $G \in \G^*$, agent $i$, and round $m$,
there is $\rho$ such that the round-$m$ neighbours of $i$ causally influence every round-$m+\rho$ neighbour of $i$
between $m$ and $m+\rho$ without interference from $i$.
This condition is exactly met with $\rho = n$ when $\G^*$ is restricted by a 
condition similar to 1-connectivity from~\cite{Kuhn:10}:
we say that the adversary is \emph{restricted by connectivity} iff agents know the degree of their neighbours
and, for every $G \in \G^*$, agent $i$, and round $m$, the graph obtained from $G^m$ by removing the edges to $i$ is connected.
This condition is also met by overlays for gossip dissemination such as~\cite{Li:06,Li:08,Guerraoui:10}.

We now define a safe-bounded protocol $\vsigma^{\gen}$ that enforces accountability in
general pairwise exchanges, assuming that the adversary is restricted by connectivity.
Fix $G \in \G^*$ and let $\dgr_i^m$ denote the degree of $i$ in $G^m$.
At every round $m$, neighbouring agents always exchange
monitoring information, and they follow punishment individual actions 
with a probability proportional to past deviations.
Monitoring information includes reports and numbers of pending punishments.
Specifically, for each round $m' < m$ and pair of agents $(j,l)$, 
a report relative to $m'$ and $(j,l)$ specifies whether $j$ interacted with 
$l$ in round $m'$, and whether $l$ defected $j$.
For each $c \in \{1 \ldots n\}$ and agent $j$, 
agents keep the number of pending punishments to be applied to $j$ in periodic rounds $(c+ kn)_{k \geq 0}$.
Before interacting with $j$ in round $m$, $i$ determines whether $j$ should be punished.
For this, $i$ updates the number $x$ of pending punishments for the period that includes $m$.
If $m \leq n$, then $x = 0$. Otherwise, let $x'$ be the previous number of pending 
punishments resulting from an identical update prior to round $m-n$.
Given the round-$m-n$ reports, $i$ determines $\dgr_j^{m-n}$.
Then, $i$ sets $x$ to $\max(0,x'-\dgr_j^{m-n})$,
and adds $\dgr_j^{m-n}$ iff $j$ defected some neighbour in $m-n$. 
$i$ punishes $j$ with probability $\min(1,x/\dgr_j^m)$. (This is where the
knowledge of degree comes into play.)
After the interaction, $i$ emits a report indicating that the interaction
occurred in round $m$ and signalling whether $j$ defected $i$.
$i$ also updates its monitoring information basing on the information 
sent by its neighbours. For each report relative to round $m' < m$ not older than $m- n + 1$
and pair $(k,l)$, if $i$ does not have a report relative to $m'$ and $(k,l)$
and receives a new report from $j \neq l$, then $i$ stores this report.
In addition, for each period $c$ not including $m$, $i$ updates
the number of pending punishments relative to every $l\neq i$ and $c$
to the maximum between its value and the value sent by every neighbour $j \neq l$, capping it to lie in $\{0 \ldots n-1\}$.

\fullv{
We present in Alg.~\ref{alg:prop-pun} the pseudo-code for the strategy $\sigma_i^{\gen}$ of agent $i$.
We use two variables: $\pend{}{}$ and $\acc{}{}{}$. For each round $m$, $i$ keeps a number $\pend{j}{m}$ of pending 
punishments to be applied to $j$ in rounds $m,m+n,m+2n,\ldots$, initially equal to $0$ and never larger than $n-1$.
Also, $i$ keeps a report $\acc{j}{l}{m} \in \{\good,\bad,\bot\}$ per pair of nodes $(j,l)$ and round $r$ signalling whether in round $m$: (i) $j$ did not interact with $l$ ($\bot$); 
(ii) $l$ interacted and defected $j$ ($\bad$); or (iii) $l$ interacted and did not defect $j$ ($\good$). $\acc{j}{l}{m}$ is initialized to $\bot$.
Notice that these variables can be implemented by a finite state machine:
at each round $m$, agents only need to store and forward information relative to each tuple $(i,j,c)$ for $c \in \{0 \ldots n-1\}$,
corresponding to $\acc{i}{j}{m-c}$, and information relative to each pair $(i,c)$  for $c \in \{0 \ldots n-1\}$, corresponding to $\pend{i}{m-c}$.
For the sake of exposition, we opt to not represent the state in this compact form.

\begin{figure}[!t]
\begin{algorithm}[H]
\caption{$\sigma_i^{\gen}$}
\label{alg:prop-pun} 
{
\scriptsize
\begin{algorithmic}[1]
   \ForAll{$j, m$}
   	\State{$\pend{j}{m} \gets 0$}
   	\ForAll{$l$}
		\State{$\acc{j}{l}{m} \gets \bot$}
   	\EndFor
   \EndFor
   
\Statex
\UponEvent{round $m$}
	\ForAll{neighbour $j$}
		\State{$pr \gets \min(1,\pend{j}{m}/\dgr_j^m)$}
		\State{With probability $pr$, cooperate}\label{line:prop-pun:pun}\Comment{Punishment}
		\State{Otherwise, punish}\label{line:prop-pun:coop}\Comment{Cooperation}
	\EndFor
\EndEvent

\Statex

\After{round $m$}
	\ForAll{neighbour $j$}
		\If{$j$ defects $i$ in $m$}\label{line:prop-pun:update}
			\State{$\acc{i}{j}{m} \gets \bad$}
		\Else
			\State{$\acc{i}{j}{m} \gets \good$}
			\ForAll{$m' \in \{m - n +1 \ldots m-1\}$}
				\State{$\pend{j}{m'}$ $\gets$ maximum between $\pend{j}{m'}$ and values sent by each $l \neq j$}
				\State{Cap $\pend{j}{m'}$ to lie in $\{0 \ldots n - 1\}$}
				\ForAll{$l \ne i$}
					\If{$\acc{l}{j}{m'} = \bot$ and some $k \neq j$ sent $\acc{l}{j}{m'}|_{k} \neq \bot$}
						\State{$\acc{l}{j}{m'}$ $\gets$ $v\neq \bot$ deterministically selected among received values}
					\EndIf
				\EndFor
			\EndFor
		\EndIf\label{line:prop-pun:update:end}
	\EndFor
	\ForAll{agent $j \ne i$}
		\If{$m \ge n$}\label{line:prop-pun:up-pun}
			\State{$\dgr \gets \#\{l \ne j \mid \acc{l}{j}{m-n+1} \neq \bot\}$}\label{line:prop-pun:deg}\Comment{Degree $\dgr_j^{m-n+1}$}
			\State{$\pend{j}{m+1} \gets \max(0,\pend{j}{m-n+1} - \dgr)$}
    			\If{exists $l \ne j$ such that $\acc{l}{j}{m-n+1}=\bad$}
				\State{$\pend{j}{m+1} \gets \pend{j}{m+1} + \dgr$}
			\EndIf
		\EndIf
	\EndFor
\EndAfter
\end{algorithmic}
}
\end{algorithm}
\end{figure}

}

This definition has the following properties when agents follow $\vsigma^{\gen}$
at every round-$m$ history $h$:
(1) $i$ cannot influence monitoring information that determines punishments applied to $i$,
(2) we match each defection of $i$ in round $m$ to at least one punishment in future rounds,
(3) a defection in round $m$ triggers additional punishments
to be applied in rounds $m+n,m+2n\ldots$,
and (4) the delay of additional punishments is bounded by $O(n^2)$.
(1) ensures that $i$ does not gain from lying about monitoring information.
(2) ensures that, even if $i$ saves the cost $1$ of sending messages, $i$ loses at least $\beta > 1$.
(3) and (4) guarantee that this loss is discounted to the present by a lower bounded factor $\delta^{n^2}$.
This implies that, if agents are sufficiently patient (i.e., $\delta$ is sufficiently close to $1$), then $i$ prefers not to defect.

Theorem~\ref{theorem:prop-pun} shows that $\vsigma^{\gen}$
enforces accountability with connectivity and sufficiently patient agents.
\shortv{See Appendix~\ref{app:prop-pun} for the proof and a discussion about complexity.}

\begin{restatable}{theorem}{firstproppunoape}
\label{theorem:prop-pun}%
If the adversary is restricted by connectivity and agents are sufficiently patient, then 
$\vsigma^{\gen}$ enforces accountability in general pairwise exchanges.
\end{restatable}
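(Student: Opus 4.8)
The plan is to check the two clauses in the definition of ``enforces accountability.'' Clause (1) is immediate from the construction of $\vsigma^{\gen}$: if no agent ever defects, then every report $\acc{l}{j}{m}$ any agent ever holds lies in $\{\good,\bot\}$, so every counter $\pend{j}{m}$ stays $0$, the punishment probability $\min(1,\pend{j}{m}/\dgr_j^m)$ is $0$, and every agent cooperates in every interaction. So the substance is clause (2), that $\vsigma^{\gen}$ is a \oape{\G^*}. For this I would fix any belief system $\mu^*$ consistent with $\vsigma^{\gen}$ and $\G^*$ and appeal to the One-Shot-Deviation Principle: it then suffices to show, for every $G\in\G^*$, agent $i$, round $m$, round-$m$ information set $I_i\in\I_i(G)$, action $a_i^*$ in the support of $\sigma_i^{\gen}(\cdot\mid I_i)$, and arbitrary $a_i'\in\act_i(G^m)$, that $\Delta := u_i(\vsigma^{\gen}|_{I_i,a_i^*}\mid G,I_i) - u_i(\vsigma^{\gen}|_{I_i,a_i'}\mid G,I_i) \ge 0$.

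The heart of the argument is to establish the four properties of $\vsigma^{\gen}$ stated just before the theorem, using the connectivity restriction, which makes $G^{m'}$ minus agent $i$ connected in every round $m'$ (hence $i$ has degree at least $2$ in every round, and monitoring information can be routed around $i$). I would argue: (P1) a report about a pair $(l,i)$ is stored only when received from an agent other than $i$, and $\pend{i}{\cdot}$ is recomputed from reports and then updated by taking maxima over values received from agents other than $i$; since, by interval-connectivity of the graphs-minus-$i$\,\cite{Kuhn:10}, the truthful round-$m$ reports $\acc{l}{i}{m}$ flood to every agent $\ne i$ within $n-2$ rounds --- well inside the $n-1$-round forwarding window --- agent $i$ can neither forge nor suppress the data that triggers punishments of $i$; (P2) if $i$ defects at least one round-$m$ neighbour, a single ``$+\dgr$'' increment of size $\dgr_i^m=k$ is added to $\pend{i}{\cdot}$ for the period of round $m$, and a telescoping computation from the ``$\max(0,\pend{i}{\cdot}-\dgr)$'' rule shows that the total increase in the expected number of punishments of $i$, summed over all future rounds, is exactly $k$ (and $0$ if $i$ defects no one); (P3) since $\pend{i}{\cdot}\le n-1$ always and $\dgr_i^{m'}\ge 2$ always, those $k$ extra punishments are delivered in rounds $m+n,m+2n,\dots$, all within $m+O(n^2)$ rounds; (P4) conditioning on any fixed past history, $i$'s round-$m$ action affects $i$'s own utility in later rounds only through these extra punishments --- sending false monitoring information or substituting an active punishment for a cooperation (or vice versa) costs $i$ the same unit amount and leaves $i$'s own stored records unchanged, and under $\vsigma^{\gen}|_{I_i,a_i'}$ agent $i$ never defects after round $m$.

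Granting (P1)--(P4), I would finish as follows. Let $d\le k=\dgr_i^m$ be the number of neighbours $i$ defects in $a_i'$ but not in $a_i^*$. In round $m$, $i$ saves the unit sending cost to each of those $d$ neighbours and nothing else about $i$'s round-$m$ payoff changes, so that round's contribution to $\Delta$ is $-d$; in every subsequent round the contribution is the extra probability that $i$ is punished times $\pi$, which is nonnegative, and summing over all later rounds --- using the exact total $k$ from (P2), the discount bound from (P3), and $\pi\ge\beta$ --- this adds at least $\delta^{O(n^2)}k\pi$. Hence, when $d\ge 1$, $\Delta\ge -d+\delta^{O(n^2)}k\pi\ge d\,(\delta^{O(n^2)}\pi-1)\ge 0$ provided $\delta$ is close enough to $1$, which is possible because $n$ is fixed and $\pi\ge\beta>1$ --- this is exactly where ``sufficiently patient'' enters; and when $d=0$ the two actions differ only in ways that (P4) shows are payoff-neutral for $i$, so $\Delta=0$. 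By the One-Shot-Deviation Principle, $\vsigma^{\gen}$ is a \oape{\G^*}, which together with clause (1) proves the theorem.

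I expect the main obstacle to be making (P1)--(P3) rigorous: one must line up the sliding report window against the $O(n)$-round flooding time guaranteed by interval-connectivity of the graphs $G^{m'}$ minus $i$, both to argue that $i$ cannot hide a defection from the agent who performs the counter increment and that every resulting punishment lands within an $O(n^2)$-round horizon, and one must verify the telescoping identity for ``exactly $k$ extra punishments'' even when $\pend{i}{\cdot}$ already carries pending punishments from earlier deviations of $i$ (the key algebraic fact being $\max(a,p)-\max(0,p-a)=a$). By contrast, the payoff-neutrality claims of (P4) --- that lying in reports and swapping active punishment for cooperation cannot help $i$ --- and the final discounting arithmetic are routine.
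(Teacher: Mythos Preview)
Your plan is correct and mirrors the paper's own proof: invoke the One-Shot-Deviation Principle, then establish exactly the properties you call (P1)--(P4), which correspond to the paper's Facts~1--6, and finish with the same discounting arithmetic (the paper writes $\Delta\ge -\dgr_i^m+\delta^{n^2}\beta\,\dgr_i^m$, using the lower bound $\pi\ge\beta$ where you keep $\pi$). Two small points to tighten: connectivity as stated does not give $\dgr_i^{m'}\ge 2$ --- the paper only uses $\dgr_i^{m'}\ge 1$, which already yields the $n^2$ horizon since the capped counter is at most $n-1$ and drops by at least one each period --- and the telescoping identity you actually need is $\min(a,x)=x-\max(0,x-a)$, so that $\sum_c\min(\dgr_i^{m^c},x_c)=x_1$ and the difference between the defect and no-defect trajectories is exactly $k$.
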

\fullv{
\begin{proof}
We show that $\vsigma^{\gen}$ is a \oape{\G^*} in general pairwise exchanges.
Fix $G \in \G^*$, agent $i$, round-$m$, round-$m$ information set $I_i \in \I_i(G)$, $h \in I_i$,
and actions $a_i^*,a_i'$ such that $\sigma_i^{\gen}(a_i^*|I_i) > 0$.
Let $\Delta = \eu{i}{\vsigma^1}{h,G} - \eu{i}{\vsigma^2}{h,G}$,
where $\vsigma^1 = \vsigma^{\gen}|_{I_i,a_i^*}$, $\vsigma^2 = \vsigma^{\gen}|_{I_i,a_i'}$,
and $\vsigma^{\gen}|_{I_i,a_i}$ represents the protocol that differs from $\vsigma^{\gen}$
exactly in that $i$ deterministically follows $a_i$ at $I_i$.
\shortv{In the full paper,
we show that the result known
as One-Shot-Deviation principle\,\cite{Hendon:96} applies to the solution concept \oape{\G^*}.
This principle states that}\fullv{By the One-Shot-Deviation principle,} it suffices to show that $\Delta \geq 0$ to prove the result.
For every $m' \ge m$, let $\Delta^{m'}$
be the difference between the expected utility of $i$ in round $m'$,
between agents following $\vsigma^1$ and $\vsigma^2$, i.e.,
$$\Delta^{m'}  =  \sum_{r} \pra{r}{\vsigma^1}{G,h} u_i(\va^{m'}) - 
 \sum_{r} \pra{r}{\vsigma^1}{G,h} u_i(\va^{m'}),$$
where $\va^{m'}$ is the round-$m'$ action profile in $r$.
We prove six facts first, which hold for every run of $\vsigma^1$ and $\vsigma^2$ in $G$, conditioned on $h$.
We say that a fact holds after $i$ follows $\va_i'$ and $\va_i^*$
if the fact holds for all runs $r^1$ and $r^2$ of $\vsigma^1$ and $\vsigma^2$,
respectively, such that $\pra{r^1}{\vsigma^1}{G,h} > 0$ and $\pra{r^2}{\vsigma^2}{G,h} > 0$.
Fix agents $j$ and $l$ different from $i$, and rounds $m',m''$.
Let $\acc{i}{j}{m}|_{l,m'}$ and $\pend{i}{m}|_{l,m'}$
be the values of $\acc{i}{j}{m}$ and $\pend{i}{m}|_{l,m'}$, respectively,
deterministically held by $l$ at the end of round $m'$ in every run of the considered strategies:

\begin{enumerate}[(a)]
  \item Fact 1: If $i$ follows $a_i'$, then $\acc{i}{j}{m}|_{l,m'}$ is accurate: 
  (1) if $i$ and $j$ do not interact or it does not hold that $(j,m) \leadsto_i^G (l,m'+1)$, then the report says that they did not interact;
  (2) if $i$ and $j$ interact and $(j,m) \leadsto_i^G (l,m'+1)$, the the report says that $i$ defected $j$ iff $i$ defected $j$ in $a_i'$.
  \begin{proof}
  At the end of round $m$, the report of $l$ says that $i$ and $j$ interacted iff $l = j$ and $i$ and $j$ are neighbours.
  This is exactly the case when $(j,m) \leadsto_i^G (l,m+1)$. Moreover, (2) follows directly by construction of $\vsigma^{\gen}$.
  In every subsequent round $m'$, agent $l$ updates $\acc{i}{j}{m}|_{l,m'}$ to $\acc{i}{j}{m}|_{l,m'+1}$ only if $\acc{i}{j}{m}|_{l,m'} = \bot$
  and some round-$m'$ neighbour $o$ sent $\acc{i}{j}{m}|_{o,m'} \ne \bot$. By the hypothesis, $\acc{i}{j}{m}|_{o,m'}$ is accurate,
  so it follows that $\acc{i}{j}{m}|_{l,m'+1}$ is also accurate.
  \end{proof}
  
  \item Fact 2: If $m' = m+n - 1$ and $i$ follows $a_i'$, then $\pend{i}{m'+1}|_{l,m'} = y+ \max(x - \dgr_i^m,0)$,
  where $x$ is the maximum of $\pend{i}{m}|_{o,m}$ for all agents $o \ne i$,
  and $y = \dgr_i^m$ if $i$ defects a neighbour in $a_i'$ or $y = 0$ otherwise.
  \begin{proof}
  It is easy to show using induction that for every round $m'' \in \{m \ldots m'\}$,
  $\pend{i}{m}|_{l,m'} = x^{m''}$, where $x^{m''}$ is the maximum of $\pend{i}{m}|_{o,m}$
  for all agents $o \ne i$ such that $(o,m) \leadsto_i^G (l,m''+1)$. 
  This is clearly true for $m'' = m$, since only $l$ is causally influenced by $l$ between $m$ and $m+1$.
  For every other $m''$, $l$ updates $\pend{i}{m}|_{l,m''}$ to the maximum of the values
  sent by round-$m''-1$ neighbours (Lines~\algref{alg:prop-pun}{line:prop-pun:update}-\ref{line:prop-pun:update:end}),
  which by the induction hypothesis is $x^{m''}$.
  Since the adversary is restricted by connectivity,
  $l$ is causally influenced by every agent of $i$ without interference from $i$ between $m$ and $m+n$,
  so $\pend{i}{m}|_{l,m'} = x$. After updating $\pend{i}{m}|_{l,m'}$, $l$ computes the value $\dgr$,
  which by Fact~1 is $\dgr_i^m$, sets $\pend{i}{m'+1}|_{l,m'} = \min(0,x- \dgr_i^m)$,
  and adds $\dgr_i^m$ if some report indicates that $i$ defected some neighbour,
  which by Fact~1 is true iff $i$ defected some neighbour in $a_i'$. This proves Fact~2.
  \end{proof}
  
  \item Fact 3: If $m'' \neq m$ and $m' \ge m$, then the values of $\pend{i}{m''}|_{l,m'}$ 
  and $\acc{i}{j}{m''}|_{l,m'}$ when $i$ follows $a_i'$ are the same as when $i$ follows $a_i^*$.
  
  \begin{proof}
  If $m' = m$ and $m'' < m$, then those values depend only on the 
  values received by $l$ from agents different from $i$, which are determined by the history $h$.
  Continuing inductively, for every $m' > m$., $l$ updates $\pend{i}{m''}|_{l,m'}$ and $\acc{i}{j}{m}|_{l,m'}$ 
  according to the values sent by round-$m'$ neighbours $o\ne i$ of $l$, which are the same by the hypothesis,
  so the resulting values of $\pend{i}{m''}|_{l,m'}$ and $\acc{i}{j}{m}|_{l,m'}$ are the same, whether $i$ follows $a_i'$ or $a_i^*$.   
  If $m'' > m$, then $l$ does not update $\pend{i}{m''}|_{l,m'}$ and $\acc{i}{j}{m}|_{l,m'}$ when $m' < m''$.
  When $m' = m''-1$, $l$ updates $\pend{i}{m''}|_{l,m'}$ basing on the 
  values $\pend{i}{m''-n}|_{l,m'}$ and $\acc{i}{j}{m''-n}|_{l,m'}$,
  which are the same by the proof of the case $m'' < m$.
  When $m' = m''$, $l$ updates $\acc{i}{j}{m''-n}|_{l,m'}$ iff $l$ interacts with $j$ in round $m''$,
  and updates it to $\good$, whether $i$ follows $a_i'$ and $a_i^*$, since $i$ follows $\sigma_i^{\gen}$
  after $m$ and does not defect $l$.
  Finally, when $m' > m''$, the same arguments as above prove the result.  
  \end{proof}
  
  \item Fact 4: If $m' > m$, then the value $\pend{i}{m''}|_{l,m'}$ is at least as high
  when $i$ follows $a_i'$ as when $i$ follows $a_i^*$.
  \begin{proof}
  First, consider that $m'' < m$. By Fact~3, the values are the same whether $i$ follows $a_i'$ or $a_i^*$.
  Consequently, the same holds for all rounds $m'' + cn$ with $c>0$.
  Regarding $m''=m$, by Fact~2, the value $\pend{i}{m''}|_{l,m'}$ is higher by 
  $\dgr_i^m$ when $i$ follows $a_i'$ than $a_i^*$ if $i$ defects some neighbour in $a_i'$, or is the same otherwise.
  Again, the result follows directly from Fact~3 for all rounds $m+cn$ with $c>0$.
  \end{proof}
  
  \item Fact 5: If $m' > m$, then $\Delta^{m'} \ge 0$.
  \begin{proof}
  In round $m'$, both $i$ and its neighbours either cooperate or punish each other, so the
  benefits and communication costs are the same, whether $i$ follows $a_i'$ or $a_i^*$.
  By Fact~4, $i$ is never punished with higher probability when $i$ follows $a_i^*$
  than when $i$ follows $a_i'$, hence the expected costs of punishments must be at least as high
  when $i$ follows $a_i'$ as when it follows $a_i^*$. This implies that $\Delta^{m'} \ge 0$. 
  \end{proof}
  
  \item Fact 6: If $i$ defects some neighbour in $a_i'$, then $\sum_{m < m^* < m+n^2} \Delta^{m^*} \ge \beta \dgr_i^m$.
  \begin{proof}
  Let $m^c = m + cn$ for $c\ge 0$, and let $\pend{i}{m^c}|_{o,m^c-1,a_i^*}$ and $\pend{i}{m^c}|_{o,m^c-1,a_i'}$
  be the value of $\pend{i}{m^c}|_{o,m^c-1}$ when $i$ follows $a_i^*$ and $a_i'$, respectively.
  By Fact~2, if $i$ defects some neighbour in $a_i'$, we have
  $\pend{i}{m^1}|_{o,m^1-1,a_i'} - \pend{i}{m^c}|_{o,m^1-1,a_i^*} > \deg_i^m$.
  For every $c \ge 0$ with $\pend{i}{m^{c}}|_{o,m^{c}-1,a_i^*} \ge \dgr_i^{m^c}$, we have 
  $$\pend{i}{m^{c+1}}|_{o,m^{c+1}-1,a_i^*} -  \pend{i}{m^{c}}|_{o,m^c-1,a_i^*} = \dgr_i^{m^c}.$$
  Let $m^* = m^{c}$ be the first round where $\pend{i}{m^{c}}|_{o,m^{c}-1,a_i^*} \le \dgr_i^{m^c}$.
  When $i$ follows $a_i^*$, in round~$m^*$, $i$ is punished by an expected number of $\pend{i}{m^{*}}|_{o,m^*-1,a_i^*}$
  neighbours, while not being punished afterwards in rounds $m^* + cn$ for $c > 0$.
  When $i$ follows $a_i'$, $i$ is punished in rounds $m^* + cn$ for $c \ge 0$ by an 
  additional expected number of $\dgr_i^m$ neighbours.
  Moreover, since $i$ always has at least one neighbour and $i$ never defects after $m$, by Fact~2
  and the fact that $\pend{i}{m+n}|_{o,m+n-1,a_i'} \le n-1$, we have $\pend{i}{m^n}|_{o,m^n-1,a_i'} = 0$.
  This implies that the additional punishments when $i$ follows $a_i'$ must occur in rounds
  between $m+1$ and $m+n^2$. By the same arguments as in the proof of Fact~5,
  it follows that the expected utility of $i$ decreases at least by $\beta \dgr_i^m$ in these rounds when $i$ follows $a_i'$
  instead of $a_i^*$, as we intended to prove.
  \end{proof} 
\end{enumerate}

We can now show that $\Delta \ge 0$. If $i$ does not defect a neighbour in $a_i'$,
then $\Delta^{m} \ge 0$ and by Fact~5 $\Delta \ge 0$.
If $i$ defects neighbours in $a_i'$, then $\Delta^{m} \ge -\dgr_i^m$,
and by Fact~6, we have $\Delta \ge -\dgr_i^m + \delta^{n^2} \beta \dgr_i^m$.
So, if $\beta > 1$ and $\delta$ is sufficiently close to $1$, then $\Delta \ge 0$.
This concludes the proof.
\end{proof}

}

\fullv{
\subsection{Avoiding Knowledge of Degree}

We now discuss one type of interactions in general pairwise exchanges
and corresponding implementation of $\vsigma^{\gen}$, 
assuming only that agents know the identities of their neighbours.
We need at least two communication phases,
so that neighbours $i$ and $j$ may first reveal their degree in phase 1, and
then exchange values in phase 2 and punish each other accordingly.
Unfortunately, two phases is not enough because $i$ may lie about the degree.
In particular, in phase 1, $i$ may declare a higher degree than the real one to decrease the probability of
being punished by $j$. We can address this by including in the report
information about the declared degrees, and then punish agents that lie about their degrees.
This is still not sufficient though, because of the following scenario.
Suppose that $i$ only has one neighbour $j$ and only one pending punishment.
If $i$ does not lie, then $j$ punishes $i$ with probability $1$ due to
having a single pending punishment. If $i$ declares a degree of $n-1$ instead,
then the probability of being punished by $j$ is only $1/(n-1)$.
This yields an increase in the expected benefits from $0$ to $\beta(n-2)/(n-1)$.
Suppose that $i$ also defects $j$ by omitting messages in phase 2.
The expected future loss must be at most $\beta$, to avoid the problem
identified in the proof of need for eventual distinguishability. Since we only assume that $\beta > 1$, 
the loss may be lower than the gain.
The problem is that $i$ sends the degree before incurring the cost of sending messages in phase 2.
We can avoid this by using a technique of delaying gratification employed in~\cite{Li:06}.
We need three communication phases.
In phase 1, agents exchange monitoring information and the values 
ciphered with random private keys. In phase 2, they reveal
the degrees. Finally, in phase 3, they decide whether to cooperate by sending the private keys,
or to punish by sending arbitrary keys. Let $\alpha^{\kappa}$ be the cost of sending a key,
such that the cost of sending phase 1 and 2 messages plus $\alpha^{\kappa}$ is $1$.
Agent $i$ is punished for defecting $j$ in phases 1 or 2 by not receiving the value in phase 3.
In this case, $i$ avoids at most $1 + \alpha$, but loses $\beta > 1+\alpha$ (notice that,
by consistency of beliefs, $i$ always believes that it will receive the value in phase 3 prior to this phase,
even if $j$ does not send its value correctly ciphered,
because $i$ cannot distinguish ciphered values from arbitrary ones).
The maximum utility gain obtained by $i$ is when $i$ has only one neighbour $j$,
$i$ lies to $j$ by saying that its degree is $n-1$, and then defect $j$ in phase 3.
The maximum gain is $\beta(n-2)/(n-1) + \alpha^\kappa$,
whereas the future loss is $\delta \beta$.
If $\alpha^\kappa < \beta/(n-1)$, then the gain is less than $\beta$,
and the future loss outweighs the gain for a $\delta$ sufficiently close to $1$.
Therefore, we have a protocol that enforces accountability.

\subsection{Complexity}

The bit complexity of $\vsigma^{\gen}$ is $O(n^3)$:
each message carries $n^3$ reports, with one report per pair of agents and round in $1 \cdots \rho$,
carries $n^2$ accusations, and carries $n^2$ numbers of pending punishments.
The maximum delay of punishments is $O(n^2)$.
We can improve the complexity by assuming further restrictions on $\G^*$
and on the computational ability of agents.
Specifically, the factors $n^3$ and $n^2$ are a function of 
(1) the maximum delay $n$ of disseminating information relative to an agent, 
and (2) the maximum number $n$ of agents relative to which an agent has to forward information.
(1) can be improved by considering more restrictive assumptions about $\G^*$.
As discussed in the definition of adversary restricted by connectivity,
we only need that the neighbours of any given agent $i$ in round $m$ can causally influence
all the neighbours of $i$ in round $m+\rho$ for some constant $\rho$.
If the evolving graphs satisfy locality properties that ensure that the neighbours of agent $i$'s neighbours
are likely to be $i$'s neighbours in the near future, then $\rho$ can be significantly smaller than $n$.
Then, the bit complexity becomes $O(\rho n^2)$ and the maximum delay of punishments becomes $O(\rho n)$.
Small-world networks such as social networks, which have a high clustering coefficient,
are well known examples that satisfy such locality properties\,\cite{Holland:71,Watts:98}.

Interestingly, locality properties also allow us the improve on (2).
Agents only have to forward information  relative to an agent $i$ and round $m$
in the $\rho$ rounds following $m$. This is necessary to ensure that critical information
reaches the agents capable of carrying a punishment. If during every period of $\rho$
rounds each agent were only causally influenced by a limited number $c$ of agents,
then agents only needed to forward information relative to $c$ agents,
and the bit complexity and delay of punishments would be $O(\rho c^2)$ and $O(\rho c)$, respectively.
Unfortunately, such reduction on the amount of information  each agent forwards introduces a congestion problem.
Specifically, we cannot let agents sending messages of varying size, since that would incentivize
them to always forward the minimum required information to save bandwidth.
Therefore, they must forward messages of fixed size, which implies that agents
may have to discard information if they cannot fit all received information in a single message.
This gives the opportunity for an agent $i$ to deliberately generate false reports (or accusations)
that flood the network, causing other agents to discard accusations against $i$.
We can address this issue by appending to each report and accusation a signature of the issuer.
A report of an interaction between $i$ and $j$ has issuer $i$, so a valid report of this interaction
must contain a signature of $i$. The same applies to accusations and numbers of pending punishments.
The only disadvantage of this approach is that it requires agents to be computationally bounded.

}

\section{Discussion}
\label{sec:disc}

Our results provide technical insight on the type of dynamic networks
that can support pairwise exchanges in equilibria strategies. The need
for timely punishments means that accountability cannot be enforced in
some dynamic networks such as certain overlays for file-sharing.
If connectivity is ensured, accountability may be enforced, even for general
exchanges. Although for this scenario we have assumed knowledge of
degree prior to interactions, this can be relaxed if agents can
exchange multiple messages per round\shortv{ (appendix~\ref{app:prop-pun}
elaborates on this idea)}. Finally, if exchanges are valuable, timely
punishments are enough to enforce accountability, which opens the door
to protocols that enforce accountability in a wide variety of dynamic networks.
There are multiple open questions to be addressed in future work.
It would be interesting to prove a stronger condition
that would close the gap between an adversary restricted by distinguishability
and one restricted by connectivity. Another open issue is collusion. 
We believe that both the necessary and sufficient conditions presented in this paper could
be strengthened by generalizing the notion of causal influence
without interference from individual agents to the absence of interference
from members of a coalition. Given these conditions, the \oape{\G^*}
strategies are resilient to collusion. 


\bibliography{disc16}
\bibliographystyle{plain}

\appendix

\shortv{
\section{Additional Definitions}
We define some notation used in the proofs in appendix, and we define
the notion of consistent beliefs.

\subsection{Additional Notation}
Given a round-$m$ history $h$ and $G \in \G$, a protocol $\vsigma$ defines a probability
distribution $\pra{r}{\vsigma}{G,h}$ over every run $r$ compatible
with $h$ and $G$ (i.e., $r(m) = h$ and $G$ is the evolving graph in
$h$). Let $\pra{h}{\vsigma}{G}$ be the probability of history $h$
being realized conditioned on $G$ and agents following $\vsigma$ from
the beginning.  We say that $r$ is a run of $\vsigma$ after $h$ in $G$ if
$\pra{r(m')}{\vsigma}{G,h} > 0$ for every round $m' > m$.
As an abuse of notation, we consider that $\pra{r}{\vsigma}{G,h} > 0$.
We say that $r$ is a run of $\vsigma$ in $G$ if $h$ is the initial history.

\subsection{Consistent Beliefs}
\label{app:consistent}

In the full paper\,\cite{Vilacatr:16a}, we prove two properties of consistent beliefs used in the analysis.
One is that, when an agent $i$ observes an information set consistent with only $i$ deviating,
$i$ believes that only $i$ deviated. Another is that when $i$ expects to be punished
$x$ times after observing round-$m$ information set $I_i$, then there is a round-$m+1$
information set $I_i'$ at which $i$ expects to be punished at least $x-k$ number of times,
where $k$ is the number of round-$m$ neighbours of $i$.
}

\shortv{
\section{Proofs of Necessary Conditions}
\label{app:nec}
We prove the need for timely punishments and eventual distinguishability.

\subsection{Need for Timely Punishments}
\label{app:citi}

Recall that an adversary is said to be restricted by timely punishments
iff there exists a constant $\rho > 0$ such that, for all $G \in \G^*$, agent $i$, 
and $i$-edge $(j,m)$, there is a PO $(l,m')$ of $i$ for $(j,m)$ in $G$ with
$m' < m + \rho$. We prove Theorem~\ref{theorem:citi}.

\theoremciti*

}

\shortv{
\subsection{Need for Eventual Distinguishability}
\label{app:dist}

\theoremdist*

}

\shortv{
\section{\oape{\G^*} for Valuable Pairwise Exchanges with Timely Punishments}
\label{app:penances}
We prove that the protocol $\vsigma^{\val}$ from Section~\ref{sec:valuable} 
enforces accountability in valuable pairwise exchanges,
and we discuss different implementations of valuable pairwise exchanges.

\subsection{Analysis}

\theorempenance*

}

\shortv{
\section{\oape{\G^*} for General Pairwise Exchanges with Connectivity}
\label{app:prop-pun}
We provide the pseudo-code for the protocol $\vsigma^{\gen}$ from Section~\ref{sec:general} and a proof that it enforces
accountability in general pairwise exchanges. We conclude with a discussion
on how to avoid knowledge of the degree.

\subsection{Pseudo-code}

\subsection{Analysis}

We now prove Theorem~\ref{theorem:prop-pun}.

\firstproppunoape*

}

\shortv{

\subsection{Avoiding Prior Knowledge of Degree}

}

\shortv{

\subsection{Complexity}

}

\shortv{
\section{Evasive Strategies}
\label{app:evasive}

}

\end{document}